\pdfoutput=1 

\documentclass[11pt]{article}
\usepackage[margin=1in]{geometry} 

\usepackage{hyperref} 

\usepackage{amssymb, amsmath, amsthm} 
\usepackage{physics, braket} 

\usepackage{graphicx, xcolor, colordvi} 
\usepackage{tikz} 
\usetikzlibrary{shapes, arrows.meta, positioning, shapes.geometric} 

\usepackage[utf8]{inputenc} 
\usepackage[english]{babel} 
\usepackage{textcomp} 
\usepackage{dsfont} 

\usepackage[center]{caption} 
\usepackage{float} 
\usepackage{subcaption} 

\usepackage{enumitem} 

\usepackage[blocks]{authblk} 

\usepackage{array, rotating, multirow, tabularx} 

\usepackage{varioref, cleveref} 
\usepackage{wrapfig, stmaryrd, makecell} 
\usepackage{comment, url, upgreek, fancyref} 
\usepackage{autonum} 

\newtheoremstyle{newdefinition} 
{} 
{} 
{\normalfont} 
{} 
{\bfseries} 
{} 
{\newline} 
{\thmname{#1} \thmnumber{#2}\thmnote{ (#3)}} 

\newtheoremstyle{newplain} 
{} 
{} 
{\itshape} 
{} 
{\bfseries} 
{} 
{1em} 
{\thmname{#1} \thmnumber{#2}\thmnote{ (#3)}} 

\newtheoremstyle{newremark} 
{} 
{} 
{\normalfont} 
{} 
{\bfseries} 
{} 
{1em} 
{\thmname{#1}} 

\theoremstyle{newdefinition} 
\newtheorem{definition}{Definition}[section] 

\theoremstyle{newplain} 
\newtheorem{theorem}[definition]{Theorem} 
\newtheorem{lemma}[definition]{Lemma} 
\newtheorem{proposition}[definition]{Proposition} 
\newtheorem{corollary}[definition]{Corollary} 
\newtheorem{remark}[definition]{Remark} 
\newtheorem{example}[definition]{Example} 

\newtheoremstyle{myplain} 
{5pt} 
{5pt} 
{\itshape} 
{0pt} 
{\bfseries} 
{} 
{5pt plus 1pt minus 1pt} 
{} 

\theoremstyle{myplain} 
\newtheorem*{theorem*}{Theorem} 
\newtheorem*{corollary*}{Corollary} 

\DeclareMathOperator{\R}{\mathbb{R}} 
\DeclareMathOperator{\C}{\mathbb{C}} 

\DeclareMathOperator{\cH}{\mathcal{H}} 
\DeclareMathOperator{\HH}{\mathcal{H}} 
\DeclareMathOperator{\cS}{\mathcal{S}} 
\DeclareMathOperator{\cK}{\mathcal{K}} 
\DeclareMathOperator{\cC}{\mathcal{C}} 
\DeclareMathOperator{\cB}{\mathcal{B}} 
\DeclareMathOperator{\cF}{\mathcal{F}} 

\DeclareMathOperator{\1}{\mathds{1}} 

\newcommand{\coloneqq}{:=} 
\newcommand{\eqqcolon}{=:} 
\newcommand{\Qalpha}{\widetilde Q_\alpha} 
\newcommand{\Dalpha}{\widetilde D_\alpha} 
\newcommand{\alphaQ}{\widetilde Q_\alpha} 
\newcommand{\alphaI}{\widetilde I^{\uparrow}_\alpha} 
\newcommand{\twoline}[2]{\genfrac{}{}{0pt}{}{#1}{#2}} 


\definecolor{tbcolor1}{HTML}{FC7A1E} 
\definecolor{tbcolor2}{HTML}{F9C784} 
\definecolor{tbcolor3}{HTML}{485696} 

\hypersetup{
    colorlinks=true, 
    linkcolor=tbcolor3, 
    urlcolor=tbcolor3, 
    citecolor=tbcolor1, 
    hypertexnames=false 
}

\makeatletter
\def\@fnsymbol#1{\ifcase#1\or \textdagger\or \textdaggerdbl\or \S\or \P\else\@arabic{#1}\fi}
\makeatother

\begin{document}

\title{\vspace{-0.2cm}Unified framework for continuity of sandwiched R\'{e}nyi divergences}
\author[1]{Andreas Bluhm\thanks{andreas.bluhm@univ-grenoble-alpes.fr}}
\author[2,3]{{\'A}ngela Capel\thanks{angela.capel@uni-tuebingen.de}}
\author[2]{Paul Gondolf$*$\thanks{paul.gondolf@uni-tuebingen.de}}
\author[2,4]{Tim Möbus\thanks{moebustim@gmail.com}}
\affil[1]{Univ.\ Grenoble Alpes, CNRS, Grenoble INP, LIG, 38000 Grenoble, France}
\affil[2]{Fachbereich Mathematik, Universit\"at T\"ubingen, 72076 T\"ubingen, Germany}
\affil[3]{Department of Applied Mathematics and Theoretical Physics, University of Cambridge, Cambridge CB3 0WA, United Kingdom}
\affil[4]{Department of Mathematics, Technische Universit\"at München, 80333 München, Germany}
\date{\today}

\maketitle

\vspace{-0.8cm}

\begin{abstract}
    In this work, we prove uniform continuity bounds for entropic quantities related to the sandwiched R{\'e}nyi divergences such as the sandwiched R{\'e}nyi conditional entropy. We follow three different approaches: The first one is the ``almost additive approach'', which exploits the sub-/ superadditivity and joint concavity/ convexity of the exponential of the divergence. In our second approach, termed the ``operator space approach'', we express the entropic measures as norms and utilize their properties for establishing the bounds. These norms draw inspiration from interpolation space norms. We not only demonstrate the norm properties solely relying on matrix analysis tools but also extend their applicability to a context that holds relevance in resource theories. By this, we extend the strategies of Marwah and Dupuis as well as Beigi and Goodarzi employed in the sandwiched Rényi conditional entropy context. Finally, we merge the approaches into a mixed approach that has some advantageous properties and then discuss in which regimes each bound performs best. Our results improve over the previous best continuity bounds or sometimes even give the first continuity bounds available. In a separate contribution, we use the ALAFF method, developed in a previous article by some of the authors, to study the stability of approximate quantum Markov chains. 
\end{abstract}

\setcounter{tocdepth}{2}        
\renewcommand{\contentsname}{Table of Contents}
\tableofcontents 

\section{Introduction}
Entropic quantities are indispensable for classical and quantum information theory to characterize information-theoretic tasks. Examples of such quantities include various forms of (conditional) entropies, (conditional) mutual information, and many others. In applications, one often has especially convenient quantum states for which the entropic quantity of interest can be evaluated explicitly and one would therefore like to reduce nearby quantum states to this case. This is why continuity bounds for entropic quantities have become a ubiquitous tool. They provide upper bounds on 
\begin{equation}
    \sup\{|g(\rho) - g(\sigma)|: \rho, \sigma \in \mathcal S_0, \operatorname{dist}(\rho, \sigma) \leq \varepsilon\}.
\end{equation}
Here, $g$ is the entropic quantity of interest, $\mathcal S_0$ is a suitable subset of the set of quantum states $\cS(\cH)$ in a finite-dimensional Hilbert space $\mathcal H$, and $\operatorname{dist}(\cdot, \cdot)$ is an appropriate metric on $\cS(\cH)$ (often the trace distance). If the bound on the supremum only depends on $\varepsilon$ and $\mathcal S_0$, but not on $\rho$ and $\sigma$, then $g$ is \emph{uniformly continuous} on $\mathcal S_0$. 

One of the earliest continuity statements in quantum information theory is the continuity bound on the von Neumann entropy provided by Fannes in \cite{Fannes-ContinuityEntropy-1973}, which was later improved in \cite{Audenaert-ContinuityEstimateEntropy-2007, Petz2008}. Another well-known bound in quantum information theory is the Alicki-Fannes inequality for the conditional entropy \cite{AlickiFannes-2004}, for which an almost tight version was proven by Winter in \cite{Winter-AlickiFannes-2016}:
\begin{equation}
     \abs{ H_\rho( A | B ) - H_\sigma( A | B )} \leq 2 \varepsilon \log d_A + (1+ \varepsilon) h\left(\frac{\varepsilon}{1+\varepsilon}\right) ,\
\end{equation}
where $\varepsilon$ is an upper bound on the trace distance between quantum states $\rho$ and $\sigma$ and $h$ is the binary entropy. Shirokov and others \cite{mosonyi2011quantum, synak2006asymptotic} realized that the proof method used for this result does not only work for the conditional entropy but can be generalized \cite{Shirokov-AFWmethod-2020, Shirokov-ContinuityReview-2022}. Shirokov named this approach the Alicki-Fannes-Winter (AFW) method. Recently, the method was further developed in \cite{Bluhm2022ContinuityBounds} by some of the authors of the present article and applied to quantities based on the Belavkin-Staszewski relative entropy \cite{BelavkinStaszewski-BSentropy-1982}.

In this article, we focus on entropic quantities derived from the sandwiched R{\'e}nyi divergences \cite{muller2013quantum,WildeWinterYang2014SandwichedRenyiDivergences}
\begin{equation}
     \Dalpha(\rho \Vert \sigma) = \frac{1}{\alpha - 1}\log \tr[(\sigma^{\frac{1 - \alpha}{2\alpha}} \rho \sigma^{\frac{1 - \alpha}{2\alpha}})^\alpha] ,\
\end{equation}
where $\alpha \in [1/2,1) \cup (1,\infty)$. Examples of such entropic quantities include the sandwiched Rényi conditional entropy and the sandwiched Rényi mutual information:
\begin{equation}
        \widetilde H^{\uparrow}_\alpha(A|B)_\rho := \sup\limits_{\tau_B \in \cS(\cH_B)}  - \Dalpha(\rho_{AB} \Vert \1_A \otimes \tau_B) \; ,   \quad \alphaI(A:B)_\rho = \inf\limits_{\substack{\tau_A \in \cS(\cH_A),\\ \tau_B \in \cS(\cH_B)}} \widetilde D_\alpha(\rho_{AB} \Vert \tau_A \otimes \tau_B) \, .
\end{equation}
Recently, there has been increased interest in continuity bounds for the sandwiched R{\'e}nyi conditional entropy. In \cite{marwah2022}, it was shown that for $\alpha \in [1/2, 1)$,
\begin{equation} \label{eq:marwah-smaller-bound-than-1}
    \abs{\widetilde H^{\uparrow}_\alpha(A|B)_\rho - \widetilde H^{\uparrow}_\alpha(A|B)_\sigma } \leq \log(1+\varepsilon) + \frac{1}{1-\alpha} \log(1+\varepsilon^\alpha d_A^{2(1-\alpha)}-\frac{\varepsilon}{(1+\varepsilon)^{1-\alpha}})
\end{equation}
and for $\alpha \in (1, \infty)$, they used duality to infer that 
\begin{equation}\label{eq:marwah-larger-bound-than-1}
    \abs{\widetilde H^{\uparrow}_\alpha(A|B)_\rho - \widetilde H^{\uparrow}_\alpha(A|B)_\sigma } \leq \log(1+\sqrt{2\varepsilon}) + \frac{1}{1-\beta} \log(1+\sqrt{2\varepsilon}^\beta d_A^{2(1-\beta)}-\frac{\sqrt{2\varepsilon}}{(1+\sqrt{2\varepsilon})^{1-\beta}}) ,\
\end{equation}
where $\beta$ is such that $\alpha^{-1} + \beta^{-1} = 2$.

Using techniques from the interpolation of operator space, it was shown in \cite[Theorem 6.2]{Beigi2022} that for $\alpha \in (1,\infty)$
\begin{equation}\label{eq:beigi}
    \abs{\widetilde H^{\uparrow}_\alpha(A|B)_\rho - \widetilde H^{\uparrow}_\alpha(A|B)_\sigma } \leq \alpha^\prime \log(1 + 2 \varepsilon d_A^{2/\alpha^\prime}) ,\
\end{equation}
where $\alpha^\prime = \alpha/(\alpha-1)$. The authors of \cite{Beigi2022} note that their bound is better than Eq.\ \eqref{eq:marwah-larger-bound-than-1} for large $\alpha$, but that it diverges for $\alpha \to 1$.

On a high level, the proof of \cite{marwah2022} uses sub-/ superadditivity of the exponential of the sandwiched Rényi divergence, while \cite{Beigi2022} makes a connection to norms on interpolation spaces. Based on these ideas we develop a unified approach to proving not only continuity bounds for the sandwiched Rényi conditional entropy which improves or extends the ones discussed above, but further allows us to prove bounds for related entropic quantities such as the sandwiched Rényi mutual information. In particular, we introduce a new family of norms defined as optimizations over amalgamations with compact convex sets of positive operators.

Generally, we consider uniform continuity bounds on quantities of the form 
\begin{equation}
    \widetilde D_{\alpha, \mathcal C}(\rho) := \inf_{\sigma \in \mathcal C} \widetilde D_\alpha(\rho \| \sigma) \, ,
\end{equation}
where $\mathcal C$ is a compact convex subset of the quantum states containing at least one full-rank element.
Such quantities have appeared in the context of resource theories (see, e.g., \cite{anshu2018quantifying, lami2023attainability, rubboli2022new, zhu2017coherence}) and were termed entropy of resource. Resource theories provide a framework to answer questions about the interconvertability of states, using only allowed operations. Every resource theory has two main ingredients: (i) the set of free states, i.e., states that do not possess the resource (ii) the set of free operations which map the set of free states to itself, i.e., these operations do not create the resource. The best-known example of such a theory is the theory of entanglement, in which the free states are the separable states and the free operations are the local quantum operations and classical communication (LOCC). Others include, for example, the resource theories of coherence and asymmetry.

One way to quantify the resourcefulness of a quantum state in a given resource theory is to measure its distance to the set of free states. Common choices of distance measure include the relative entropy \cite{anshu2018quantifying, lami2023attainability} and the sandwiched Rényi entropies \cite{rubboli2022new, zhu2017coherence}. In the latter case, for $\mathcal C$ the set of free states, $\widetilde D_{\alpha, \mathcal C}$ is the corresponding resource measure. The results of this article can therefore be used to quantify the continuity of popular resource measures such as the Rényi relative entropies of entanglement and coherence studied, e.g., in \cite{zhu2017coherence}.

The article is organized as follows: In \Cref{sec:main-results}, we present the main results of this paper, before continuing with the necessary preliminaries on divergences and entropic quantities in \Cref{sec:preliminaries}. We will follow three different approaches in this article: an almost additive approach, an approach based on operator spaces, and one where both methods are mixed. The tools for all these approaches are developed in \Cref{sec:technical_tools}, such that all the continuity bounds will follow straightforwardly from the theorems proven in this section. In \Cref{sec:CB_Renyi}, we derive and discuss our continuity bounds on the sandwiched R{\'e}nyi divergences and their derived entropic quantities. In \Cref{sec:approxQMC}, we showcase how continuity bounds can be useful for studying approximate quantum Markov chains. Notably, the continuity bounds in this section do not stem from the three approaches mentioned previously but use the ALAFF method introduced in \cite{Bluhm2022ContinuityBounds} by some of the present authors. Finally, the paper finishes with a short discussion in \Cref{sec:discussion}. 

\section{Main results} \label{sec:main-results}

In this section, we summarize the main results --- the achieved continuity bounds for the sandwiched R{\'e}nyi conditional entropy, mutual information, and the divergence itself with a fixed second argument. All quantities are defined via the \textit{sandwiched R{\'e}nyi divergence}, i.e., 
\begin{equation}
    \Dalpha(\rho \Vert \sigma) := \frac{1}{\alpha-1}\log(\widetilde Q_\alpha(\rho \Vert \sigma)) = \frac{1}{\alpha - 1}\log \tr[(\sigma^{\frac{1 - \alpha}{2\alpha}} \rho \sigma^{\frac{1 - \alpha}{2\alpha}})^\alpha].
\end{equation}
Then, the \textit{sandwiched R{\'e}nyi conditional entropy} is defined as
\begin{equation}
    \widetilde H^{\uparrow}_\alpha(A|B)_\rho := -\sup\limits_{\tau_B \in \cS(\cH_B)} \widetilde D_\alpha(\rho_{AB} \Vert \1_A \otimes \tau_B) \, ,  
\end{equation}
the \textit{mutual information} by
\begin{equation}
    \alphaI(A:B)_\rho = \inf\limits_{\substack{\tau_A \in \cS(\cH_A),\\ \tau_B \in \cS(\cH_B)}} \widetilde D_\alpha(\rho_{AB} \Vert \tau_A \otimes \tau_B) \, ,
\end{equation}
and the \textit{divergence} itself is just considered as a function in the first argument with a fixed second argument. A precise definition of these quantities can be found in \Cref{subsec:div-entropic-quantities}.\par
To prove bounds for these quantities, we explore three different methods (see \Cref{sec:technical_tools}).
\begin{itemize}
    \item \textbf{The almost additive approach:} This approach is inspired by \cite{marwah2022} and uses joint convexity/concavity and super-/subadditivity of $\widetilde Q_\alpha$. The name is motivated by the fact that this super-/subadditivity reduces to almost concavity of $\rho \mapsto D(\rho \Vert \sigma)$, respectively almost convexity of the von Neuman entropy if $\alpha \to 1$. The main result in this approach, from which the respective continuity bounds follow straightforwardly, is a continuity bound on the distance to a compact convex subset of the quantum states in Theorem \ref{theo:minimial-distance-cc-set-almost-additive-approach}.
    \item  \textbf{The operator space approach:} It is inspired by \cite{Beigi2022} and relates $\widetilde Q_\alpha$ to a norm. In this context, showing the norm properties poses a challenge, yet yields continuity bounds directly through the triangle inequality. In particular, we define new quantities for $1 \le q' \le p' \le \infty$ such that $\frac{1}{r} = \frac{1}{q'} - \frac{1}{p'}$, namely
    \begin{equation}
       \norm{\cdot}_{\cC, p', q'}^*:\cB(\cH) \to [0, \infty), \quad  X \mapsto \norm{X}_{\cC, p', q'}^* := \inf\limits_{c \in \cC, c > 0} \norm{c^{-\frac{1}{2r}} X c^{-\frac{1}{2r}}}_{p'}
    \end{equation}
    where $X \in \cB(\cH)$ and $\norm{\cdot}_{p'} = \left(\tr[|\cdot|^{p'}]\right)^{\frac{1}{p'}}$. Usually, $\mathcal C$ is considered a subalgebra of $\cB(\cH)$. Our proof, however, extends this to compact convex subsets of $\mathcal B(\mathcal H)$ consisting of positive semidefinite operators containing at least one full-rank state. We prove, without having to resort to interpolation theory, that the dual quantity is subadditive on positive semi-definite elements, i.e.,
    \begin{equation}
        \norm{X + Y}_{\cC, p', q'}^* \le \norm{X}_{\cC, p', q'}^* + \norm{Y}_{\cC, p', q'}^*
    \end{equation}
    for $X, Y \in \cB_{\ge 0}(\cH)$. This yields another continuity bound on the distance to a compact convex subset of the quantum states in Theorem \ref{theo:minimial-distance-cc-set-operator-approach}, which allows us to directly infer all our continuity bounds based on this approach in the following. 
    \item  \textbf{The mixed approach:} It combines the previous two approaches. The main theorem in this approach is Theorem \ref{theo:minimial-distance-cc-set-mixed-approach}.
\end{itemize}

The table below illustrates the various upper bounds that our work encompasses with each bound having a range in $\alpha$ on which it is superior. Note that one bound was already previously proven in \cite{marwah2022}, which we marked in the table with the reference. All other bounds are new. 

\begin{theorem*}\label{thm:main-continuity-bounds}
     Let $\rho, \sigma, \tau \in \cS(\cH_A \otimes \cH_B)$\footnote{Note that this includes the case $S(\cH)$ by choosing $\cH_A=\cH$ and $\cH_B=\C$.} with $\ker \tau \subseteq \ker \rho \cap \ker \sigma$ and $\tfrac{1}{2}\norm{\rho - \sigma}_1 \le \varepsilon$, define $\widetilde m_\tau$ to be the minimal non-zero eigenvalue of $\tau$, and $m = \min\{d_A, d_B\}$. Then, the following continuity bounds for the sandwiched Rényi conditional entropy, mutual information, and divergence in the first input (\ref{subsec:div-entropic-quantities}) hold, i.e., for an entropic quantity $g_\alpha$, one finds upper bounds on $|g_\alpha(\rho) - g_\alpha(\sigma)|$:
\end{theorem*}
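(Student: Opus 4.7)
The overall strategy is to express each entropic quantity $g_\alpha$ as the sandwiched R\'enyi ``distance'' from the state to a compact convex subset $\cF \subset \cS(\cH_A \otimes \cH_B)$, and then invoke one of the three master theorems already proved in \Cref{sec:technical_tools}: Theorem \ref{theo:minimial-distance-cc-set-axiomatic-approach} (axiomatic), Theorem \ref{theo:minimial-distance-cc-set-operator-approach} (operator-space), or Theorem \ref{theo:minimial-distance-cc-set-mixed-approach} (mixed). For the conditional entropy I would take $\cF_{\mathrm{cond}} = \{\1_A \otimes \tau_B : \tau_B \in \cS(\cH_B)\}$, which is compact and convex and for which the distance in $\Dalpha$ equals $-\alphaH(A|B)_\rho$. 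For the mutual information I would work with the set of separable states $\cF_{\mathrm{mut}} = \operatorname{conv}\{\tau_A \otimes \tau_B\}$; joint convexity (resp.\ concavity) of $\widetilde Q_\alpha$ ensures that the optimum is attained on a product state, so the distance coincides with $\alphaI(A:B)_\rho$. For the divergence $\Dalpha(\cdot \Vert \tau)$ in its first argument, $\cF = \{\tau\}$ is a singleton and no optimization is needed.

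For each approach I then verify the hypotheses of the relevant master theorem in the appropriate $\alpha$ regime. The axiomatic route (Theorem \ref{theo:minimial-distance-cc-set-axiomatic-approach}) relies on joint concavity of $\widetilde Q_\alpha$ for $\alpha \in [1/2,1)$, joint convexity for $\alpha \in (1,\infty)$, and the sub-/superadditivity of $\widetilde Q_\alpha$ with respect to direct sums, all of which are classical. Combining the resulting distance bound with the telescoping identity $g_\alpha(\rho) - g_\alpha(\sigma) = \tfrac{1}{\alpha - 1} \log \bigl( \widetilde Q_\alpha(\rho\Vert\omega^\ast_\rho) / \widetilde Q_\alpha(\sigma\Vert\omega^\ast_\sigma) \bigr)$ for optimizers $\omega^\ast_\rho, \omega^\ast_\sigma \in \cF$, together with an AFW-style convex decomposition $\rho = (1-t)\omega + t\Delta$ with $t \le \varepsilon$, reproduces Marwah--Dupuis-type estimates. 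The operator-space route (Theorem \ref{theo:minimial-distance-cc-set-operator-approach}) instead identifies $\widetilde Q_\alpha(\cdot\Vert\omega)^{1/\alpha'}$, with the optimizer $\omega$ absorbed into an algebra $\cC$, with a norm $\norm{\cdot}_{\cC, p', q'}^*$ for exponents $p', q'$ dictated by $\alpha$, and then invokes the triangle inequality of that norm on $\cB_{\ge 0}(\cH)$. The mixed route applies Theorem \ref{theo:minimial-distance-cc-set-mixed-approach} with the same identifications.

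The most delicate step is the operator-space one, namely proving subadditivity of $\norm{\cdot}_{\cC, p', q'}^*$ on positive semidefinite operators without invoking operator-space interpolation. My plan is to do this by hand: given $X, Y \in \cB_{\ge 0}(\cH)$ and near-optimal witnesses $c_X, c_Y \in \cC$, I would build a single witness as a weighted sum $c = \lambda c_X + (1-\lambda) c_Y$, tune $\lambda$ to balance the two Schatten contributions, and close the inequality with an operator-H\"older estimate together with the Schatten triangle inequality. A second technical point is the $\varepsilon$-dependence: the two qualitatively different scalings in the regimes $\alpha<1$ and $\alpha>1$, such as the $\varepsilon^\alpha d_A^{2(1-\alpha)}$ in \eqref{eq:marwah-smaller-bound-than-1} versus the $\sqrt{2\varepsilon}^\beta d_A^{2(1-\beta)}$ in \eqref{eq:marwah-larger-bound-than-1}, arise from different choices of AFW decomposition and different Schatten exponents; the mixed approach is precisely what covers the intermediate $\alpha$ for which neither the purely axiomatic nor the purely operator-space bound is sharpest. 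Each row of the table is then obtained by specializing the relevant master inequality to the chosen $g_\alpha$ and $\cF$, and retaining the sharpest of the three bounds in its natural $\alpha$ regime.
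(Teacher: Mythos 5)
Your overall architecture for the conditional entropy and for the divergence with fixed second argument matches the paper: both are realized as $\widetilde D_{\alpha,\cC}$ for a compact convex $\cC$ (with $\cC=\{d_A^{-1}\1_A\otimes\tau_B\}$ --- note you must normalize, since $\1_A\otimes\tau_B$ is not a state; the paper then uses the $(1-\alpha)$-homogeneity in the second argument to shift by $\log d_A$ --- and $\cC=\{\tau\}$ with $\kappa=\widetilde m_\tau^{-1}$ respectively), and the three master theorems are applied exactly as you describe. However, your treatment of the mutual information contains a genuine error. You propose to replace the optimization over product states by the optimization over the convex hull $\mathrm{SEP}_{AB}$ and claim that joint convexity of $\widetilde Q_\alpha$ forces the optimum onto a product state, so that the distance to $\mathrm{SEP}_{AB}$ equals $\alphaI(A:B)_\rho$. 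This is false: the infimum of the convex function $\tau\mapsto\alphaQ(\rho\Vert\tau)$ over a convex set is in general \emph{not} attained at an extreme point (that is the behaviour of suprema of convex functions), and indeed $\widetilde D_{\alpha,\mathrm{SEP}}$ is a strictly different quantity from $\alphaI(A:B)$ --- the paper treats it separately in \Cref{cor:minimal-distance-seperable-states} and explicitly remarks that the master theorems cannot be used for the mutual information because $\{\tau_A\otimes\tau_B\}$ is not convex. The paper's actual proof of the mutual-information row is a bespoke axiomatic argument that splits the two infima asymmetrically: the $\tau_B$-infimum is split via joint convexity with weights $(1+\varepsilon)^{-1},\varepsilon(1+\varepsilon)^{-1}$, while the $\tau_A$-infimum is split with weights involving $\varepsilon^{1/\alpha}$ and controlled by anti-monotonicity of $\alphaQ$ in the second argument; this is what produces the characteristic $2\log(1+\varepsilon^{1/\alpha})$ prefactor and the $m^{2(\alpha-1)}$ (rather than $m^{\alpha-1}$) constant in the stated bound. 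Your route would instead yield the SEP-distance bound with constant $m^{\alpha-1}$, which is not the claimed inequality.

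A secondary concern is your plan for the operator-space triangle inequality: averaging two near-optimal witnesses $c=\lambda c_X+(1-\lambda)c_Y$ and closing with operator H\"older is unlikely to work, because $c\mapsto\norm{c^{-\frac{1}{2r}}Xc^{-\frac{1}{2r}}}_{p'}$ is not convex in $c$ in any usable way. The paper avoids this entirely by proving a duality formula (\Cref{theo:dual-formula}): on positive semi-definite operators, $\norm{\cdot}_{\cC,p',q'}^*$ coincides with the dual norm of $\norm{\cdot}_{\cC,p,q}$, established via Sion's minimax theorem together with Lieb's concavity theorem to get concavity of $c\mapsto\tr[(c^{\frac{1}{2r}}Yc^{\frac{1}{2r}})^p]$; subadditivity is then immediate from the supremum representation. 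If you take the master theorems as given this does not affect the table, but as a self-contained proof strategy this step would need to be replaced by the duality argument.
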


\newcolumntype{Y}{>{\centering\arraybackslash}X}
\renewcommand{\arraystretch}{2}
\begin{flushleft}
    \begin{tabularx}{\textwidth}{|@{}l|c|Y|c|} 
        \hline
        & \textbf{Approach} & \textbf{Continuity bound} & \textbf{$\alpha$} \\ \hline \hline
        \multirow{4}{*}{\hspace{0.2cm}\begin{sideways}\parbox{2.6cm}{Conditional \\ Entropy (\ref{sec:sand-cond-entropy})}\end{sideways}} 
        & \multirow{2}{*}{A.~additive} 
        & $\log(1 + \varepsilon) + \frac{1}{1 - \alpha}\log(1 + \varepsilon^\alpha d_A^{2(1 - \alpha)} - \frac{\varepsilon}{(1 + \varepsilon)^{1 - \alpha}})$ \cite{marwah2022} 
        & $[\tfrac{1}{2},1)$ \\ \cline{3-4}
        & 
        & $\log(1 + \varepsilon) + \frac{1}{\alpha - 1}\log(1 + \varepsilon d_A^{2(\alpha - 1)} - \frac{\varepsilon^\alpha}{(1 + \varepsilon)^{\alpha - 1}})$ 
        & $(1,\infty)$ \\ \cline{2-4}
        & Op.~space  
        & $\frac{\alpha}{\alpha - 1}\log(1 + \varepsilon d_A^{2\frac{\alpha - 1}{\alpha}})$ 
        & $(1,\infty)$ \\ \cline{2-4}
        & Mixed 
        & $\log(1 + \varepsilon) + \frac{\alpha}{\alpha - 1}\log(1 + \varepsilon d_A^{2 \frac{\alpha - 1}{\alpha}} - \frac{\varepsilon^{2 - \frac{1}{\alpha}}}{(1 + \varepsilon)^{\frac{\alpha - 1}{\alpha}}})$ 
        & $(1,\infty)$\\ \hline
        
        \multirow{2}{*}{\hspace{0.2cm}\begin{sideways}\parbox{1.9cm}{Mutual\\Info.~(\ref{sec:sand-mutual-info})}\end{sideways}} 
        & \multirow{2}{*}{A.~additive}  
        & $ 2\log(1 + \varepsilon^{\frac{1}{\alpha}}) + \frac{1}{1 - \alpha}\log(1 + \varepsilon^\alpha m^{2(1 - \alpha)} - \frac{\varepsilon^{\frac{1}{\alpha}}}{(1 + \varepsilon^\frac{1}{\alpha})^{2(1 - \alpha)}})$ 
        & $[\tfrac{1}{2},1)$ \\ \cline{3-4}
        & 
        & $2\log(1 + \varepsilon^{\frac{1}{\alpha}}) + \frac{1}{\alpha - 1}\log(1 + \varepsilon^{\frac{1}{\alpha}} m^{2(\alpha - 1)} - \frac{\varepsilon^\alpha}{(1 + \varepsilon^{\frac{1}{\alpha}})^{2(\alpha - 1)}})$ 
        & $(1,\infty)$ \\ \hline
        
        \multirow{4}{*}{\hspace{0.2cm}\begin{sideways}\parbox{2.8cm}{$1^{\text{st}}$ Entry of \\ Divergence (\ref{sec:sand-first-arg})}\end{sideways}} 
        & \multirow{2}{*}{A.~additive\footnote{The continuity bound directly implies the same bound as a divergence bound \ref{subsec:divergence-bound} by choosing $\sigma=\tau$. \label{foot:div-bound}} } 
        & $\log(1 + \varepsilon) + \frac{1}{1 - \alpha}\log(1 + \varepsilon^\alpha \widetilde m_\tau^{\alpha - 1} - \frac{\varepsilon}{(1 + \varepsilon)^{1 - \alpha}})$
        & $[\tfrac{1}{2},1)$ \\ \cline{3-4}
        & 
        & $\log(1 + \varepsilon) + \frac{1}{\alpha - 1}\log(1 + \varepsilon \widetilde m_\tau^{1 - \alpha} - \frac{\varepsilon^\alpha}{(1 + \varepsilon)^{\alpha - 1}})$
        & $(1,\infty)$ \\ \cline{2-4}
        & Op.~space\footref{foot:div-bound} 
        & $\frac{\alpha}{\alpha - 1} \log(1 + \varepsilon \widetilde m_\tau^{\frac{1 - \alpha}{\alpha}})$
        & $(1,\infty)$ \\ \cline{2-4}
        & Mixed\footref{foot:div-bound} 
        & $\log(1 + \varepsilon) + \frac{\alpha}{\alpha - 1}\log(1 + \varepsilon \widetilde m_\tau^{\frac{1 - \alpha}{\alpha}} - \frac{\varepsilon^{2 - \frac{1}{\alpha}}}{(1 + \varepsilon)^{\frac{\alpha - 1}{\alpha}}})$
        & $(1,\infty)$\\ \hline
    \end{tabularx}
\end{flushleft}

For the various results presented in the table, it is important to note that none is universally preferable across the entire interval of $\alpha$. Rather, each method possesses its strengths and limitations. In the following, we compare our results using the sandwiched R{\'e}nyi conditional entropy. The accompanying diagram (\Cref{fig:plot-main-results}) demonstrates the {almost additive method} is particularly well-suited for $\alpha$ close to $1$ and small dimensions. The mixed method, which as the
\begin{wrapfigure}{l}{0.54\linewidth}
    \centering
    \includegraphics[width=\linewidth]{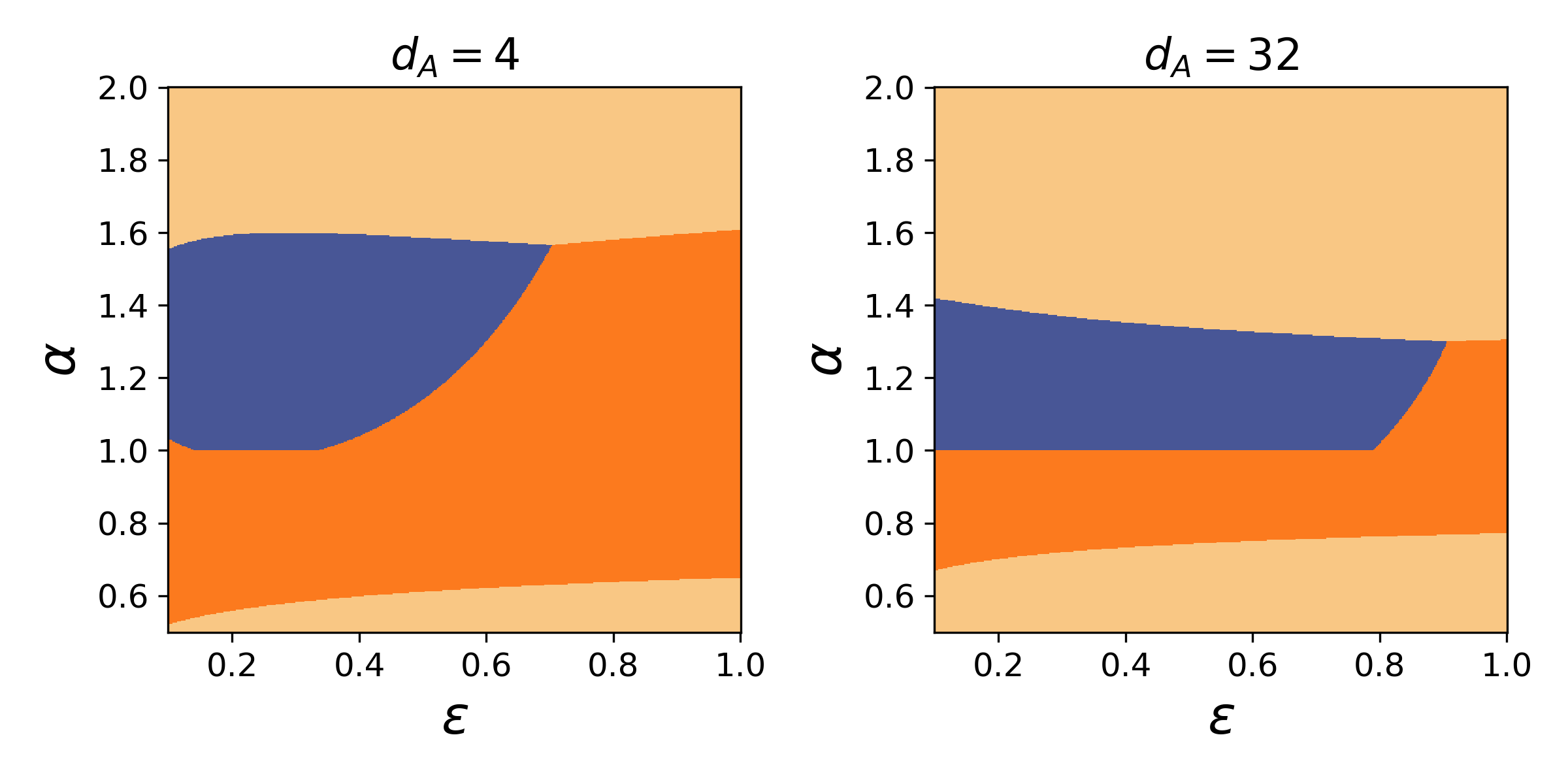}
    \vspace{-1cm}
    \caption{Continuity bounds for $\tilde{H}^{\uparrow}_\alpha(A|B)_\rho$ proven by the \textcolor{tbcolor1}{\rule{0.3cm}{0.3cm} almost additive}, \textcolor{tbcolor2}{\rule{0.3cm}{0.3cm} operator space}, and \textcolor{tbcolor3}{\rule{0.3cm}{0.3cm} mixed} approach, where the visible colour indicates the tightest bound.}
    \label{fig:plot-main-results}
\end{wrapfigure}
name suggests is a combination of the almost additive and operator space method, can even improve this property in large dimensions, while the operator space method is better suited for large $\alpha$. Even though the mixed method results in a slightly weaker bound, it performs well across the entire regime $\alpha \in (1,\infty)$. The continuity bounds proven for the mutual information and the first entry of the divergence come with similar scaling. In comparison to the existing results mentioned in Eqs.\ \eqref{eq:marwah-smaller-bound-than-1}, \eqref{eq:marwah-larger-bound-than-1}, and \eqref{eq:beigi}, our bound proven by the almost additive approach performs in the regime $\alpha\in(1,\infty)$ by a power of two better than the result in \cite{marwah2022} for small $\varepsilon$. The operator space approach improves the result in \cite{Beigi2022} by an order of two. Furthermore, \cite{marwah2022, Beigi2022} treat only the case of the sandwiched R{\'e}nyi conditional entropy, whereas our versions can also be used for other entropic quantities such as the sandwiched R{\'e}nyi mutual information and the sandwiched R{\'e}nyi divergence with fixed second argument.

For even more general quantities, where none of the inputs of the sandwiched R{\'e}nyi divergence is fixed, we apply the ALAFF method introduced in \cite{Bluhm2022ContinuityBounds} to estimate the continuity bounds for the $\alphaQ$ and related quantities of the sandwiched Rényi divergences. 
The analysis of the bounds is worked out in \Cref{sec:approxQMC} and applied in the context of approximate quantum Markov chains.

 As covered in the introduction, more results exist for the limiting case $\alpha\rightarrow1$. Here, the sandwiched R{\'e}nyi conditional entropy converges to the usual quantum conditional entropy, the sandwiched R{\'e}nyi mutual information to the usual quantum mutual information and
 the divergence itself to the relative entropy. Similarly, for the limit $\alpha\rightarrow\infty$, the quantities converge respectively to the min-conditional entropy, the max-mutual information and the max-divergences. Moving forward, we will focus on exploring the limits to compare our methods with each other but also with already established continuity bounds.
\begin{corollary*}
     Let $\rho, \sigma, \tau, \varepsilon, \widetilde m_\tau$, and $m$ be defined as in the beginning of \Cref{thm:main-continuity-bounds}. Then, the obtained continuity bounds converge for $\alpha\rightarrow1$ or $\alpha\rightarrow\infty$ in the following way:
\end{corollary*}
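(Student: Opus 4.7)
The plan is to verify the corollary by computing the $\alpha\to 1$ and $\alpha\to\infty$ limits of every closed-form expression in the main theorem using elementary calculus. Each axiomatic and mixed bound has the structure $\log(1+\varepsilon) + \frac{1}{\pm(\alpha-1)}\log(1 + T(\alpha,\varepsilon,d))$, where the inside quantity $T$ vanishes as $\alpha\to 1$, producing an indeterminate $0/0$. Setting $\beta = \alpha-1$ (or $\beta = (\alpha-1)/\alpha$ for the mixed and operator space bounds) and Taylor expanding the three standard building blocks
\[
\varepsilon^{\alpha} = \varepsilon\bigl(1 + \beta\log\varepsilon + O(\beta^{2})\bigr),\quad d_A^{2(\alpha-1)} = 1 + 2\beta\log d_A + O(\beta^{2}),\quad (1+\varepsilon)^{\alpha-1} = 1 + \beta\log(1+\varepsilon) + O(\beta^{2}),
\]
the constant pieces of $T$ cancel and what remains is of the form $\beta\,c(\varepsilon,d) + O(\beta^{2})$. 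Using $\log(1+\beta x) = \beta x + O(\beta^{2})$ and multiplying by $1/\beta$ then extracts a finite limit, which I would cross-check with L'H\^opital's rule to avoid bookkeeping errors.

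Applied to the axiomatic conditional entropy bound, this procedure yields $(1+\varepsilon)\log(1+\varepsilon) - \varepsilon\log\varepsilon + 2\varepsilon\log d_A$ in the limit $\alpha \to 1$, which coincides exactly with the Alicki--Fannes--Winter bound of \cite{Winter-AlickiFannes-2016}; the mixed bound produces the same limit, confirming that both approaches are asymptotically tight at $\alpha=1$. The operator space bound behaves differently: since $\alpha/(\alpha-1)\to\infty$ while the argument of the logarithm tends to $1+\varepsilon>1$, the bound diverges as $\alpha\to 1$, which recovers the observation of \cite{Beigi2022}. The mutual information and first-argument divergence rows are handled by identical expansions after substituting $d_A\mapsto m$ or $d_A^{2}\mapsto \widetilde m_\tau^{-1}$ as appropriate, and checking that the result agrees with the known Winter-type bounds for the quantum mutual information and for the relative entropy.

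For the $\alpha\to\infty$ limits, I would isolate the dominant term inside each logarithm. In the axiomatic $\alpha>1$ bound, $\varepsilon d_A^{2(\alpha-1)}$ eventually dominates $1$ and $\varepsilon^{\alpha}/(1+\varepsilon)^{\alpha-1}$, so
\[
\tfrac{1}{\alpha-1}\log\bigl(1 + \varepsilon d_A^{2(\alpha-1)} + o(d_A^{2(\alpha-1)})\bigr) \;\longrightarrow\; 2\log d_A,
\]
giving $\log(1+\varepsilon) + 2\log d_A$, the expected continuity bound for the min-conditional entropy. For the operator space bound, $\alpha/(\alpha-1)\to 1$ and $d_A^{2(\alpha-1)/\alpha}\to d_A^{2}$ immediately produce $\log(1 + \varepsilon d_A^{2})$, and the mixed bound combines both behaviours into $\log(1+\varepsilon) + \log\bigl(1 + \varepsilon d_A^{2} - \varepsilon^{2}/(1+\varepsilon)\bigr)$. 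The main obstacle is purely bookkeeping: in the $\alpha\to 1$ case, three or four leading terms cancel at order $\beta^{0}$ and one must carefully collect the $\beta^{1}$ coefficients; any sign error in the expansion of $\varepsilon^{\alpha}$ or $(1+\varepsilon)^{\alpha-1}$ would corrupt the limit. Beyond this, the corollary is a line-by-line verification requiring no further machinery.
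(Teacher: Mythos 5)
Your proposal is correct and follows essentially the same route as the paper, which computes the $\alpha\to 1$ limits via l'H\^opital's rule (equivalent to your first-order expansion in $\beta$) in Lemmas \ref{lem:limit-axiomatic}, \ref{lem:limit-operator} and \ref{lem:limit-mixed}, and obtains the $\alpha\to\infty$ limits by factoring out the dominant term, exactly as you describe. The only imprecision is your claim that the mutual-information row follows by ``identical expansions after substituting $d_A\mapsto m$'': that bound has $\varepsilon^{1/\alpha}$ in place of $\varepsilon$ and a doubled prefactor $2\log(1+\varepsilon^{1/\alpha})$, so the $\alpha\to 1$ limit picks up a factor of $2$ on the entropy term and the $\alpha\to\infty$ limit is the non-vanishing constant $\log(4m^2)$ rather than a continuity bound — but the same expansion technique handles it once these differences are tracked.
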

\begin{center}
    \begin{tabularx}{\textwidth}{|@{}l | c | c | Y|} 
        \hline
         & \textbf{Approach} & $\alpha\rightarrow1$ & $\alpha\rightarrow\infty$ \\\hline\hline
        \multirow{3}{*}{\hspace{0.3cm}\parbox{2.3cm}{Conditional \\ Entropy (\ref{sec:sand-cond-entropy})}} 
        & A.~additive & $2 \varepsilon \log d_A + (1+ \varepsilon) h\big(\frac{\varepsilon}{1+\varepsilon}\big)$ & $\log(1 + \varepsilon) + 2 \log{d_A}$ \\\cline{2-4}
        & Op.~space & $\infty$ & $\log(1 + \varepsilon d_A^2)$ \\\cline{2-4}
        & Mixed & $2 \varepsilon \log d_A + (1+ \varepsilon) h\big(\frac{\varepsilon}{1+\varepsilon}\big)$ & $\log(1+\varepsilon d_A^2 + \varepsilon(1 + \varepsilon(d_A^{2}-1)))$\\\hline
        
        \hspace{0.5cm}\parbox{1.8cm}{Mutual\\Info.~(\ref{sec:sand-mutual-info})}
        & A.~additive  & $2 \varepsilon \log{m} + 2 (1+\varepsilon) h\big(\frac{\varepsilon}{1+\varepsilon}\big)$ & $\log{4 m^2}$\\\hline
        
        \multirow{3}{*}{\hspace{0.2cm}\parbox{2.8cm}{$1^{\text{st}}$ Entry of \\ Divergence (\ref{sec:sand-first-arg})}}
        & A.~additive & $\varepsilon\log(\widetilde m_\tau^{-1}) + (1+\varepsilon) h\big(\frac{\varepsilon}{1+\varepsilon}\big)$ & $\log(1+\varepsilon) + \log(\widetilde m_\tau^{-1})$\\\cline{2-4}
        &  Op.~space & $\infty$ & $\log(1 + \varepsilon \widetilde m_\tau^{-1})$\\\cline{2-4}
         & Mixed & $\varepsilon\log(\widetilde m_\tau^{-1}) + (1+ \varepsilon) h\big(\frac{\varepsilon}{1+\varepsilon}\big)$ & $\log(1+\varepsilon \frac{1}{\widetilde m_\tau} + \varepsilon(1 + \varepsilon(\frac{1}{\widetilde m_\tau}-1)))$\\\hline
    \end{tabularx}~\\[2ex]
\end{center}

The corollary shows that the almost additive and mixed approach for the sandwiched Rényi conditional entropy converge for $\alpha\rightarrow1$ to the established bound by Winter \cite{Winter-AlickiFannes-2016} and thereby recover the best-known results. For the other quantities, we achieve similar bounds. Moreover, it is noteworthy that both the operator space and mixed method converge to a similar limit, which is markedly superior to that of the almost additive method, as the latter does not vanish for $\varepsilon\rightarrow0$. Furthermore, the almost additive and mixed approach applied to the first entry of the divergence reduce for $\alpha\rightarrow 1$ to the bound proven in \cite{Bluhm2022ContinuityBounds}.

Thus, our three new approaches have allowed us to prove good continuity bounds for many quantities of interest related to sandwiched R{\'e}nyi divergences. Additionally, the ALAFF method allows us to derive slightly worse, albeit more general, continuity bounds with applications in the context of approximate quantum Markov chains.

\section{Preliminaries} \label{sec:preliminaries}
\subsection{Notation and basic concepts}

We start by fixing the notation used in this paper. We denote by $\mathcal H$ a complex Hilbert space with an inner product linear in the second argument. Throughout this paper $\cH$ is finite-dimensional with dimension $d \in \mathbb N$. For a bipartite or tripartite system, we will always use indices with capital letters to refer to the different subsystems. If we have, for example, the bipartite space $\mathcal H = \mathcal H_A \otimes \mathcal H_B$ and consider the dimension of $\mathcal H_A$, we write $d_A$.\par
The set of bounded linear operators on the Hilbert space $\mathcal H$ will be denoted by $\mathcal B(\mathcal H)$ and its convex subset of positive semi-definite operators with trace one, i.e.~the quantum states, by $\mathcal S(\mathcal H)$. 
\par 
We use $\tr[\,\cdot\,]$ for the usual matrix trace and $\norm{\,\cdot\,}_1$ and $\norm{\,\cdot\,}_\infty$ to denote the trace norm and the spectral norm on $\mathcal B(\mathcal H)$, respectively. More generally, we set $\| X\|_p :=\tr[ |X|^p]^{1/p}$, which coincides with the Schatten $p$-norms for $p \in [1,\infty]$.
\par 
Moreover, for a state $\rho$ on a bipartite system $\mathcal H_{AB} = \mathcal H_A \otimes \mathcal H_B$, we set $\rho_A\in\cB(\cH_A)$ to be the output-state of the partial trace. The partial trace is a completely positive trace-preserving (CPTP) map. Finally, we denote by $\mathds{1}_A$ the identity matrix on $A$ and, with a slight abuse of notation, we denote by $\tr_A[\cdot]$ both the partial trace of $A$ as well as the corresponding map on $\mathcal H_{AB}$ by tensorizing with $\mathds{1}_A$. In the first case, we mean just the usual definition of the partial trace as a map from $\cB(\cH_{AB})$ to $\cB(\cH_{B})$ while in the second case we mean $\1_A \otimes \tr_A[\cdot]$

Throughout the text, we will use the logarithm in natural basis and will denote it by $\log$.

\subsection{Divergences and entropic quantities}\label{subsec:div-entropic-quantities}
In this section, we will introduce the entropic quantities considered in the present article. We start with the sandwiched R{\'e}nyi divergence, which is the base for all subsequently defined quantities.
\begin{definition}[Sandwiched Rényi divergence]
    Let $X$, $Y \in \mathcal B(\mathcal H)$ be positive semi-definite operators with $\ker X \supseteq \ker Y$. For $\alpha \in [1/2, 1) \cup (1, \infty)$, we define
    \begin{equation}
        \Qalpha(X \Vert Y) := \tr[(Y^{\frac{1 - \alpha}{2\alpha}} X Y^{\frac{1 - \alpha}{2\alpha}})^\alpha] = \| Y^{\frac{1 - \alpha}{2\alpha}} X Y^{\frac{1 - \alpha}{2\alpha}}\|_\alpha^\alpha.
    \end{equation}
    In case the power in $Y^{\frac{1 - \alpha}{2\alpha}}$ becomes negative these quantities have to be understood as positive powers of the pseudo inverse of $Y$. Then, the \emph{sandwiched Rényi divergence} of $X$ and $Y$ is 
    \begin{equation}
        \Dalpha(X \Vert Y) := \frac{1}{\alpha - 1}\log \Qalpha(X \Vert Y) \, .
    \end{equation}
\end{definition}

\begin{remark} \label{rem:change-order-in-norm}
    Alternatively, we can write 
    \begin{equation}
        \Qalpha(X \Vert Y) = \tr[(X^{\frac{1}{2}}Y^{\frac{1 - \alpha}{\alpha}}X^{\frac{1}{2}})^\alpha] = \| X^{\frac{1}{2}}Y^{\frac{1 - \alpha}{\alpha}}X^{\frac{1}{2}}\|_\alpha^\alpha,
    \end{equation}
    because the operator $f(A^\ast A)$ for $f(0) = 0$ is defined by applying the continuous real function $f$ on the singular values of $A$. The simple trick $A^\ast Av=\lambda v\implies AA^\ast Av=\lambda Av$ shows that the singular values of $A^\ast A$ and $AA^\ast$ are equal such that $\tr[f(A^\ast A)]=\tr[f(AA^\ast)]$.
\end{remark}
It is known that the sandwiched R{\'e}nyi divergences converge in the limits $\alpha\rightarrow1$ and $\alpha\rightarrow\infty$ to well-known quantities \cite[Section 4.3.2]{Tomamichel_2016}:
\begin{proposition}\label{prop:limits}
    Let $\rho$, $\sigma \in \mathcal S(\mathcal H)$. Then, $\widetilde D_\alpha(\rho \Vert \sigma)$ converges to
    \begin{itemize}
        \item the relative entropy $D(\rho\Vert\sigma):= \tr[\rho(\log \rho - \log \sigma )]$ for $\alpha \to 1$.
        \item the max-entropy $D_{\infty}(\rho \| \sigma):= \inf \{ \lambda >0 \, : \, \rho \leq e^\lambda \sigma \} \,$ (see also Eq.\ \eqref{definition:max-divergence})  for $\alpha \to \infty$. 
    \end{itemize}
    For $\alpha = \frac{1}{2}$, it holds that $\widetilde D_{1/2}(\rho \Vert \sigma) = - \log F(\rho, \sigma)$, where $F$ is the fidelity $F(\rho, \sigma)=\left(\tr[|\sqrt{\rho} \sqrt{\sigma}|]\right)^2$.
\end{proposition}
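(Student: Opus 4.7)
The proposition bundles three independent assertions, and my plan is to handle each separately, starting from the algebraic identity at $\alpha=1/2$ and building up to the Taylor expansion argument for $\alpha\to 1$. Throughout, the assumption $\ker\sigma\subseteq\ker\rho$ (implicit for $\widetilde D_\alpha$ to be finite) lets me restrict to $\operatorname{supp}\sigma$ and interpret negative powers of $\sigma$ as pseudoinverses, and in finite dimension all limits are uniform in the spectral data.

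For the value at $\alpha=1/2$, observe that $(1-\alpha)/(2\alpha)=1/2$ and hence $\widetilde Q_{1/2}(\rho\Vert\sigma)=\tr[(\sigma^{1/2}\rho\,\sigma^{1/2})^{1/2}]$. Applying \Cref{rem:change-order-in-norm} with $A=\rho^{1/2}\sigma^{1/2}$ (i.e.\ the identity $\tr f(A^\ast A)=\tr f(AA^\ast)$ for $f(x)=\sqrt{x}$) rewrites this as $\tr|\sqrt{\rho}\sqrt{\sigma}|=\sqrt{F(\rho,\sigma)}$. Substituting into $\widetilde D_{1/2}(\rho\Vert\sigma)=-2\log\widetilde Q_{1/2}(\rho\Vert\sigma)$ yields $-\log F(\rho,\sigma)$.

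For the limit $\alpha\to\infty$, I would use the Schatten-norm form of the definition to factor
\begin{equation}
\widetilde D_\alpha(\rho\Vert\sigma)=\frac{\alpha}{\alpha-1}\log\bigl\Vert\sigma^{\frac{1-\alpha}{2\alpha}}\rho\,\sigma^{\frac{1-\alpha}{2\alpha}}\bigr\Vert_\alpha.
\end{equation}
As $\alpha\to\infty$, $\alpha/(\alpha-1)\to 1$ and $(1-\alpha)/(2\alpha)\to -1/2$, so $\sigma^{(1-\alpha)/(2\alpha)}\to\sigma^{-1/2}$ on $\operatorname{supp}\sigma$; combined with $\Vert\cdot\Vert_\alpha\to\Vert\cdot\Vert_\infty$ on $\cB(\cH)$ in finite dimension, joint continuity yields $\log\Vert\sigma^{-1/2}\rho\,\sigma^{-1/2}\Vert_\infty$. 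Finally, $\Vert\sigma^{-1/2}\rho\,\sigma^{-1/2}\Vert_\infty\le c$ is equivalent to $\sigma^{-1/2}\rho\,\sigma^{-1/2}\le c\mathds{1}$ and, after conjugation by $\sigma^{1/2}$, to $\rho\le c\,\sigma$; taking $c=e^\lambda$ (using the natural-log convention fixed in the preliminaries) identifies this limit with $D_\infty(\rho\Vert\sigma)$.

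The limit $\alpha\to 1$ is the genuinely subtle step and is where I expect the real work. Since $\widetilde Q_1(\rho\Vert\sigma)=\tr\rho=1$, $\log\widetilde Q_\alpha/(\alpha-1)$ is of indeterminate form $0/0$ and it suffices to prove $\tfrac{d}{d\alpha}\widetilde Q_\alpha\big|_{\alpha=1}=D(\rho\Vert\sigma)$; the claim then follows from a first-order Taylor expansion $\widetilde Q_\alpha=1+(\alpha-1)D(\rho\Vert\sigma)+o(\alpha-1)$ and $\log(1+x)=x+o(x)$. To compute this derivative cleanly I would write $\widetilde Q_\alpha=\tr[A(\alpha)^\alpha]$ with $A(\alpha):=\sigma^{\frac{1-\alpha}{2\alpha}}\rho\,\sigma^{\frac{1-\alpha}{2\alpha}}$ and apply the fundamental theorem of calculus only in the exponent, starting from $\beta=1$:
\begin{equation}
\widetilde Q_\alpha-1=\bigl(\tr[\rho\,\sigma^{(1-\alpha)/\alpha}]-1\bigr)+\int_1^\alpha\tr\bigl[A(\alpha)^\beta\log A(\alpha)\bigr]\,d\beta,
\end{equation}
where the boundary term $\tr A(\alpha)$ has been rewritten by cyclicity. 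Dividing by $\alpha-1$ and letting $\alpha\to 1$, the bracketed term contributes $-\tr[\rho\log\sigma]$ since $\tfrac{d}{d\alpha}\tfrac{1-\alpha}{\alpha}\big|_{\alpha=1}=-1$, while the integral average converges to $\tr[\rho\log\rho]$ by joint continuity of $A(\cdot)$ and the matrix logarithm at $\alpha=1$; summing gives $D(\rho\Vert\sigma)$. The virtue of this decomposition is that it splits the two sources of $\alpha$-dependence (the outer power $\alpha$ and the exponent of $\sigma$ inside $A$) into separate, tractable pieces and avoids having to manipulate Fréchet derivatives of the matrix logarithm on noncommuting operators $\rho$ and $\log\sigma$, which would otherwise be the principal obstacle.
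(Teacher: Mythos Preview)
Your argument is correct in all three parts. The paper itself does not prove this proposition at all; it simply records the statement as a known fact and cites \cite[Section~4.3.2]{Tomamichel_2016}, so there is no in-paper proof to compare against. Your write-up is therefore a self-contained substitute: the $\alpha=1/2$ computation via \Cref{rem:change-order-in-norm} is exactly the standard one, the $\alpha\to\infty$ argument via $\Vert\cdot\Vert_\alpha\to\Vert\cdot\Vert_\infty$ together with the operator-order characterisation of $\Vert\sigma^{-1/2}\rho\,\sigma^{-1/2}\Vert_\infty$ is clean, and your decomposition for $\alpha\to1$ into the boundary term $\tr[\rho\,\sigma^{(1-\alpha)/\alpha}]-1$ plus the integral $\int_1^\alpha\tr[A(\alpha)^\beta\log A(\alpha)]\,d\beta$ is a correct and elegant way to isolate the two contributions $-\tr[\rho\log\sigma]$ and $\tr[\rho\log\rho]$ without having to differentiate through the matrix power of a product of noncommuting operators. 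The only implicit ingredients worth flagging are that $A(\alpha)^\beta\log A(\alpha)$ is continuous at zero eigenvalues (so the integrand is jointly continuous in $(\alpha,\beta)$), and that the kernel inclusion ensures $P_\sigma\rho P_\sigma=\rho$ in the conjugation step of the $\alpha\to\infty$ part; both are harmless in finite dimension.
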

Next, we define the sandwiched R{\'e}nyi conditional entropy in the same spirit as the conditional entropy in terms of the relative entropy.
\begin{definition}[Sandwiched R{\'e}nyi conditional entropy]\label{definition:conditional-sandwiched-renyi-divergences}
    Let $\rho_{AB} \in \cS(\cH_A \otimes \cH_B)$. Then, for $\alpha \in [1/2, 1) \cup (1, \infty)$, the \emph{sandwiched R{\'e}nyi conditional entropy} is given by 
    \begin{equation}
        \widetilde H^{\uparrow}_\alpha(A|B)_\rho := \sup\limits_{\sigma_B \in \cS(\cH_B)} \frac{1}{1 - \alpha} \log \widetilde Q_\alpha(\rho_{AB} \Vert \1_A \otimes \sigma_B) \, .
    \end{equation}
\end{definition}
Mimicking how the mutual information arises from the relative entropy, we arrive at the sandwiched R{\'e}nyi mutual information:
\begin{definition}[Sandwiched R{\'e}nyi mutual information]
    Let $\rho_{AB} \in \mathcal S(\mathcal H_A \otimes \mathcal H_B)$. Then, for $\alpha \in [1/2, 1) \cup (1, \infty)$, we define
    the \emph{sandwiched R{\'e}nyi mutual information} as
\begin{equation}
    \alphaI(A:B)_\rho := \inf\limits_{\sigma_A, \sigma_B} \widetilde D_\alpha(\rho_{AB} \Vert \sigma_A \otimes \sigma_B),
\end{equation}
where the infimum is taken over $\sigma_A \in \mathcal S(\mathcal H_A)$ and $\sigma_B \in \mathcal S(\mathcal H_B)$.
\end{definition}

Finally, we define the sandwiched R{\'e}nyi conditional mutual information in the same spirit as the conditional mutual information in terms of the relative entropy. In this case, we base our definition on the difference between (sandwiched R{\'e}nyi) conditional entropies.

\begin{definition}[Sandwiched R{\'e}nyi conditional mutual information]\label{definition:sandwiched-renyi-CMI}
    Let $\rho_{ABC} \in \cS(\cH_A \otimes \cH_B \otimes \cH_C)$. Then, for $\alpha \in [1/2, 1) \cup (1, \infty)$, the \emph{sandwiched R{\'e}nyi conditional mutual information} is given by 
    \begin{equation}
        \widetilde I^{\uparrow}_\alpha(A:C|B)_\rho := \widetilde H^{\uparrow}_\alpha(C|B)_\rho - \widetilde H^{\uparrow}_\alpha(C|AB)_\rho \, .
    \end{equation}
\end{definition}
Note that the infimum in the definition of the mutual information and the conditional entropy is attained at the reduced state of $\rho_{AB}$ (see, e.g., \cite[Section 5.1]{Tomamichel_2016}). We will often use that the sandwiched R{\'e}nyi conditional entropy is bounded by \cite[Lemma 5.2]{Tomamichel_2016}
\begin{equation} \label{eq:bound-sandwiched-cond-renyi}
    - \log \min\{d_A, d_B\} \leq \widetilde H^\uparrow_\alpha(A|B)_\rho \leq \log d_A.
\end{equation}

As already observed, the entropic quantities defined for the sandwiched R{\'e}nyi divergences converge for $\alpha \to 1$ to the ones defined by the Umegaki relative entropy. We recall its definition:
\begin{definition}[Umegaki relative entropy]
     Let $\rho$, $\sigma \in \mathcal S(\mathcal H)$ be quantum states with $\ker{\sigma} \subseteq \ker{\rho}$. Then, the \emph{Umegaki relative entropy} of $\rho$ and $\sigma$ is given by 
    \begin{equation}
         D(\rho \| \sigma):= \tr[\rho \log \rho - \rho \log \sigma].
    \end{equation}
\end{definition}
We can furthermore define the quantum conditional entropy
\begin{definition}[Quantum conditional entropy]
Let $\rho_{AB} \in \cS(\cH_A \otimes \cH_B)$. Then, the \emph{quantum conditional entropy} is given by 
    \begin{equation}
         H(A|B)_\rho := \sup\limits_{\sigma_B \in \cS(\cH_B)} -D(\rho_{AB}\Vert \1_A \otimes \sigma_B) = -  D(\rho_{AB}\Vert \1_A \otimes \rho_B)  \, .
    \end{equation}
\end{definition}
and the quantum mutual information
\begin{definition}[Quantum mutual information]\label{def:quantum-mutual-information}
Let $\rho_{AB} \in \cS(\cH_A \otimes \cH_B)$. Then, the \emph{quantum mutual information} is given by 
    \begin{equation}
         I(A:B)_\rho := \inf\limits_{\sigma_A,\sigma_B} D(\rho_{AB}\Vert \sigma_A \otimes \sigma_B) =  D(\rho_{AB}\Vert \rho_A \otimes \rho_B)  \, ,
    \end{equation}
    where the infimum is taken over $\sigma_A \in \mathcal S(\mathcal H_A)$ and $\sigma_B \in \mathcal S(\mathcal H_B)$.
    \end{definition}

Finally, we consider explicitly the limit of $\alpha \to \infty$ of the sandwiched R{\'e}nyi divergences:
\begin{definition}[Max-divergence]\label{definition:max-divergence}
     Let $X$, $Y \in \mathcal B(\mathcal H)$ be positive semi-definite operators with $X \neq 0$. Then, the \emph{max-divergence} of $X$ and $Y$ is given by 
    \begin{equation}
         D_{\infty}(X \| Y):= \text{inf} \{ \lambda >0 \, : \, X \leq \operatorname{e}^\lambda Y \} \, .
    \end{equation}
    Note that it admits the following equivalent representation:
    \begin{equation}
         D_{\infty}(X \| Y):= \log \norm{Y^{-1/2} X Y^{-1/2}}_\infty \, .
    \end{equation}
\end{definition}

\begin{definition}[Min-conditional entropy]\label{definition:max-conditional-divergence}
Let $\rho_{AB} \in \cS(\cH_A \otimes \cH_B)$. Then, the \emph{min-conditional entropy} is given by 
    \begin{equation}
         H^{\uparrow}_{\infty}(A|B)_\rho := \sup\limits_{\sigma_B \in \cS(\cH_B)}  - D_\infty(\rho_{AB}\Vert \1_A \otimes \sigma_B)  \, .
    \end{equation}
\end{definition}

Likewise, we can define a max-mutual information:
\begin{definition}[Max-mutual information]
    Let $\rho_{AB} \in \mathcal S(\mathcal H_A \otimes \mathcal H_B)$. Then, we define
    the \emph{max-mutual information} as
\begin{equation}
    I^{\uparrow}_\infty(A:B)_\rho := \inf\limits_{\sigma_A, \sigma_B} D_\infty(\rho_{AB} \Vert \sigma_A \otimes \sigma_B),
\end{equation}
where the infimum is taken over $\sigma_A \in \mathcal S(\mathcal H_A)$ and $\sigma_B \in \mathcal S(\mathcal H_B)$.
\end{definition}

It has been shown that $\alpha \to \widetilde D_\alpha(\rho \Vert \sigma)$ is monotonically increasing \cite[Corollary 4.2]{Tomamichel_2016}. Thus, in particular,
\begin{equation} \label{eq:max-upper-bound-on-sandwiched}
    \widetilde D_\alpha(\rho \Vert \sigma) \leq D_{\infty}(\rho\Vert\sigma) \qquad \forall \alpha \in [1/2,1) \cup (1,\infty).
\end{equation}

\section{Technical tools and main theorems}\label{sec:technical_tools}
In this section, we will introduce the technical tools and prove the main theorems that form the cornerstones for our proofs of the continuity bounds in \Cref{sec:CB_Renyi}, which are just corollaries of the former.
    
\subsection{Almost additive approach}

We start by reviewing some tools that are related to the almost additive approach to sandwiched R{\'e}nyi divergences (see \cite{muller2013quantum, Tomamichel_2016} for an overview). One property is that $\Qalpha$ is jointly convex/ concave:

\begin{lemma}[Joint convexity and concavity of $\Qalpha$]\label{lemma:Q-alpha-joint-convexity/concavity}
    For quantum states $\rho$, $\sigma \in \mathcal S(\mathcal H)$, the function  
    \begin{equation}
        (\rho, \sigma) \mapsto \Qalpha(\rho \Vert \sigma)
    \end{equation}
    is jointly concave for $\alpha \in [1/2, 1)$ and jointly convex for $\alpha \in (1, \infty)$.
\end{lemma}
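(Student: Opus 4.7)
The plan is to reduce joint convexity/concavity of $\widetilde Q_\alpha$ to the data-processing inequality (DPI) for the sandwiched Rényi divergence, which is already established in the stated range $\alpha \in [1/2,1)\cup(1,\infty)$ (Frank--Lieb, Beigi, Müller-Lennert et al.). This is the standard ``lift to an ancilla and trace back'' argument.

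First, I would take an arbitrary convex combination $\rho = \lambda \rho_1 + (1-\lambda)\rho_2$, $\sigma = \lambda \sigma_1 + (1-\lambda)\sigma_2$ with $\lambda \in [0,1]$, and form the classical-quantum lifts on an enlarged Hilbert space with an auxiliary $2$-dimensional classical register:
\begin{equation*}
\tilde\rho = \lambda\,\rho_1 \otimes |0\rangle\langle 0| + (1-\lambda)\,\rho_2 \otimes |1\rangle\langle 1|,\qquad
\tilde\sigma = \lambda\,\sigma_1 \otimes |0\rangle\langle 0| + (1-\lambda)\,\sigma_2 \otimes |1\rangle\langle 1|.
\end{equation*}
Two elementary properties of $\widetilde Q_\alpha$ which follow straight from the definition are then used: the direct-sum additivity $\widetilde Q_\alpha(\bigoplus_i X_i \Vert \bigoplus_i Y_i) = \sum_i \widetilde Q_\alpha(X_i \Vert Y_i)$, and the scaling homogeneity $\widetilde Q_\alpha(c X \Vert c Y) = c\,\widetilde Q_\alpha(X \Vert Y)$ for $c>0$ (a one-line check using that the exponents of $c$ combine to $c^{1/\alpha}$ inside the $\alpha$-power). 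Together these give
\begin{equation*}
\widetilde Q_\alpha(\tilde\rho \Vert \tilde\sigma) \;=\; \lambda\,\widetilde Q_\alpha(\rho_1 \Vert \sigma_1) + (1-\lambda)\,\widetilde Q_\alpha(\rho_2 \Vert \sigma_2).
\end{equation*}

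Next, I would apply DPI to the CPTP map that traces out the classical register, which sends $\tilde\rho \mapsto \rho$ and $\tilde\sigma \mapsto \sigma$, to obtain $\widetilde D_\alpha(\rho \Vert \sigma) \le \widetilde D_\alpha(\tilde\rho \Vert \tilde\sigma)$. Since $\widetilde D_\alpha = \tfrac{1}{\alpha-1}\log \widetilde Q_\alpha$, the prefactor $1/(\alpha-1)$ flips sign at $\alpha=1$: for $\alpha>1$ the inequality translates to $\widetilde Q_\alpha(\rho \Vert \sigma) \le \lambda \widetilde Q_\alpha(\rho_1 \Vert \sigma_1) + (1-\lambda)\widetilde Q_\alpha(\rho_2 \Vert \sigma_2)$, i.e.\ joint convexity, whereas for $\alpha \in [1/2,1)$ the inequality reverses and yields joint concavity.

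The main obstacle is of course DPI itself in the sub-unit range $\alpha \in [1/2,1)$, whose original proof is genuinely non-trivial (it proceeds through Stein--Hirschman complex interpolation, or equivalently through a matrix-concavity theorem of Epstein/Ando--Hiai type). Since this is well-documented in the literature, the cleanest presentation is to quote DPI and run the four-line argument above; an alternative would be a direct invocation of Ando--Hiai concavity of the relevant trace functional, but this route is less uniform across the two regimes and less transparent.
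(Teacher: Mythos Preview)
The paper does not give its own proof here; it simply cites Tomamichel's book (Proposition~4.7 and Theorem~4.1). Your argument via the classical--quantum lift and DPI is correct and is the standard way to pass from data processing to joint convexity/concavity (the two are in fact equivalent for generalised divergences). The block-diagonal additivity and the scaling $\widetilde Q_\alpha(cX\Vert cY)=c\,\widetilde Q_\alpha(X\Vert Y)$ are verified correctly, and the sign of $1/(\alpha-1)$ indeed gives convexity for $\alpha>1$ and concavity for $\alpha\in[1/2,1)$. The only caveat, which you already flag, is that in many presentations (including Tomamichel's) DPI for $\widetilde D_\alpha$ is itself \emph{derived from} joint convexity/concavity of $\widetilde Q_\alpha$; to avoid circularity you must therefore invoke an independent proof of DPI, e.g.\ Beigi's complex-interpolation argument for $\alpha>1$ or the direct Frank--Lieb trace-functional concavity for $\alpha\in[1/2,1)$.
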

\begin{proof}
    For a proof, see \cite[Proposition 4.7 and Theorem 4.1]{Tomamichel_2016}. 
\end{proof}

Another useful property of the $\Qalpha$ is that it behaves nicely under addition, unlike the sandwiched R{\'e}nyi divergences:
\begin{lemma}[Sub- and Superadditivity of $\Qalpha$]\label{lemma:Q-alpha-super/subadditivity}
    For $X_1$, $X_2$, $Y \in \mathcal B(\mathcal H)$ positive semi-definite with $\ker Y \subseteq \ker X_1 \cap \ker X_2$, we find that for $\alpha \in (0, 1)$
    \begin{equation}
        \Qalpha(X_1 + X_2 \Vert Y) \le \Qalpha(X_1 \Vert Y) + \Qalpha(X_2 \Vert Y)
    \end{equation}
    and further for $\alpha \in (1, \infty)$
    \begin{equation}
        \Qalpha(X_1 \Vert Y) + \Qalpha(X_2 \Vert Y) \le \Qalpha(X_1 + X_2 \Vert Y) \, .
    \end{equation}
\end{lemma}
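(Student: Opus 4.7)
The plan is to translate both claims into a purely matrix-analytic trace inequality for positive semi-definite operators, and then invoke the classical Rotfel'd-type inequalities. Set $Z_i := Y^{(1-\alpha)/(2\alpha)} X_i Y^{(1-\alpha)/(2\alpha)}$ for $i=1,2$. The hypothesis $\ker Y \subseteq \ker X_1 \cap \ker X_2$ allows one to interpret the possibly negative power of $Y$ via the pseudo-inverse restricted to $\mathrm{ran}(Y)$ and, crucially, makes the assignment $X \mapsto Y^{(1-\alpha)/(2\alpha)} X Y^{(1-\alpha)/(2\alpha)}$ linear on $X_1 + X_2$, so that $Y^{(1-\alpha)/(2\alpha)}(X_1 + X_2) Y^{(1-\alpha)/(2\alpha)} = Z_1 + Z_2$. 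Since conjugating a positive operator by a self-adjoint one preserves positivity, each $Z_i$ is positive semi-definite. Then by definition $\widetilde Q_\alpha(X_i \Vert Y) = \mathrm{tr}[Z_i^\alpha]$ and $\widetilde Q_\alpha(X_1 + X_2 \Vert Y) = \mathrm{tr}[(Z_1 + Z_2)^\alpha]$, so both assertions reduce to the trace inequalities
\[
\mathrm{tr}[(Z_1 + Z_2)^\alpha] \leq \mathrm{tr}[Z_1^\alpha] + \mathrm{tr}[Z_2^\alpha] \quad (\alpha \in (0,1)), \qquad \mathrm{tr}[(Z_1 + Z_2)^\alpha] \geq \mathrm{tr}[Z_1^\alpha] + \mathrm{tr}[Z_2^\alpha] \quad (\alpha > 1),
\]
for $Z_1, Z_2 \geq 0$.

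For $\alpha \in (0,1)$, $t \mapsto t^\alpha$ is operator concave on $[0,\infty)$ with $0^\alpha = 0$, and this is exactly the setting of Rotfel'd's classical trace inequality, which yields the desired subadditivity. If one prefers a direct route, one can use the stronger operator inequality $(Z_1 + Z_2)^\alpha \leq Z_1^\alpha + Z_2^\alpha$ (an Ando--Bhatia style consequence of operator concavity of $t^\alpha$) and then take the trace.

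For $\alpha > 1$, the reverse trace inequality needs to be established, which is the main obstacle as $t \mapsto t^\alpha$ is no longer operator concave and the naive triangle inequality for $\|\cdot\|_\alpha$ goes the wrong way. I would first reduce to the range $\alpha \in (1,2]$ by writing, via the trace derivative,
\[
\mathrm{tr}[(Z_1 + Z_2)^\alpha] - \mathrm{tr}[Z_1^\alpha] = \alpha \int_0^1 \mathrm{tr}\bigl[(Z_1 + tZ_2)^{\alpha - 1} Z_2\bigr]\, dt
\]
and using operator monotonicity of $s \mapsto s^{\alpha - 1}$ on $[0,\infty)$ (valid for $\alpha - 1 \in [0,1]$) together with $Z_1 + tZ_2 \geq tZ_2$ to bound the integrand below by $\alpha\, t^{\alpha-1} \mathrm{tr}[Z_2^\alpha]$; integrating gives $\mathrm{tr}[(Z_1+Z_2)^\alpha] - \mathrm{tr}[Z_1^\alpha] \geq \mathrm{tr}[Z_2^\alpha]$. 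To cover $\alpha > 2$, I would either bootstrap (iterate the $\alpha \in (1,2]$ argument on exponents $\alpha/2^k$ composed with the self-adjoint square) or simply invoke the convex-function version of Rotfel'd's trace inequality, $\mathrm{tr}[f(A+B)] \geq \mathrm{tr}[f(A)] + \mathrm{tr}[f(B)]$ for convex $f$ with $f(0)=0$ and $A,B \geq 0$, specialized to $f(t) = t^\alpha$. This second step is the technically subtle part of the proof; the first part is essentially immediate from operator concavity.
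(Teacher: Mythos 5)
Your reduction to the trace inequalities for the conjugated operators $Z_i$, together with your main citations, is essentially the paper's own proof: for $\alpha\in(0,1)$ the authors invoke McCarthy's inequality (the power-function case of Rotfel'd's concave trace inequality, via \cite{marwah2022} and \cite{mccarthy1967cp}), and for $\alpha>1$ they apply the convex-function counterpart in the form of Bourin--Uchiyama's theorem for unitarily invariant norms \cite{Bourin2007}, which is exactly your ``convex-function version of Rotfel'd'' specialized to the trace norm. Your derivative argument for $\alpha\in(1,2]$ is correct and a pleasantly self-contained alternative on that range, but the proposed ``bootstrap'' to $\alpha>2$ is too vague to count as a proof, so for general $\alpha>1$ you end up relying on the same cited result as the paper. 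One genuine error should be removed: the ``stronger operator inequality'' $(Z_1+Z_2)^\alpha\le Z_1^\alpha+Z_2^\alpha$ for $\alpha\in(0,1)$ is false. Take $\alpha=1/2$, $Z_1=\operatorname{diag}(1,0)$ and $Z_2$ the rank-one projection onto $(1,1)/\sqrt{2}$: then $Z_1^{1/2}+Z_2^{1/2}=Z_1+Z_2$, so the claimed inequality would force every nonzero eigenvalue of $Z_1+Z_2$ to be at least $1$, yet $Z_1+Z_2$ has the eigenvalue $1-1/\sqrt{2}<1$. Operator concavity of $t\mapsto t^\alpha$ yields subadditivity only at the level of traces (Rotfel'd/McCarthy) or of unitarily invariant norms (Ando--Zhan, Bourin--Uchiyama), not in the operator order; since your primary route goes through the trace inequality, the proof survives, but the aside must be deleted.
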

\begin{proof}
    The proof of the first claim can be found in \cite{marwah2022}, based on an inequality from \cite{mccarthy1967cp}. For the second, we use that one can write 
    \begin{align}
        \Qalpha(X_1 + X_2 \Vert Y) = \tr[(X_1^\prime + X_2^\prime)^\alpha] = \norm{(X^\prime_1 + X^\prime_2)^\alpha}_1 \, ,
    \end{align}
    with $X^\prime_{i} := Y^{\frac{1 - \alpha}{2\alpha}} X_{i} Y^{\frac{1 - \alpha}{2\alpha}}$, $i = 1, 2$. As before $Y^{-1}$ is the pseudo inverse of $Y$. Since $\norm{\cdot}_1$ is unitarily invariant and since the map $\R_+\ni x \mapsto x^\alpha$ is convex for $\alpha > 1$ and vanishes at zero, we can apply \cite[Theorem 1.2]{Bourin2007} to conclude
    \begin{equation}
        \norm{(X_1^\prime + X_2^\prime)^\alpha}_1 \ge \norm{(X_1^\prime)^\alpha + (X_2^\prime)^\alpha}_1 = \tr[(X_1^\prime)^\alpha + (X_2^\prime)^\alpha] = \Qalpha(X_1 \Vert Y) + \Qalpha(X_2 \Vert Y) \, . 
    \end{equation}
    The trace and $1$-norm agree as all involved operators are positive semi-definite.
\end{proof}

One might be tempted to conjecture that super- and subadditivity holds more generally for the analogues of the $\Qalpha$, in the case of Petz or geometric Renyi divergences. However, this is not the case. One can relatively easily construct counterexamples:
\begin{example}
    In the following, we present an example which contradicts the superadditivity of the Petz and the geometric Rényi divergence. For $\rho$, $\tau \in \mathcal S(\mathcal H)$ with $\ker{\rho} \supseteq \ker{\tau}$ and $\alpha \in (0,1) \cup (1, \infty)$,
    \begin{equation}
        \overline{Q}_{\alpha}(\rho\Vert\tau) := \tr[\rho^{\alpha}\tau^{1-\alpha}] \; , \quad \overline{D}_{\alpha}(\rho\Vert\tau) := \frac{1}{\alpha -1} \log \overline{Q}_{\alpha}(\rho\Vert\tau)
    \end{equation}
    is the Petz Rényi divergence and for $\alpha \in (1,2]$, let
        \begin{equation}
        \widehat{Q}_{\alpha}(\rho\Vert\tau) := \tr[\tau^{\frac{1}{2}}(\tau^{-\frac{1}{2}}\rho \tau^{-\frac{1}{2}})^{\alpha} \tau^{\frac{1}{2}}] \; , \quad \widehat{D}_{\alpha}(\rho\Vert\tau) := \frac{1}{\alpha -1} \log \widehat{Q}_{\alpha}(\rho\Vert\tau) \, 
    \end{equation}
    be the geometric (maximal) Rényi divergence. For the density matrices
    \begin{equation}
        \rho_1=\begin{pmatrix}0.8 & 0.3\\0.3 & 0.2 \end{pmatrix},\quad\rho_2=\begin{pmatrix}0.1 & 0.2\\0.2 & 0.9\end{pmatrix}, \quad\tau=\begin{pmatrix}0.45 & 0.49\\0.49 & 0.55\end{pmatrix},
    \end{equation}
    one can calculate
    \begin{equation}
        \overline{Q}_{1.5}(\rho_1\Vert\tau)+\overline{Q}_{1.5}(\rho_2\Vert\tau)>6>5.9>\overline{Q}_{1.5}(\rho_1+\rho_2\Vert\tau)
    \end{equation}
    and
    \begin{equation}
        \widehat{Q}_{1.5}(\rho_1\Vert\tau)+\widehat{Q}_{1.5}(\rho_2\Vert\tau)>9>6>\widehat{Q}_{1.5}(\rho_1+\rho_2\Vert\tau)
    \end{equation}
    which contradicts the superadditivity.
\end{example}

We conclude the subsection on the almost additive approach by proving our fundamental technical result from which all our continuity bounds following the almost additive approach will be obtained.

\begin{theorem}[Distance to a convex, compact set]\label{theo:minimial-distance-cc-set-almost-additive-approach}
    Let $\cC \subseteq \cS(\cH)$ be a convex, compact set that contains at least one positive definite state. Then the map 
    \begin{equation}
        \widetilde D_{\alpha, \cC}:\cS(\cH) \to \R, \qquad \rho \mapsto \widetilde D_{\alpha, \cC}(\rho) :=  \inf\limits_{\tau \in \cC} \widetilde D_\alpha(\rho \Vert \tau)
    \end{equation}
    is uniformly continuous (cf.\ \cite[Definition 4.18]{RudinAnalysis}) for $\alpha \in [1/2, 1) \cup (1, \infty)$. For $\rho$, $\sigma \in \cS(\cH)$ with $\tfrac{1}{2}\norm{\rho - \sigma}_1 \le \varepsilon$, $\alpha \in [1/2, 1)$ and $\kappa$ (see \Cref{rem:kappa}) such that $\sup\limits_{\rho \in \cS(\cH)} \widetilde D_{\alpha, \cC}(\rho)\leq\log(\kappa)<\infty$ we get
    \begin{equation}
        |\widetilde D_{\alpha, \cC}(\rho) - \widetilde D_{\alpha, \cC}(\sigma)| \le \log(1 + \varepsilon) + \frac{1}{1 - \alpha}\log(1 + \varepsilon^\alpha \kappa^{1 - \alpha} - \frac{\varepsilon}{(1 + \varepsilon)^{1 - \alpha}})
    \end{equation}
    Further for $\alpha \in (1, \infty)$ and $\kappa$ (see \Cref{rem:kappa}) such that $\sup\limits_{\rho \in \cS(\cH)} \widetilde D_{\alpha, \cC}(\rho)\leq\log(\kappa)<\infty$
    \begin{equation}
         |\widetilde D_{\alpha, \cC}(\rho) - \widetilde D_{\alpha, \cC}(\sigma)| \le \log(1 + \varepsilon) + \frac{1}{\alpha - 1}\log(1 + \varepsilon \kappa^{\alpha - 1} - \frac{\varepsilon^\alpha}{(1 + \varepsilon)^{\alpha - 1}}) \, . 
    \end{equation}
\end{theorem}
\begin{proof}
    We will only demonstrate here the proof for the second inequality ($\alpha>1$), as the proof for the first ($\alpha<1$) is almost completely analogous. The proof is inspired by \cite{marwah2022}. 
    Without loss of generality, we can assume that $\tfrac{1}{2}\norm{\rho - \sigma}_1 = \varepsilon$, as the bound is monotone in $\varepsilon$. We have
    \begin{equation}
        |\widetilde D_{\alpha, \cC}(\rho) - \widetilde D_{\alpha, \cC}(\sigma)| = \frac{1}{\alpha - 1} \left|\log \frac{\inf\limits_{\tau \in \cC} \alphaQ(\rho \Vert \tau)}{\inf\limits_{\tau \in \cC} \alphaQ(\sigma \Vert \tau)}\right|
    \end{equation}
    using the monotonicity of the logarithm. Hence the result reduces to an upper bound on
    \begin{equation}
        \frac{\inf\limits_{\tau \in \cC} \alphaQ(\rho \Vert \tau)}{\inf\limits_{\tau \in \cC} \alphaQ(\sigma \Vert \tau)} \quad \text{and} \quad \frac{\inf\limits_{\tau \in \cC} \alphaQ(\sigma \Vert \tau)}{\inf\limits_{\tau \in \cC} \alphaQ(\rho \Vert \tau)} \, . 
    \end{equation}
    We only upper bound the first fraction as the bound on the second is achieved by swapping the roles of $\rho$ and $\sigma$. First, there are orthonormal quantum states $\mu, \nu$ such that $\rho + \varepsilon\mu = \sigma + \varepsilon\nu$. Using the superadditivity of $\alphaQ$ (\Cref{lemma:Q-alpha-super/subadditivity}) and the $\alpha$-homogenity in the first argument gives
    \begin{equation}
        \widetilde Q_{\alpha, \cC}(\rho) + \varepsilon^\alpha \widetilde Q_{\alpha, \cC}(\mu) \le \inf\limits_{\tau \in \cC} \left(\alphaQ(\rho \Vert \tau) + \varepsilon^\alpha\alphaQ(\mu \Vert \tau)\right) \le \widetilde Q_{\alpha, \cC}(\rho + \varepsilon \mu)
    \end{equation}
    where we set $\widetilde Q_{\alpha, \cC}(X) := \inf\limits_{\tau \in \cC} \widetilde Q_\alpha(X \Vert \tau)$ for a positive semi-definite operator $X$. Joint convexity of the $\alphaQ$ and their $\alpha$-homogenity further allows us to write
    \begin{align}
        \widetilde Q_{\alpha, \cC}(\sigma + \varepsilon\nu) &= (1 + \varepsilon)^{\alpha} \inf\limits_{\tau_1, \tau_2 \in \cC} \alphaQ\Big(\tfrac{1}{1 + \varepsilon} \sigma + \tfrac{\varepsilon}{1 + \varepsilon}\nu \big\Vert \tfrac{1}{1 + \varepsilon} \tau_1 + \tfrac{\varepsilon}{1 + \varepsilon} \tau_2\Big) \\
        &\le (1 + \varepsilon)^{\alpha - 1} \left(\inf\limits_{\tau_1\in \cC} \alphaQ(\sigma \Vert \tau_1) + \inf\limits_{\tau_2 \in \cC}\varepsilon\alphaQ(\nu\Vert\tau_2)\right)\\
        &= (1 + \varepsilon)^{\alpha - 1}\left(\widetilde Q_{\alpha, \cC}(\sigma) + \varepsilon \widetilde Q_{\alpha, \cC}(\nu)\right)
    \end{align}
    where we split the infimum in the second argument into an equivalent infimum over a convex combination $\frac{1}{1 + \varepsilon} \tau_1 + \frac{\varepsilon}{1 + \varepsilon} \tau_2$ which allows us to split the infimum later. Moreover, it holds that $1 \le \widetilde Q_{\alpha, \cC}(\nu) \le \kappa^{\alpha - 1}$ for any state $\nu$, where the lower bound stems from the non-negativity of the sandwiched $\alpha$-R{\'e}nyi divergences on quantum states and the upper bound holds by assumption. Putting these estimates together, using $\rho + \varepsilon\mu = \sigma + \varepsilon\nu$, we find
    \begin{equation}
        \widetilde Q_{\alpha, \cC}(\rho) \le (1 + \varepsilon)^{\alpha - 1}\left( \widetilde Q_{\alpha, \cC}(\sigma) + \varepsilon \kappa^{\alpha - 1} - \frac{\varepsilon^\alpha}{(1 + \varepsilon)^{\alpha - 1}}\right) \, . 
    \end{equation}
    We therefore obtain
    \begin{align}
        \frac{\widetilde Q_{\alpha, \cC}(\rho)}{\widetilde Q_{\alpha, \cC}(\sigma)} \le (1 + \varepsilon)^{\alpha - 1}\left(1 + \frac{\varepsilon\kappa^{\alpha - 1} - \frac{\varepsilon^\alpha}{(1 + \varepsilon)^{\alpha - 1}}}{\widetilde Q_{\alpha, \cC}(\sigma)}\right)
        \le (1 + \varepsilon)^{\alpha - 1}\left(1 + \varepsilon\kappa^{\alpha - 1} - \frac{\varepsilon^\alpha}{(1 + \varepsilon)^{\alpha - 1}}\right)
    \end{align}
    where in the second inequality we lower bound $\widetilde Q_{\alpha, \cC}(\sigma)$ by $1$. This is valid, since for $\varepsilon \in (0, 1)$, $\varepsilon \kappa^{\alpha - 1} - \frac{\varepsilon^\alpha}{(1 + \varepsilon)^{\alpha - 1}} \ge \varepsilon - \frac{\varepsilon^\alpha}{(1 + \varepsilon)^{\alpha - 1}} \ge 0$. Repeating the procedure for the other fraction gives the same bound and hence concludes the claim. 
\end{proof}

\begin{remark}[Existence of (uniform) $\kappa$]\label{rem:kappa}
    A $\kappa < \infty$ upper bound on $\widetilde D_{\alpha, \cC}(\cdot)$ and uniform in all $\alpha \in [1/2, 1) \cup (1, \infty)$ always exists, since we can upper bound $\widetilde D_{\alpha, \cC}(\cdot) \le \log \norm{\tau^{-1}}_\infty$ independently of $\alpha$, where $\tau$ is a full-rank state in $\cC$ (which by assumption exists). For the continuity bounds the $\kappa$ can, however, also depend on $\alpha$.
\end{remark}

Before considering the limit cases, we prove that limits can be exchanged with the infimum in the above definition of $\widetilde D_{\alpha, \cC}(\cdot)$:
\begin{lemma}\label{lem:limit-change}
    Let $\cC \subseteq \cS(\cH)$ be a compact set that contains at least one positive definite state. Then the following identities hold:
    \begin{equation}
        \lim_{\alpha\rightarrow 1}\widetilde D_{\alpha, \cC}(\rho)=\inf\limits_{\tau \in \cC} \lim_{\alpha\rightarrow 1}\widetilde D_\alpha(\rho \Vert \tau)=\inf\limits_{\tau \in \cC} D(\rho \Vert \tau)\eqqcolon\widetilde D_{1, \cC}(\rho)
    \end{equation}
    and 
    \begin{equation}
        \lim_{\alpha\rightarrow \infty}\widetilde D_{\alpha, \cC}(\rho)=\inf\limits_{\tau \in \cC} \lim_{\alpha\rightarrow \infty}\widetilde D_\alpha(\rho \Vert \tau)=\inf\limits_{\tau \in \cC} D_\infty(\rho \Vert \tau)\eqqcolon\widetilde D_{\infty, \cC}(\rho)\,.
    \end{equation}
\end{lemma}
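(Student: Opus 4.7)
The plan is to combine three ingredients: the monotonicity of $\alpha \mapsto \widetilde D_\alpha(\rho \Vert \tau)$ (stated just before \eqref{eq:max-upper-bound-on-sandwiched}), the pointwise convergence in \Cref{prop:limits}, and the compactness of $\cC$. Write $f_\alpha(\tau) := \widetilde D_\alpha(\rho \Vert \tau)$; the monotonicity in $\alpha$ transfers to the infima, so $\alpha \mapsto \widetilde D_{\alpha, \cC}(\rho) = \inf_{\tau \in \cC} f_\alpha(\tau)$ is itself monotone and both one-sided limits exist in $[-\infty, +\infty]$.

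For $\alpha \to \infty$ the upper bound is immediate: for every $\tau \in \cC$ and every $\alpha$, monotonicity gives $\widetilde D_{\alpha, \cC}(\rho) \le f_\alpha(\tau) \le D_\infty(\rho \Vert \tau)$, so taking the infimum over $\tau$ yields $\widetilde D_{\alpha, \cC}(\rho) \le \widetilde D_{\infty, \cC}(\rho)$ uniformly in $\alpha$, hence also in the limit. For the matching lower bound I would select attained minimizers $\tau_\alpha^* \in \cC$ of $f_\alpha$ (existence by lower semicontinuity of $f_\alpha$ in its second argument combined with compactness of $\cC$) and extract a convergent subsequence $\tau_{\alpha_n}^* \to \tau^* \in \cC$ along some $\alpha_n \to \infty$. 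For any fixed $\beta$, monotonicity gives $f_{\alpha_n}(\tau_{\alpha_n}^*) \ge f_\beta(\tau_{\alpha_n}^*)$ once $\alpha_n \ge \beta$, and lower semicontinuity of $f_\beta$ at $\tau^*$ yields $\liminf_n f_\beta(\tau_{\alpha_n}^*) \ge f_\beta(\tau^*)$. Sending $\beta \to \infty$ and using $f_\beta(\tau^*) \uparrow D_\infty(\rho \Vert \tau^*) \ge \widetilde D_{\infty, \cC}(\rho)$ closes the inequality.

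The $\alpha \to 1$ case proceeds analogously after splitting into $\alpha \to 1^\pm$, exploiting that $\widetilde D_\alpha \le D$ for $\alpha < 1$ and $\widetilde D_\alpha \ge D$ for $\alpha > 1$. In the one-sided limit where monotonicity already gives the inequality with the correct sign (for instance $\widetilde D_{\alpha, \cC}(\rho) \ge \widetilde D_{1, \cC}(\rho)$ for $\alpha > 1$), the bound is immediate; the matching direction is handled by plugging a minimizer $\tau^*$ of $D(\rho \Vert \cdot)$ on $\cC$ (finite because of the positive-definite $\tau_0 \in \cC$) into $\widetilde D_\alpha(\rho \Vert \tau^*) \to D(\rho \Vert \tau^*)$. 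On the other side ($\alpha \to 1^-$) the subsequence/lower-semicontinuity argument of the previous paragraph applies verbatim with $\beta \to 1^-$ in place of $\beta \to \infty$.

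The hard part is verifying lower semicontinuity of $f_\beta$ on $\cC$ at a cluster point $\tau^*$ whose support fails to contain that of $\rho$: in this degenerate case $f_\beta(\tau^*) = +\infty$, and one must show that $f_\beta(\tau_{\alpha_n}^*) \to +\infty$ along the approach so the inequality chain survives. This is a standard property of the sandwiched Rényi divergence in its second argument, reflecting the fact that the pseudo-inverse inside $\widetilde Q_\beta$ develops a diverging spectrum on any direction where the second argument develops a kernel while $\rho$ does not vanish, forcing $\widetilde Q_\beta(\rho \Vert \tau_{\alpha_n}^*) \to \infty$.
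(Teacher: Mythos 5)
Your argument is correct, but it follows a genuinely different route from the paper's. The paper fixes a positive definite $\eta \in \cC$, notes $\widetilde D_{\alpha,\cC}(\rho) \le \widetilde D_\infty(\rho\Vert\eta) =: c_\infty$, restricts the infimum to the compact sublevel set $\cC_c(\rho) = \{\tau \in \cC : \widetilde D_\infty(\rho\Vert\tau) \le c_\infty\}$ on which $\tau \mapsto \widetilde D_\alpha(\rho\Vert\tau)$ is continuous and uniformly bounded, upgrades the monotone pointwise convergence in $\alpha$ to uniform convergence via Dini's theorem, and then exchanges limit and infimum by a standard lemma. You instead prove the two inequalities directly: monotonicity of $\alpha \mapsto \widetilde D_\alpha$ hands you one direction for free (and, on the $\alpha \to 1^+$ side, evaluation at a minimizer of $D(\rho\Vert\cdot)$, which is legitimate since that minimizer has finite value and hence satisfies the support condition needed for \cref{prop:limits}), while the reverse direction comes from extracting a convergent subsequence of attained minimizers and invoking lower semicontinuity of $\tau \mapsto \widetilde D_\beta(\rho\Vert\tau)$ at the cluster point. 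Both proofs rest on the same ingredients (monotonicity in $\alpha$, lower semicontinuity in the second argument, compactness, pointwise limits), so neither is more elementary; the trade-off is that the paper's sublevel-set restriction guarantees every relevant $\tau$ satisfies $\widetilde D_\infty(\rho\Vert\tau) \le c_\infty$ and thereby sidesteps the degenerate cluster points you worry about in your last paragraph, whereas your route must use extended-real-valued lower semicontinuity there — though, as you could note, in the $\alpha \to \infty$ and $\alpha \to 1^+$ cases the uniform bound $\widetilde D_{\alpha,\cC}(\rho) \le c_\infty$ combined with that very semicontinuity inequality already rules out a degenerate limit point, and in the $\alpha \to 1^-$ case a degenerate $\tau^*$ only makes the lower bound $D(\rho\Vert\tau^*) = +\infty \ge \widetilde D_{1,\cC}(\rho)$ trivially true (provided one accepts the pointwise convergence $\widetilde D_\beta(\rho\Vert\tau^*) \to +\infty$ in that case). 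In exchange, your argument dispenses with Dini's theorem and the limit–infimum exchange lemma entirely.
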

\begin{proof}
Let $\eta\in\cC$ be positive definite, which exists by assumption. We note first that the infimum $\inf_{\tau \in \mathcal C}\widetilde D_\alpha(\rho \Vert \tau)$ is attained \cite[Theorem 2.43]{Aliprantis2006} since $\mathcal C$ is compact and $\sigma \mapsto \widetilde D(\rho\Vert\sigma)$ is lower semi-continuous for any fixed $\rho \in \cS(\cH)$ \cite[Proposition 4.5]{Hiai2022}.
Then, 
\begin{equation}
    \widetilde D_\infty(\rho \Vert \eta) =: c_\infty < \infty.
\end{equation}
Moreover, 
\begin{equation}
    \inf_{\tau \in \mathcal C}\widetilde D_\alpha(\rho \Vert \tau) \leq \widetilde D_\alpha(\rho \Vert \eta) \leq c_\infty
\end{equation}
for all $\alpha\in(1,\infty]$, since the sandwiched Rényi divergences are monotonically increasing in $\alpha$ \cite[Corollary 4.2]{Tomamichel_2016}. Next, we define $\cC(\rho)=\{\tau\in\cC\,|\,\widetilde D_\infty(\rho \Vert \tau)\leq c_\infty\}$ which satisfies by the above
    \begin{equation}
        \inf\limits_{\tau \in \cC} \widetilde D_\alpha(\rho \Vert \tau)=\inf\limits_{\tau \in \cC(\rho)} \widetilde D_\alpha(\rho \Vert \tau)
    \end{equation}
for all $\alpha\in(1,\infty]$. The set $\cC (\rho)$ is compact, because the preimage of $(-\infty, c_\infty]$ under a lower semi-continuous function is closed. Hence, also the infimum on the right hand side is attained. Moreover, the function $\cC (\rho)\ni\tau\mapsto D_\alpha(\rho \Vert \tau)$ is continuous for all $\alpha\in(1,\infty]$. Therefore, Dini's theorem \cite[Theorem 2.66]{Aliprantis2006} shows that $\cC (\rho) \ni\tau\mapsto D_\alpha(\rho \Vert \tau)$ converges uniformly for $\alpha \to 1$, $\alpha \to \infty$, since we have pointwise convergence by \cite[Section 4.3.2]{Tomamichel_2016} (see also \Cref{prop:limits}). Thus, the assertion follows from \cite[Lemma I.7.6]{dunford1958linear}.

\end{proof}

\begin{lemma}[Limits]\label{lem:limit-almost-additive}
    Let $\rho, \sigma \in \cS(\cH)$ with $\frac{1}{2}\norm{\rho - \sigma}_1 \le \varepsilon$ and $\kappa$ a bound on $\widetilde D_{\alpha, \cC}$ independent of $\alpha$ (see Remark \ref{rem:kappa}), then the limit $\alpha \to 1$ of the bounds obtained in \Cref{theo:minimial-distance-cc-set-almost-additive-approach} gives 
    \begin{equation}
        |\widetilde D_{1, \cC}(\rho) - \widetilde D_{1, \cC}(\sigma)| \le \varepsilon\log \kappa \,  + (1 + \varepsilon) h\Big(\frac{\varepsilon}{1 + \varepsilon}\Big)
    \end{equation}
    where $h(\cdot)$ is the binary entropy. Unless $\varepsilon$ is trivial, i.e.~$\varepsilon = 0$, we find for $\alpha \to \infty$
    \begin{equation}
        |\widetilde D_{\infty, \cC}(\rho) - \widetilde D_{\infty, \cC}(\sigma)| \le \log(1 + \varepsilon) + \log \kappa
    \end{equation}
    which is no longer a continuity bound
\end{lemma}
\begin{proof}
    Using l'Hospital's rule, we find that
    \begin{align}
        &\lim_{\alpha\,\nearrow\,1}\left[\frac{1}{1-\alpha}\log(1 + \varepsilon^\alpha \kappa^{(1-\alpha)}- \frac{\varepsilon}{(1 + \varepsilon)^{1-\alpha}}) \right]\\
        &=-\lim_{\alpha\,\nearrow\,1}\left[\frac{\log(\varepsilon)\varepsilon^\alpha \kappa^{(1-\alpha)}-\varepsilon^\alpha \kappa^{(1-\alpha)} \log{\kappa}-\log(1+\varepsilon)\frac{\varepsilon}{(1 + \varepsilon)^{1-\alpha}}}{1 + \varepsilon^\alpha \kappa^{(1-\alpha)}- \frac{\varepsilon}{(1 + \varepsilon)^{1-\alpha}}} \right] \\
        &=-\varepsilon\log(\varepsilon)+ \varepsilon\log{\kappa} + \varepsilon\log(1 + \varepsilon)
    \end{align}
    and similarly
    \begin{align}
        &\lim_{\alpha\,\searrow\,1}\left[\frac{1}{\alpha - 1}\log(1 + \varepsilon \kappa^{(\alpha - 1)}- \frac{\varepsilon^\alpha}{(1 + \varepsilon)^{\alpha - 1}}) \right]\\
        &= \lim_{\alpha\,\searrow\,1}\left[\frac{\varepsilon \kappa^{(\alpha - 1)} \log{\kappa} - \log(\varepsilon)\frac{\varepsilon^\alpha}{(1 + \varepsilon)^{\alpha - 1}}+\log(1+\varepsilon)\frac{\varepsilon^\alpha}{(1 + \varepsilon)^{\alpha - 1}}}{1 + \varepsilon \kappa^{(\alpha - 1)}- \frac{\varepsilon^\alpha}{(1 + \varepsilon)^{\alpha - 1}}} \right] \\
        &= \varepsilon\log{\kappa} - \varepsilon\log(\varepsilon) + \varepsilon \log(1+\varepsilon)
    \end{align}
    which proves the limit $\alpha\rightarrow1$. Another use of l'Hospital's rule shows that 
    \begin{equation}
        \lim_{\alpha \to \infty}\left[\frac{1}{\alpha - 1}\log(1 + \varepsilon \kappa^{(\alpha - 1)}- \frac{\varepsilon^\alpha}{(1 + \varepsilon)^{\alpha - 1}}) \right]= \log{\kappa}.
    \end{equation}
\end{proof}

\subsection{Operator spaces approach}
In this section, we will construct a family of norms inspired by the norms on interpolation spaces over von Neumann algebras defined in \cite[Theorem 4.5]{Beigi2022}. These norms have an explicit characterization given by a supremum of amalgamations with elements from a different von Neumann subalgebra in a Schatten p-norm. In this section, we show that in the finite-dimensional setting, this construction is not limited to amalgamation with a von Neumann subalgebra, but can be generalised to a convex compact set of positive semi-definite operators, which contains at least one positive definite state. We further give a direct proof of a triangle-like inequality for a map arising from these norms, previously only shown in an abstract and more restrictive setting. To be more precise, our goal is to show that for positive semi-definite operators the map 
\begin{equation}
    X \mapsto \inf\limits_{c \in \cC, c > 0} \norm{c^{-\frac{1}{2r}} X c^{-\frac{1}{2r}}}_p
\end{equation}
satisfies a triangle inequality and some other properties that come in handy when proving continuity bounds later. Here $r$ is implicitly constrained by the explicit $p$ and an implicit $q$ satisfying $p \ge q \ge 1$. The choice to not use $r$ as a defining parameter will become clear later when we establish duality relations and $p$ and $q$ transform to their Hölder conjugates while $r$ is unaffected.  $\cC$ is a convex, compact set of positive semi-definite operators containing at least one positive definite element.

We begin with the construction of the auxiliary norms.

\begin{definition}[The $\cC, p, q$ norm]\label{def:Kpr-norm}
    Let $\cC \subset \cB_{\ge 0}(\cH)$ be a convex, compact set containing at least one positive definite state. Then for $1 \le p \le q \le \infty$, $\frac{1}{r} := \frac{1}{p} - \frac{1}{q}$, we define
    \begin{equation}
       \norm{\cdot}_{\cC, p, q}:\cB(\cH) \to [0, \infty), \quad  X \mapsto \norm{X}_{\cC, p, q} := \sup\limits_{c \in \cC} \norm{c^{\frac{1}{2r}} X c^{\frac{1}{2r}}}_{p}
    \end{equation}
    where $\norm{\cdot}_p = (\tr[|\cdot|^p])^{\frac{1}{p}}$ is the Schatten-$p$-norm. 
\end{definition}

\begin{lemma}
    $\norm{\cdot}_{\cC, p, q}:\cB(\cH) \to [0, \infty)$ defines a norm on $\cB(\cH)$.
\end{lemma}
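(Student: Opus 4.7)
My plan is to verify the three standard norm axioms in the order finiteness/well-definedness, absolute homogeneity, triangle inequality, and finally definiteness, reserving the last for the only step in which the structural hypotheses on $\cC$ actually enter.

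First, I would observe that for fixed $X \in \cB(\cH)$ the function $c \mapsto \norm{c^{1/(2r)} X c^{1/(2r)}}_p$ is continuous on $\cB_{\ge 0}(\cH)$: the power $c \mapsto c^{1/(2r)}$ is continuous on the positive cone by functional calculus, the Schatten $p$-norm is continuous on $\cB(\cH)$, and the two compose. Since $\cC$ is compact, the supremum is attained and finite, so $\norm{\cdot}_{\cC,p,q}$ takes values in $[0,\infty)$. The boundary case $p=q$ (so $1/r=0$) would be handled by interpreting $c^{1/(2r)}$ as the identity on the support of $c$; the definition then essentially reduces to $\norm{X}_p$ and the lemma is immediate, so I would henceforth focus on $r < \infty$.

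Absolute homogeneity is immediate from the corresponding property of the Schatten norm, applied pointwise in $c$ before the supremum. For the triangle inequality I would fix $c \in \cC$, apply the Schatten triangle inequality to $c^{1/(2r)}(X+Y)c^{1/(2r)}$, bound the two resulting summands by $\norm{X}_{\cC,p,q}$ and $\norm{Y}_{\cC,p,q}$ respectively, and then take the supremum over $c \in \cC$ on the left-hand side.

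The only axiom in which I expect any subtlety is definiteness, and it is the sole place where the existence of a positive definite $c_0 \in \cC$ is used. The implication $X = 0 \Rightarrow \norm{X}_{\cC,p,q} = 0$ is trivial. For the converse, if $\norm{X}_{\cC,p,q} = 0$ then in particular $c_0^{1/(2r)} X c_0^{1/(2r)} = 0$, and since $c_0^{1/(2r)}$ is invertible (being a strictly positive power of a strictly positive operator), one concludes $X = 0$. I do not anticipate any genuine obstacle: the construction is tailor-made to inherit the Schatten norm's properties, and the compactness of $\cC$ together with the presence of a positive definite element are precisely the structural hypotheses needed to close the argument; without the latter, the map would at best be a seminorm.
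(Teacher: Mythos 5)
Your proposal is correct and follows essentially the same route as the paper: finiteness from compactness of $\cC$ and continuity of $c \mapsto c^{1/(2r)} X c^{1/(2r)}$, homogeneity and the triangle inequality inherited from the Schatten $p$-norm together with subadditivity of the supremum, and definiteness from the positive definite element of $\cC$. The only difference is that you spell out the argument (including the degenerate case $p=q$) in more detail than the paper's brief proof.
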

\begin{proof}
    The map is finite for all $X \in \cB(\cH)$, because $\cC$ is compact and $c \mapsto c^{\frac{1}{2r}}X c^{\frac{1}{2r}}$ continuous on positive semi-definite matrices. Further, it is positive definite since $\cC$ contains a positive-definite state, by assumption. Finally, it satisfies positive homogeneity and triangle inequality because the Schatten-p-norm has this property and the supremum is subadditive.
\end{proof}

We will proceed to define the maps we are interested in:
\begin{definition}\label{def:Kp'r-maps}
    Let $\cC \subset \cB_{\ge 0}(\cH)$ a convex, compact set containing at least one positive definite state. Then for $1 \le q' \le p' \le \infty$ and $\frac{1}{r} = \frac{1}{q'} - \frac{1}{p'}$, we define
    \begin{equation}
       \norm{\cdot}_{\cC, p', q'}^*:\cB(\cH) \to [0, \infty), \quad  X \mapsto \norm{X}_{\cC, p', q'}^* := \inf\limits_{c \in \cC, c > 0} \norm{c^{-\frac{1}{2r}} X c^{-\frac{1}{2r}}}_{p'}
    \end{equation}
    where $X \in \cB(\cH)$ and $\norm{\cdot}_{p'} = \left(\tr[|\cdot|^{p'}]\right)^{\frac{1}{p'}}$. 
\end{definition}

\begin{remark}
    For the above maps, the norm properties are no longer apparent. We get that the map is finite, positive homogenous, and further positive definite (requires a bit more work but relates back to the compactness of $\cC$). However, the triangle inequality we can only prove for positive semi-definite operators. 
\end{remark}

We will now show that on the positive semi-definite operators $\norm{\cdot}_{\cC, p', q'}^*$ satisfies triangle inequality by showing that $\norm{\cdot}_{\cC, p', q'}^*$ agrees with the dual norm of $\norm{\cdot}_{\cC, p, q}$ on this set. Here $p$ is given by $1 = \frac{1}{p'} + \frac{1}{p}$ and $q$ by $1 = \frac{1}{q} + \frac{1}{q'}$. Note that the latter norm is indeed consistent with \Cref{def:Kpr-norm} (as choosing $q$ via $1 = \frac{1}{q} + \frac{1}{q'}$ leaves $r$ invariant and satisfies the requirements, i.e.~$1 \le p \le q \le \infty$). As a first step, we derive a Hölder-inequality for these maps.

\begin{lemma}[Hölder-inequality]\label{lem:hoelder-inequality}
    Let $\norm{\cdot}_{\cC, p, q}$ as defined in \Cref{def:Kpr-norm} be given, then for all $X, Y \in \cB(\cH)$
    \begin{equation}
        |\tr[X Y]| \le \norm{X}_{\cC, p, q} \norm{Y}_{\cC, p', q'}^*
    \end{equation}
    where $\norm{\cdot}_{\cC, p', q'}^*$ as defined in \Cref{def:Kp'r-maps} with $p'$ given by $\frac{1}{p} + \frac{1}{p'} = 1$ and $q'$ via $\frac{1}{q} + \frac{1}{q'} = 1$.
\end{lemma}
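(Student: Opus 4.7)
The plan is to derive this from the classical Hölder inequality for Schatten norms by ``sandwiching'' an identity of the form $c^{1/(2r)} c^{-1/(2r)}$ inside the trace and then optimising appropriately over $c \in \cC$. Concretely, I would first observe that the pairing of exponents is consistent: writing $1/q' = 1 - 1/q$ and $1/p' = 1 - 1/p$, one checks
\[
\tfrac{1}{q'} - \tfrac{1}{p'} = \bigl(1-\tfrac{1}{q}\bigr) - \bigl(1-\tfrac{1}{p}\bigr) = \tfrac{1}{p} - \tfrac{1}{q} = \tfrac{1}{r},
\]
so the exponent $r$ appearing in $\norm{\cdot}_{\cC, p, q}$ agrees with the one in $\norm{\cdot}_{\cC, p', q'}^*$, and both quantities are well-defined with the same parameter.

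Next, fix any positive definite $c \in \cC$. Using the cyclicity of the trace, I would rewrite
\[
\tr[XY] = \tr\bigl[ \bigl(c^{\frac{1}{2r}} X c^{\frac{1}{2r}}\bigr)\bigl(c^{-\frac{1}{2r}} Y c^{-\frac{1}{2r}}\bigr)\bigr],
\]
and then apply the standard Schatten--Hölder inequality $|\tr[AB]| \le \norm{A}_p \norm{B}_{p'}$ (valid since $1/p + 1/p' = 1$) to conclude
\[
|\tr[XY]| \le \bigl\lVert c^{\frac{1}{2r}} X c^{\frac{1}{2r}}\bigr\rVert_p \, \bigl\lVert c^{-\frac{1}{2r}} Y c^{-\frac{1}{2r}}\bigr\rVert_{p'}.
\]

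The final step is to decouple the two factors by separate optimisation over $\cC$. Since the left-hand side does not depend on $c$, and the first factor on the right is upper bounded by $\sup_{c' \in \cC} \lVert c'^{1/(2r)} X c'^{1/(2r)} \rVert_p = \norm{X}_{\cC, p, q}$, I obtain for every positive definite $c \in \cC$
\[
|\tr[XY]| \le \norm{X}_{\cC, p, q} \, \bigl\lVert c^{-\frac{1}{2r}} Y c^{-\frac{1}{2r}} \bigr\rVert_{p'}.
\]
Taking the infimum over all such $c$ on the right yields the claimed inequality.

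I do not anticipate any genuine obstacles: this is essentially the textbook derivation of weighted Hölder inequalities, and the main thing to check carefully is the consistency of the exponents and that the factorisation of the trace uses only commutativity under the trace, not commutativity of $X$ and $c$. The assumption that $\cC$ contains a positive definite element ensures that the infimum in $\norm{Y}_{\cC, p', q'}^*$ is taken over a non-empty set, so the bound is meaningful.
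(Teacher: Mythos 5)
Your proof is correct and follows essentially the same route as the paper's: insert $c^{\frac{1}{2r}}c^{-\frac{1}{2r}}$ inside the trace, apply the Schatten--H\"older inequality, bound the first factor by $\norm{X}_{\cC, p, q}$, and take the infimum over positive definite $c \in \cC$ in the second factor. Your preliminary check that the exponent $r$ is the same for both norms is a nice addition that the paper only notes in passing before the lemma.
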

\begin{proof}
    For $c \in \cC$ with $c > 0$, using Hölder inequality on Schatten-$p$-norms, we have
    \begin{align}
        |\tr[XY]| &= \left|\tr[c^{\frac{1}{2r}} X c^{\frac{1}{2r}} c^{-\frac{1}{2r}} Y c^{-\frac{1}{2r}}]\right|\\
        &\le \norm{c^\frac{1}{2r} X c^{\frac{1}{2r}}}_p \norm{c^{-\frac{1}{2r}} Y c^{-\frac{1}{2r}}}_{p'} \, . 
    \end{align}
    Now using that $\norm{c^\frac{1}{2r} X c^{\frac{1}{2r}}}_p \le \norm{X}_{\cC, p, q}$ we get
    \begin{equation}
        |\tr[XY]| \le \norm{X}_{\cC, p, q} \norm{c^{-\frac{1}{2r}} Y c^{-\frac{1}{2r}}}_{p'} \, . 
    \end{equation}
    Finally taking the infimum over all $c \in \cC$, $c > 0$ of the above inequality
    \begin{equation}
         |\tr[XY]| \le \norm{X}_{\cC, p, q} \norm{Y}_{\cC, p', q'}^* \, . 
    \end{equation}
\end{proof}

We can now characterise $\norm{\cdot}_{\cC, p, q}$ via $\norm{\cdot}_{\cC, p', q'}^*$ as follows.

\begin{lemma}
     Let $\norm{\cdot}_{\cC, p, q}$ as defined in \Cref{def:Kpr-norm} be given. Then we obtain
     \begin{equation}
         \sup\limits_{Y \in \cB(\cH), \,\norm{Y}_{\cC, p', q'}^* \le 1} |\tr[X Y]| = \norm{X}_{\cC, p, q}
     \end{equation}
     where $\norm{\cdot}_{\cC, p', q'}^*$ is the map from \Cref{def:Kp'r-maps} with $p'$ defined via $\frac{1}{p} + \frac{1}{p'} = 1$ and $q'$ via $\frac{1}{q} + \frac{1}{q'} = 1$.
\end{lemma}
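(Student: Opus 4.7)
The plan is to prove the two inequalities separately, using the Hölder-type inequality from the previous lemma for the upper bound ``$\le$'' and a duality argument for Schatten norms combined with an approximation by strictly positive elements of $\cC$ for the lower bound ``$\ge$''.

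For the direction $\sup_{\|Y\|_{\cC, p', q'}^* \le 1}|\tr[XY]| \le \|X\|_{\cC, p, q}$, I would simply invoke \Cref{lem:hoelder-inequality}: for any $Y$ with $\|Y\|_{\cC,p',q'}^* \le 1$, it gives $|\tr[XY]| \le \|X\|_{\cC,p,q} \|Y\|_{\cC,p',q'}^* \le \|X\|_{\cC,p,q}$. Taking the supremum over such $Y$ yields the claim. This direction is immediate.

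For the reverse inequality, I would first use that $\cC$ is compact and $c \mapsto \|c^{\frac{1}{2r}} X c^{\frac{1}{2r}}\|_p$ is continuous, so that the supremum in the definition of $\|X\|_{\cC,p,q}$ is attained at some $c^* \in \cC$. Setting $W := c^{*\frac{1}{2r}} X c^{*\frac{1}{2r}}$, the standard Schatten-norm duality $\|W\|_p = \sup_{\|Z\|_{p'}\le 1}|\tr[WZ]|$ gives an operator $Z$ with $\|Z\|_{p'} \le 1$ and $\tr[WZ] = \|W\|_p = \|X\|_{\cC,p,q}$. I would then define the witness $Y := c^{*\frac{1}{2r}} Z c^{*\frac{1}{2r}}$, whereupon the cyclicity of the trace gives $\tr[XY] = \tr[WZ] = \|X\|_{\cC,p,q}$, and plugging $c = c^*$ into the definition of $\|\cdot\|_{\cC, p', q'}^*$ yields $\|Y\|_{\cC,p',q'}^* \le \|Z\|_{p'} \le 1$.

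The main subtlety is that the infimum defining $\|\cdot\|_{\cC, p', q'}^*$ ranges only over strictly positive $c \in \cC$, so the construction above requires $c^* > 0$. To handle the case where the maximizer $c^*$ is only positive semi-definite, I would fix a positive definite $c_0 \in \cC$ (which exists by assumption) and regularize via $c_\varepsilon := (1-\varepsilon) c^* + \varepsilon c_0 \in \cC$, which is strictly positive for $\varepsilon \in (0,1]$. Running the above construction with $c_\varepsilon$ in place of $c^*$ produces, for each $\varepsilon$, a witness $Y_\varepsilon$ satisfying $\|Y_\varepsilon\|_{\cC,p',q'}^* \le 1$ and $\tr[XY_\varepsilon] = \|c_\varepsilon^{\frac{1}{2r}} X c_\varepsilon^{\frac{1}{2r}}\|_p$. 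Since $c \mapsto \|c^{\frac{1}{2r}} X c^{\frac{1}{2r}}\|_p$ is continuous, letting $\varepsilon \to 0$ gives $\tr[XY_\varepsilon] \to \|X\|_{\cC,p,q}$, which establishes $\sup_{\|Y\|_{\cC,p',q'}^* \le 1}|\tr[XY]| \ge \|X\|_{\cC,p,q}$ and completes the proof.
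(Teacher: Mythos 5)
Your proposal is correct and follows essentially the same route as the paper: the ``$\le$'' direction via the Hölder-type inequality of \cref{lem:hoelder-inequality}, and the ``$\ge$'' direction via Schatten-norm duality applied to $c^{\frac{1}{2r}} X c^{\frac{1}{2r}}$ for a (regularized) maximizer $c_\varepsilon = (1-\varepsilon)c^* + \varepsilon c_0 \in \cC$, using that $\|c_\varepsilon^{-\frac{1}{2r}} Y c_\varepsilon^{-\frac{1}{2r}}\|_{p'} \le 1$ forces $\|Y\|_{\cC,p',q'}^* \le 1$. The only cosmetic difference is that you extract an explicit witness $Y_\varepsilon$ whereas the paper performs the same change of variables inside the supremum; the regularization handling the possibly singular maximizer is identical.
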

\begin{proof}
    Let $X \in \cB(\cH)$ be given. Due to the continuity of $c \mapsto c^{\frac{1}{2r}} X c^{\frac{1}{2r}}$ on $\cC$ and compactness of $\cC$, there exists an $c_* \in \cC$ such that 
    \begin{equation}
        \norm{X}_{\cC, p, q} = \norm{c_*^{\frac{1}{2r}} X c_*^{\frac{1}{2r}}}_p
    \end{equation}
    Due to the convexity, the existence of a positive definite state $\tau$ in $\cC$ and the continuity of $c \mapsto c^{\frac{1}{2r}} X c^{\frac{1}{2r}}$, we have that for all $\varepsilon > 0$ there exists a $\delta \in [0, 1]$ such that $c_\delta = (1 - \delta) c_* + \delta \tau \in \cC$, $c_\delta > 0$ and 
    \begin{equation}
        \norm{X}_{\cC, p, q} \le \norm{c_\delta^{\frac{1}{2r}} X c_\delta^{\frac{1}{2r}}}_p + \varepsilon \, . 
    \end{equation}
    Regarding $\norm{\cdot}_{p'}$ as dual norm to $\norm{\cdot}_p$ (see e.g.~\cite[Lemma 3.3]{Tomamichel_2016}) where $1 = \frac{1}{p'} + \frac{1}{p}$, we find
    \begin{equation}
        \begin{aligned}
            \norm{X}_{\cC, p, q} &\le \sup\limits_{W \in \cB(\cH), \norm{W}_{p'} \le 1} \left|\tr[c_\delta^{\frac{1}{2r}} X c_\delta^{\frac{1}{2r}} W]\right| + \varepsilon\\
            &= \sup\limits_{Y \in \cB(\cH), \Vert c_\delta^{-\frac{1}{2r}} Y c_\delta^{-\frac{1}{2r}}\Vert_{p'} \le 1} |\tr[X Y]| + \varepsilon\\
            &\le \sup\limits_{Y \in \cB(\cH), \norm{Y}_{\cC, p', q'}^* \le 1} |\tr[X Y]| + \varepsilon \, ,
        \end{aligned}
    \end{equation}
    where the last inequality holds, since $\Vert c_\delta^{-\frac{1}{2r}} Y c_\delta^{-\frac{1}{2r}}\Vert_{p'} \le 1$ implies $\norm{Y}_{\cC, p', q'}^* \le 1$. Now taking $\varepsilon \to 0$ gives 
    \begin{equation}
         \norm{X}_{\cC, p, q} \le \sup\limits_{Y \in \cB(\cH), \norm{Y}_{\cC, p', q'}^* \le 1} |\tr[X Y]| \, . 
    \end{equation}
    The reverse inequality is a direct consequence of \Cref{lem:hoelder-inequality}. We, hence, conclude the claim.
\end{proof} 

Now we come to the theorem which is at the heart of our continuity bounds. 
\begin{theorem}[A dual formula for $\norm{\cdot}_{\cC, p', q'}^*$]\label{theo:dual-formula}
    Let $\norm{\cdot}_{\cC, p', q'}^*$ be as defined in \Cref{def:Kp'r-maps} and $X \ge 0$. Then
    \begin{equation}
        \sup\limits_{Y \in \cB_{\ge 0}(\cH), \norm{Y}_{\cC, p, q} \le 1} |\tr[X Y]| = \norm{X}_{\cC, p', q'}^* \, ,
    \end{equation}
    for $p$ given by $\frac{1}{p} + \frac{1}{p'} = 1$, $q'$ via $\frac{1}{q} + \frac{1}{q'} = 1$ and $\norm{Y}_{\cC, p, q}$ as in \Cref{def:Kpr-norm}. That is, on positive semi-definite states, $\norm{\cdot}_{\cC, p', q'}^*$ agrees with the dual norm of $\norm{\cdot}_{\cC, p, q}$.
\end{theorem}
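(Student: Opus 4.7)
The plan is to prove the two inequalities separately. The $\le$ direction is immediate from the H\"older-type inequality of the preceding lemma: for every $Y \ge 0$ with $\norm{Y}_{\cC, p, q} \le 1$,
\[
|\tr[XY]| \le \norm{X}_{\cC, p', q'}^* \cdot \norm{Y}_{\cC, p, q} \le \norm{X}_{\cC, p', q'}^*,
\]
so taking the supremum over such $Y$ yields the upper bound.

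For the reverse inequality, I would first rewrite the right-hand side using Schatten duality restricted to the positive cone. Since $c^{-1/(2r)} X c^{-1/(2r)} \ge 0$,
\[
\norm{c^{-1/(2r)} X c^{-1/(2r)}}_{p'} = \sup_{Z \ge 0,\, \norm{Z}_p \le 1} \tr\bigl[X\, c^{-1/(2r)} Z c^{-1/(2r)}\bigr],
\]
and therefore
\[
\norm{X}_{\cC, p', q'}^* = \inf_{c \in \cC,\, c > 0}\ \sup_{Z \ge 0,\, \norm{Z}_p \le 1}\ \tr\bigl[X\, c^{-1/(2r)} Z c^{-1/(2r)}\bigr].
\]
Under the substitution $Y := c^{-1/(2r)} Z c^{-1/(2r)} \ge 0$, the constraint $\norm{Z}_p \le 1$ at a fixed $c$ becomes $\norm{c^{1/(2r)} Y c^{1/(2r)}}_p \le 1$, and imposing it uniformly over $c \in \cC$ recovers the defining constraint $\norm{Y}_{\cC, p, q} \le 1$. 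The plan is thus to exchange $\inf_c$ and $\sup_Z$ and identify the result with $\sup_{Y \ge 0,\, \norm{Y}_{\cC, p, q} \le 1} \tr[XY]$.

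The main obstacle is exactly this minimax exchange. A direct application of Sion's theorem fails: $\phi(c, Z) := \tr\bigl[X\, c^{-1/(2r)} Z c^{-1/(2r)}\bigr]$ is in general not quasi-convex in $c$. For instance, with $X = (e_1 - e_2)(e_1 - e_2)^\ast$, $Z = (e_1 + e_2)(e_1 + e_2)^\ast$, $r = 1$, and $c = \operatorname{diag}(c_1, c_2)$, one computes $\phi(c, Z) = (c_1^{-1/2} - c_2^{-1/2})^2$, whose sublevel sets are non-convex. My remedy is to argue constructively. By lower semicontinuity of $c \mapsto \norm{c^{-1/(2r)} X c^{-1/(2r)}}_{p'}$ on $\cC$ (extending via the pseudo-inverse near the boundary) and compactness of $\cC$, the infimum is attained at some $c_\ast$. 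Choose the Schatten dual witness $Z_\ast := A^{p'-1}/\norm{A}_{p'}^{p'-1}$ where $A := c_\ast^{-1/(2r)} X c_\ast^{-1/(2r)}$, and set $Y_\ast := c_\ast^{-1/(2r)} Z_\ast c_\ast^{-1/(2r)} \ge 0$. Then $\tr[X Y_\ast] = \norm{X}_{\cC, p', q'}^*$ by construction, and the crux of the proof is to verify $\norm{Y_\ast}_{\cC, p, q} \le 1$. For this I would exploit the first-order optimality condition at $c_\ast$ applied to the convex function $c \mapsto \norm{c^{-1/(2r)} X c^{-1/(2r)}}_{p'}^{p'} = \tr\bigl[(X^{1/2} c^{-1/r} X^{1/2})^{p'}\bigr]$ (convex in $c$ by operator convexity of $c \mapsto c^{-1/r}$ for $r \ge 1$ together with monotone convexity of $A \mapsto \tr[A^{p'}]$ on the PSD cone for $p' \ge 1$), differentiating via the Daletskii--Krein formula to convert the KKT inequality into the required uniform bound $\norm{c^{1/(2r)} Y_\ast c^{1/(2r)}}_p \le 1$ for every $c \in \cC$.
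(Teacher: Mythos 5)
Your easy direction is fine and matches the paper. For the hard direction you take a genuinely different route from the paper: the paper does \emph{not} apply Sion to the bilinear functional $\phi(c,Z)$ you (correctly) observe is not quasi-convex in $c$; it applies Sion to the Young-type Lagrangian $f_X(Y,c)=p'\tr[XY]-\tfrac{p'}{p}\tr[(c^{1/(2r)}Yc^{1/(2r)})^p]$, whose convexity in $c$ is supplied by Lieb's concavity theorem applied to $c\mapsto\tr[(\sqrt{Y}\,c^{1/r}\sqrt{Y})^p]$ (using $\tfrac1{p'}+\tfrac1r=\tfrac1{q'}\le1$), and then recovers the duality by optimizing a scaling parameter. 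So the minimax exchange you declare impossible is in fact the paper's proof, just on a regularized functional.

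There is a concrete gap in your constructive alternative. The KKT condition at $c_\ast$ is a first-order, linear-in-$(c-c_\ast)$ inequality, whereas the claim you need, $\norm{c^{1/(2r)}Y_\ast c^{1/(2r)}}_p\le1$ for \emph{every} $c\in\cC$, is a global statement about the nonlinear function $h(c):=\tr[(\sqrt{Y_\ast}\,c^{1/r}\sqrt{Y_\ast})^p]$. To pass from one to the other you need two unstated ingredients: (i) the identity $Dh(c_\ast)[\cdot]=-\tfrac{p}{p'\norm{A}_{p'}^{p'}}Dg(c_\ast)[\cdot]$ (which does hold, via $D(c^{-1/(2r)})=-c_\ast^{-1/(2r)}D(c^{1/(2r)})c_\ast^{-1/(2r)}$ and $X c_\ast^{-1/(2r)}A^{p'-1}=c_\ast^{1/(2r)}A^{p'}$), so that optimality of $c_\ast$ gives $Dh(c_\ast)[c-c_\ast]\le0$; and (ii) \emph{concavity} of $h$ on $\cC$, so that $h(c)\le h(c_\ast)+Dh(c_\ast)[c-c_\ast]\le1$. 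Ingredient (ii) is exactly the Lieb concavity statement the paper uses, and it is the crux; the convexity of $g$ that you do invoke gives you the KKT inequality but cannot by itself produce a uniform bound on $h$ over all of $\cC$. Your sketch never names this step, so as written the argument does not close. Separately, the differentiation at $c_\ast$ presupposes $c_\ast>0$, which compactness alone does not guarantee; if the minimizer sits on the boundary of the positive cone, $c\mapsto c^{-1/(2r)}$ is not differentiable there, and you would need the paper's device of perturbing towards the positive definite state $\tau\in\cC$ and passing to a limit before any Daletskii--Krein computation is legitimate.
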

\begin{proof}
    Let $X \in \cB_{\ge 0}(\cH)$. Then we immediately get that 
    \begin{equation}
         \sup\limits_{Y \in \cB_{\ge 0}(\cH), \norm{Y}_{\cC, p, q} \le 1} |\tr[X Y]| \le \norm{X}_{\cC, p', q'}^*
    \end{equation}
    by \Cref{lem:hoelder-inequality}. This inequality furthermore holds for $X \in \cB(\cH)$. To prove the reverse inequality, we define the auxiliary function
    \begin{equation}
        \begin{aligned}
            f_X:\cB_{\ge 0}(\cH) \times \cC &\to \R, \\
            (Y, c) &\mapsto f_X(Y, c) = p' \tr[X Y] - \frac{p'}{p} \tr[(c^{\frac{1}{2r}} Y c^{\frac{1}{2r}})^p] \, ,
        \end{aligned}
    \end{equation}
    where $p$ is given by $\frac{1}{p} + \frac{1}{p'} = 1$. Note that the function is lower semi-continuous in $c$ and upper semi-continuous in $Y$ (as it is continuous in both of its arguments). In addition, we will now show, that it is concave in $Y$ and convex in $c$. The concavity in $Y$ is straightforward. We have that $Y \mapsto \tr[YX]$ is linear in $Y$, hence in particular concave. Further for a fixed $c$, we have that $Y \mapsto -\frac{p'}{p}\tr[(c^{\frac{1}{2r}} Y c^{\frac{1}{2r}})^p]$ is concave, since $x \mapsto - x^p$ for $p \ge 1$ is. To show convexity in $c$, we have to show that for a fixed $Y$, $c \mapsto \tr[(c^{\frac{1}{2r}} Y c^{\frac{1}{2r}})^p]$ is concave. We first rewrite
    \begin{equation}
        \tr[(c^{\frac{1}{2r}} Y c^{\frac{1}{2r}})^p] = \tr[(\sqrt{Y} c^{\frac{1}{r}} \sqrt{Y})^p]
    \end{equation}
    and then use \cite[Eq. (3.16)]{Tomamichel_2016} to obtain
    \begin{equation}
        \tr[(\sqrt{Y} c^{\frac{1}{r}} \sqrt{Y})^p] = \sup\limits_{Z \ge 0} p \tr[\sqrt{Y} c^{\frac{1}{r}} \sqrt{Y} Z] - \frac{p}{p'}\tr[Z^{p'}] = \sup\limits_{Z \ge 0} p \tr[\sqrt{Y} c^{\frac{1}{r}} \sqrt{Y} Z^{\frac{1}{p'}}] - \frac{p}{p'}\tr[Z] \, . 
    \end{equation}
    Since $0 \le \frac{1}{p'}, \frac{1}{r} \le 1$ and $\frac{1}{p'} + \frac{1}{r} = \frac{1}{p'} + \frac{1}{q'} - \frac{1}{p'} = \frac{1}{q'} \le 1$, Lieb's concavity theorem \cite{Lieb.1973} gives us joint concavity of the map 
    \begin{equation}
        (Z, c) \mapsto p \tr[\sqrt{Y} c^{\frac{1}{r}} \sqrt{Y} Z^{\frac{1}{p'}}] - \frac{p}{p'}\tr[Z]
    \end{equation}
    and therefore concavity of 
    \begin{equation}
        c \mapsto  \sup\limits_{Z \ge 0} p \tr[\sqrt{Y} c^{\frac{1}{r}} \sqrt{Y} Z^{\frac{1}{p'}}] - \frac{p}{p'}\tr[Z] = \tr[(\sqrt{Y} c^{\frac{1}{r}} \sqrt{Y})^p] =  \tr[(c^{\frac{1}{2r}} Y c^{\frac{1}{2r}})^p] \, . 
    \end{equation}
    Knowing that $f_X$ is a map from the Cartesian product of a convex set with a convex compact set to the reals, being upper semi-continuous and convex in its first and lower semi-continuous and concave in its second argument, we can employ Sions minimax theorem \cite[Theorem 2.18]{khatri2020principles} to find that 
    \begin{equation}
        \sup\limits_{Y \in \cB_{\ge 0}(\cH)} \inf\limits_{c \in \cC} f_X(Y, c) = \inf\limits_{c \in \cC} \sup\limits_{Y \in \cB_{\ge 0}(\cH)} f_X(Y, c) \, . 
    \end{equation}
    Rewriting the RHS of the above equation gives for $\ker c \subseteq \ker X$
    \begin{equation}
        \begin{aligned}
            \sup\limits_{Y \in \cB_{\ge 0}(\cH)} f_X(Y, c) &= \sup\limits_{Y \in \cB_{\ge 0}(\cH)} p' \tr[X Y] - \frac{p'}{p} \tr[(c^{\frac{1}{2r}} Y c^{\frac{1}{2r}})^p]\\
            &= \sup\limits_{Y \in \cB_{\ge 0}(\cH)} p' \tr[c^{-\frac{1}{2r}}X c^{-\frac{1}{2r}}Y] - \frac{p'}{p} \tr[Y^p]\\
            &= \norm{c^{-\frac{1}{2r}} X c^{-\frac{1}{2r}}}_{p'}^{p'}
        \end{aligned}
    \end{equation}
    with $\cdot^{-1}$ in this context being the pseudo inverse, and $\infty$ if $ \ker c \not\subseteq \ker X$. Using the lower semi-continuity of $c \mapsto \norm{c^{-\frac{1}{2r}} X c^{-\frac{1}{2r}}}_{p'}^{p'}$ and that for every $c \in \cC$ we can approximate it via $c_\delta = (1 - \delta)c + \delta \tau \in \cC$ a positive definite sequence, due to $\tau \in \cC$ chosen positive definite and $\delta \in (0, 1)$, we have finally 
    \begin{equation}\label{eq:sion-minmax-equality}
        \inf\limits_{c \in \cC} \sup\limits_{Y \in \cB_{\ge 0}(\cH)} f_X(Y, c) =  \inf\limits_{c \in \cC, \ker c \subseteq \ker X} \norm{c^{-\frac{1}{2r}} X c^{-\frac{1}{2r}}}_{p'}^{p'} = {\norm{X}_{\cC, p', q'}^*}^{p'} \, . 
    \end{equation}
    The LHS becomes
    \begin{equation}
        \begin{aligned}
            \inf\limits_{c \in \cC} f_X(Y, c) &= p'\tr[XY] - \frac{p'}{p} \sup\limits_{c \in \cC} \tr[(c^{\frac{1}{2r}} Y c^{\frac{1}{2r}})^p]\\
            &= p'\tr[XY] - \frac{p'}{p} \norm{Y}_{\cC, p, q}^p
        \end{aligned}
    \end{equation}
    As of Equation \eqref{eq:sion-minmax-equality} we have that for all $\varepsilon > 0$ there exists $Y \in \cB_{\ge 0}(\cH)$, s.t.
    \begin{equation}
        p'\tr[XY] - \frac{p'}{p} \norm{Y}_{\cC, p, q}^p + \varepsilon \ge {\norm{X}_{\cC, p', q'}^*}^{p'}
    \end{equation}
    This immediately\footnote{To see this we introduce a non negative parameter $\lambda$ and take $g(\lambda) =  p'\tr[X(\lambda \cdot Y)] - \frac{p'}{p} \norm{(\lambda \cdot Y)}_{\cC, p, q}^p$. Clearly $\sup\limits_{\lambda \ge 0} g(\lambda) \ge p'\tr[XY] - \frac{p'}{p} \norm{Y}_{\cC, p, q}^p$ with the supremum being achieved at $\lambda^{p - 1} = \frac{\tr[XY]}{\norm{Y}_{\cC, p, q}^p}$. Inserting this immediately gives Equation \eqref{eq:inequality}.} gives that for all $\varepsilon > 0$ there exists $Y \in \cB_{\ge 0}(\cH)$ (w.l.o.g. $Y \ne 0$), s.t.
    \begin{equation}\label{eq:inequality}
        \left(\frac{\tr[XY]}{\norm{Y}_{\cC, p, q}}\right)^{p'} + \varepsilon \ge {\norm{X}_{\cC, p', q'}^*}^{p'} \, . 
    \end{equation}
    and hence taking an upper bound on the LHS,
    \begin{equation}
        \left(\sup\limits_{Y \in \cB_{\ge 0}(\cH), \norm{Y}_{\cC, p, q} \le 1} |\tr[X Y]|\right)^{p'} + \varepsilon \ge {\norm{X}_{\cC, p', q'}^*}^{p'}
    \end{equation}
    Letting $\varepsilon \to 0$ and taking the $p'$th root on both sides concludes the claim.
\end{proof}

The following corollary is a consequence.
\begin{corollary}\label{cor:triangle-inequality-Kp'r-maps}
    Let $X, Y \in \cB_{\ge 0}(\cH)$, then 
    \begin{equation}
        \norm{X + Y}_{\cC, p', q'}^* \le \norm{X}_{\cC, p', q'}^* + \norm{Y}_{\cC, p', q'}^*
    \end{equation}
\end{corollary}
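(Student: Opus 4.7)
The plan is to derive the corollary as an almost immediate consequence of \cref{theo:dual-formula}, which identifies $\norm{\cdot}_{\cC, p', q'}^*$ on the positive cone with a supremum of linear functionals. Since a supremum of linear functionals is automatically subadditive, the triangle inequality will follow.

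More concretely, since $X, Y \in \cB_{\ge 0}(\cH)$, the sum $X + Y$ is also positive semi-definite, so \cref{theo:dual-formula} applies to it. First, I would write
\begin{equation}
    \norm{X + Y}_{\cC, p', q'}^* \;=\; \sup\limits_{Z \in \cB_{\ge 0}(\cH), \norm{Z}_{\cC, p, q} \le 1} |\tr[(X + Y) Z]|.
\end{equation}
For any $Z \ge 0$ with $\norm{Z}_{\cC, p, q} \le 1$, the trace $\tr[(X + Y) Z] = \tr[X Z] + \tr[Y Z]$ is a sum of non-negative numbers, hence the absolute value can be discarded, and by applying \cref{theo:dual-formula} individually to $X$ and to $Y$ I obtain $\tr[X Z] \le \norm{X}_{\cC, p', q'}^*$ and $\tr[Y Z] \le \norm{Y}_{\cC, p', q'}^*$.

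Adding these bounds and taking the supremum over admissible $Z$ yields the claim. There is no real obstacle here: the only subtlety is ensuring that we are in the regime where the dual formula actually holds, namely that the argument is positive semi-definite (which is why the same reasoning does not extend beyond the positive cone), and that $Z \ge 0$ is used on both sides so that traces against $X$ and $Y$ are individually non-negative, letting us drop absolute values cleanly.
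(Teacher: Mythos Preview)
Your argument is correct and matches the paper's own proof, which simply notes that the claim follows from \cref{theo:dual-formula} together with the subadditivity of the supremum. You have merely spelled out the subadditivity step in more detail than the paper does.
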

\begin{proof}
    This follows directly from the formula derived in \Cref{theo:dual-formula} and the subadditivity of the supremum.
\end{proof}

We further have the following lemma. 

\begin{lemma}\label{lem:montonicity-Kp'r-maps}
    For $X, Y \in \cB_{\ge 0}(\cH)$ with $X \le Y$, we have that
    \begin{equation}
        \norm{X}_{\cC, p', q'}^* \le \norm{Y}_{\cC, p', q'}^* \, . 
    \end{equation}
\end{lemma}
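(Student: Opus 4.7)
The plan is to reduce the claim to the dual characterization established in Theorem \ref{theo:dual-formula}, which is valid precisely because $X$ and $Y$ are positive semi-definite. That theorem rewrites $\norm{\cdot}^*_{\cC,p',q'}$ on the positive cone as a supremum of a linear functional over a set of test operators $W \in \cB_{\ge 0}(\cH)$ with $\norm{W}_{\cC,p,q} \le 1$ that does \emph{not} depend on the operator being evaluated. Consequently, monotonicity in the operator argument should reduce to pointwise monotonicity of $W \mapsto \tr[\,\cdot\, W]$ together with the fact that this pairing is nonnegative on the positive cone.

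Concretely, I would first invoke Theorem \ref{theo:dual-formula} to obtain
\begin{equation*}
\norm{X}_{\cC, p', q'}^* = \sup_{\substack{W \in \cB_{\ge 0}(\cH) \\ \norm{W}_{\cC,p,q} \le 1}} |\tr[XW]|, \qquad \norm{Y}_{\cC, p', q'}^* = \sup_{\substack{W \in \cB_{\ge 0}(\cH) \\ \norm{W}_{\cC,p,q} \le 1}} |\tr[YW]|.
\end{equation*}
Second, I would drop the absolute values by observing that for any positive semi-definite $A$ and $W$, cyclicity of the trace gives $\tr[AW] = \tr[W^{1/2} A W^{1/2}] \ge 0$; this applies both to $A = X$ and $A = Y$. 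Third, the assumption $X \le Y$ is equivalent to $Y - X \ge 0$, and applying the same positivity observation to $Y - X$ against any admissible $W \ge 0$ gives $\tr[XW] \le \tr[YW]$. Taking the supremum over $W$ on both sides of this pointwise inequality then yields the desired bound.

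There is essentially no obstacle in this argument, since all the nontrivial work has already been performed in Theorem \ref{theo:dual-formula}. The only point worth double-checking is that the set of test operators is genuinely the same for both $X$ and $Y$, which is immediate because it depends only on the norm $\norm{\cdot}_{\cC,p,q}$; and that the dual representation is being used only on positive semi-definite inputs, which is exactly the hypothesis of the lemma.
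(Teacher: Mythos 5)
Your proof is correct, but it takes a different route from the paper's. The paper argues directly from the definition: for each fixed $c \in \cC$ with $c > 0$, the conjugation $Z \mapsto c^{-\frac{1}{2r}} Z c^{-\frac{1}{2r}}$ preserves the operator order, so $X \le Y$ gives $c^{-\frac{1}{2r}} X c^{-\frac{1}{2r}} \le c^{-\frac{1}{2r}} Y c^{-\frac{1}{2r}}$; since Schatten norms are monotone on the positive semi-definite cone, the corresponding $p'$-norms are ordered for every $c$, and taking the infimum over $c$ finishes the argument. You instead pass through the dual representation of Theorem \ref{theo:dual-formula} and use positivity of the trace pairing $\tr[(Y-X)W] \ge 0$ for admissible $W \ge 0$. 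Both arguments are sound; your reduction correctly notes that the test set depends only on $\norm{\cdot}_{\cC,p,q}$ and that the dual formula is applied only to positive semi-definite inputs. The trade-off is that your proof leans on the heaviest result of the section (whose proof invokes Sion's minimax theorem and Lieb's concavity), whereas the paper's proof is elementary and independent of the duality; on the other hand, your approach exhibits monotonicity and the triangle inequality (Corollary \ref{cor:triangle-inequality-Kp'r-maps}) as twin consequences of the same dual formula, which is a tidy unification.
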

\begin{proof}
    We have that for every $c \in \cC$, $c > 0$
    \begin{equation}
        c^{-\frac{1}{2r}} X c^{-\frac{1}{2r}} \le c^{-\frac{1}{2r}} Y c^{-\frac{1}{2r}}
    \end{equation}
    and hence, since the Schatten $p$-norms preserve order  on positive semi-definite operators, we have
    \begin{equation}
        \norm{c^{-\frac{1}{2r}} X c^{-\frac{1}{2r}}}_p \le \norm{c^{-\frac{1}{2r}} Y c^{-\frac{1}{2r}}}_p
    \end{equation}
    concluding the claim.
\end{proof}

Putting together the above results, we can now show the following.

\begin{theorem}[Distance to convex, compact set]\label{theo:minimial-distance-cc-set-operator-approach}
    Let $\cC \subseteq \cS(\cH)$ be a convex, compact set that contains at least one positive definite state. Then the map 
    \begin{equation}
        \widetilde D_{\alpha, \cC}:\cS(\cH) \to \R, \qquad \rho \mapsto \widetilde D_{\alpha, \cC}(\rho) :=  \inf\limits_{\tau \in \cC} \widetilde D_\alpha(\rho \Vert \tau)
    \end{equation}
    is uniformly continuous (cf.\ \cite[Definition 4.18]{RudinAnalysis}) for $\alpha \in [1/2, 1) \cup (1, \infty)$. For $\rho$, $\sigma \in \cS(\cH)$ with $\tfrac{1}{2}\norm{\rho - \sigma}_1 \le \varepsilon$, $\alpha \in [1/2, 1)$ and $\kappa$ (see \Cref{rem:kappa}) such that $\sup\limits_{\rho \in \cS(\cH)} \widetilde D_{\alpha, \cC}(\rho)\leq\log(\kappa)<\infty$, 
    \begin{equation}
         |\widetilde D_{\alpha, \cC}(\rho) - \widetilde D_{\alpha, \cC}(\sigma)| \le \frac{1}{1 - \alpha} \log(1 + \varepsilon^\alpha \kappa^{1 - \alpha}) \, . 
    \end{equation}
    Further for $\alpha \in (1, \infty)$ and $\kappa$ (see \Cref{rem:kappa}) such that $\sup\limits_{\rho \in \cS(\cH)} \widetilde D_{\alpha, \cC}(\rho)\leq\log(\kappa)<\infty$ we have 
    \begin{equation}
         |\widetilde D_{\alpha, \cC}(\rho) - \widetilde D_{\alpha, \cC}(\sigma)| \le \frac{\alpha}{\alpha - 1} \log(1 + \varepsilon \kappa^{\frac{\alpha - 1}{\alpha}})\, . 
    \end{equation}
\end{theorem}
\begin{proof}
    The proof strategy is inspired by \cite{Beigi2022}. Let $\rho, \sigma \in \cS(\cH)$. Without loss of generality, we can assume that $\frac{1}{2}\norm{\rho - \sigma}_1 = \varepsilon$, as both bounds are monotonically increasing in $\varepsilon$. We will begin with the first bound, i.e.~the bound for $\alpha < 1$ and note that 
    \begin{equation}
        |\widetilde D_{\alpha, \cC}(\rho) - \widetilde D_{\alpha, \cC}(\sigma)| \le \frac{1}{1 - \alpha} \left|\log(\frac{\sup\limits_{c \in \cC} \alphaQ(\rho \Vert c)}{\sup\limits_{c \in \cC} \alphaQ(\sigma \Vert c)})\right| \, . 
    \end{equation}
    Now, we can use that there exists $\nu, \mu \in \cS(\cH)$ such that $\rho + \varepsilon \nu = \sigma + \varepsilon \mu$ and hence $\rho \le \sigma + \varepsilon \mu$. Furthermore $\alphaQ(\cdot \Vert c)$ is monotone for all $c \in \cC$ and subadditive (c.f. \Cref{lemma:Q-alpha-super/subadditivity}), which gives us
    \begin{equation}
        \sup\limits_{c \in \cC} \alphaQ(\rho \Vert c) \le \sup\limits_{c \in \cC} \alphaQ(\sigma + \varepsilon \mu \Vert c) \le \sup\limits_{c \in \cC} \alphaQ(\sigma \Vert c) + \varepsilon^\alpha \sup\limits_{c \in \cC} \alphaQ(\mu \Vert c) \, . 
    \end{equation}
    If we use that $\kappa^{\alpha - 1} \le \sup\limits_{c \in \cC} \alphaQ(\mu \Vert c)$ and  $\sup\limits_{c \in \cC} \alphaQ(\sigma \Vert c) \le 1$, we find
    \begin{equation}
        \frac{\sup\limits_{c \in \cC} \alphaQ(\rho \Vert c)}{\sup\limits_{c \in \cC} \alphaQ(\sigma \Vert c)} \le 1 + \varepsilon^\alpha \kappa^{1 - \alpha} \, . 
    \end{equation}
    Repeating the same steps for the inverse fraction gives the claim. \par
    For $\alpha > 1$, we find 
    \begin{equation}
        |\widetilde D_{\alpha, \cC}(\rho) - \widetilde D_{\alpha, \cC}(\sigma)| = \frac{\alpha}{\alpha - 1} \Big|\log(\frac{\norm{\rho}_{\cC, \alpha, 1}^*}{\norm{\sigma}_{\cC, \alpha, 1}^*})\Big| \, . 
    \end{equation}
    Using now that there exist $\nu, \mu \in \cS(\cH)$ such that $\rho + \varepsilon \nu = \sigma + \varepsilon \mu$ and hence $\rho \le \sigma + \varepsilon \mu$, we can employ \Cref{lem:montonicity-Kp'r-maps} and \Cref{cor:triangle-inequality-Kp'r-maps}
    \begin{equation}
        \begin{aligned}
            \frac{\norm{\rho}_{\cC, \alpha, 1}^*}{\norm{\sigma}_{\cC, \alpha, 1}^*} &\le \frac{\norm{\sigma + \varepsilon \mu}_{\cC, \alpha, 1}^*}{\norm{\sigma}_{\cC, \alpha, 1}^*} \\
            &\le \frac{\norm{\sigma}_{\cC, \alpha, 1}^* + \varepsilon \norm{\mu}_{\cC, \alpha, 1}^*}{\norm{\sigma}_{\cC, \alpha, 1}^*} \\
            &\le 1 + \varepsilon\frac{\norm{\mu}_{\cC, \alpha, 1}^*}{\norm{\sigma}_{\cC, \alpha, 1}^*}
        \end{aligned}
    \end{equation}
    We can now bound $\norm{\mu}_{\cC, \alpha, 1}^*$ from above with $\kappa^{\frac{\alpha - 1}{\alpha}}$ and lower bound $\norm{\sigma}_{\cC, \alpha, 1}^*$ with $1$ from below, as $\widetilde D_{\alpha, \cC}(\cdot) \ge 0$ on quantum states. We hence obtain 
    \begin{equation}
        \frac{\norm{\rho}_{\cC, \alpha, 1}^*}{\norm{\sigma}_{\cC, \alpha, 1}^*} \le 1 + \varepsilon \kappa^{\frac{\alpha - 1}{\alpha}}
    \end{equation}
    and the same bound for the inverse quotient. This proves the claim.
\end{proof}

\begin{remark}
    In \cite{Rubboli2022}, the authors prove a continuity bound for $\alpha \in [1/2, 1)$ and $0 \leq \varepsilon \leq (\widetilde Q_{\alpha, \cC}(\rho))^{1/\alpha}$ given as
    \begin{equation}
         |\widetilde D_{\alpha, \cC}(\rho) - \widetilde D_{\alpha, \cC}(\sigma)| \le \frac{1}{\alpha-1} \log(1 - \frac{\varepsilon^\alpha}{\widetilde Q_{\alpha, \cC}(\rho)} ) \, . 
    \end{equation}
    This form is convenient for them as they work in the context of resource theories and are looking for dimension independent bounds. This bound immediately implies a continuity bound of the kind that we are looking for, namely 
        \begin{equation}
         |\widetilde D_{\alpha, \cC}(\rho) - \widetilde D_{\alpha, \cC}(\sigma)| \le \frac{1}{\alpha-1} \log(1 - \varepsilon^\alpha \kappa^{1-\alpha}) \, . 
    \end{equation}
    for $0 \leq \varepsilon \leq \kappa^{(\alpha-1)/\alpha}$. Using the inequality $ \log(1+x) \leq -\log{1-x}$ for $0 \leq x \leq 1$, it can be seen that this is worse than the bound we obtain in \Cref{theo:minimial-distance-cc-set-operator-approach}. Altering the proof in \cite{Rubboli2022} slightly one could, however, also derive the bound in \Cref{theo:minimial-distance-cc-set-operator-approach}, since both proofs are almost identical. 
\end{remark}

In the following lemma, we investigate the limiting behaviour for the bounds derived above.

\begin{lemma}[Limits] \label{lem:limit-operator}
    Let $\rho, \sigma \in \cS(\cH)$ with $\frac{1}{2}\norm{\rho - \sigma}_1 \leq \varepsilon$ and $\kappa$ (see \Cref{rem:kappa}) a uniform bound on $\widetilde D_{\alpha, \cC}(\cdot)$ independent of $\alpha$, then the limit $\alpha \to \infty$ of the bound derived in \Cref{theo:minimial-distance-cc-set-operator-approach} is stable and we find that 
    \begin{equation}
        |\widetilde D_{\infty, \cC}(\rho) - \widetilde D_{\infty, \cC}(\sigma)| \le \log(1 + \varepsilon \kappa) \, . 
    \end{equation}
    For $\alpha \to 1$ the bound diverges unless it is trivial, i.e.~$\varepsilon = 0$. 
\end{lemma}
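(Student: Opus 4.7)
The plan is to pass directly to the limit in the two-sided bounds established in Theorem \ref{theo:minimial-distance-cc-set-operator-approach} and combine this with Lemma \ref{lem:limit-change}, which guarantees that $\widetilde D_{\alpha, \cC}(\rho) \to \widetilde D_{\infty, \cC}(\rho)$ as $\alpha \to \infty$ whenever $\cC$ contains a positive definite element. Because the bounds are explicit and elementary in $\alpha$, the task reduces to two straightforward asymptotic computations on the right-hand side.

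For $\alpha \to \infty$, I would apply the bound from Theorem \ref{theo:minimial-distance-cc-set-operator-approach} valid on $(1, \infty)$. Since $\alpha/(\alpha-1) \to 1$ and $(\alpha-1)/\alpha \to 1$, continuity of $x \mapsto \kappa^x$ and of the logarithm gives $\frac{\alpha}{\alpha-1}\log(1 + \varepsilon \kappa^{(\alpha-1)/\alpha}) \to \log(1 + \varepsilon \kappa)$ without needing l'Hôpital. The left-hand side $|\widetilde D_{\alpha, \cC}(\rho) - \widetilde D_{\alpha, \cC}(\sigma)|$ converges to $|\widetilde D_{\infty, \cC}(\rho) - \widetilde D_{\infty, \cC}(\sigma)|$ by Lemma \ref{lem:limit-change}; the uniform finite $\kappa$ is precisely the assumption that makes this lemma applicable. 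The inequality therefore survives in the limit.

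For $\alpha \to 1$, it suffices to show that both one-sided limits of the upper bound diverge whenever $\varepsilon > 0$. From the right, $\kappa^{(\alpha-1)/\alpha} \to 1$, so $\log(1 + \varepsilon \kappa^{(\alpha-1)/\alpha}) \to \log(1 + \varepsilon) > 0$, while the prefactor $\alpha/(\alpha-1)$ diverges to $+\infty$; the product is unbounded. The analogous computation with the $[1/2, 1)$ bound gives $\log(1 + \varepsilon^\alpha \kappa^{1-\alpha}) \to \log(1 + \varepsilon) > 0$ against a diverging prefactor $1/(1-\alpha)$. The degenerate case $\varepsilon = 0$ collapses the inner logarithm to zero and yields the trivial limit.

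There is no substantive obstacle. The only point requiring care is that the left-hand side for $\alpha \to \infty$ converges in tandem with the right-hand side, which is precisely what Lemma \ref{lem:limit-change} provides; unlike the axiomatic Lemma \ref{lem:limit-axiomatic}, no l'Hôpital manipulation or fine asymptotic expansion is needed because the $\alpha \to \infty$ limit of the bound is already finite by inspection and the $\alpha \to 1$ divergence is immediate.
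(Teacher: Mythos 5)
Your proposal is correct and matches the paper's intent exactly: the paper's own proof simply states that both conclusions are obtained straightforwardly, and your elementary limit computations (continuity of $x\mapsto\kappa^x$ for $\alpha\to\infty$, diverging prefactor against a positive inner logarithm for $\alpha\to 1$), together with the appeal to \cref{lem:limit-change} for convergence of the left-hand side, are precisely the routine details being elided. No gaps.
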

\begin{proof}
    Both conclusions are obtained straightforwardly.
\end{proof}

\subsection{Mixed approach}

For $\alpha > 1$ we notice that both approaches have one limit ($\alpha \to 1$ or $\alpha \to \infty$) in which they perform well, while for the other one they either diverge or do not give a continuity bound anymore. The purpose of this section, therefore, is to combine both approaches and to obtain a bound which performs well in both limits. 

\begin{theorem}[Distance to convex, compact set]\label{theo:minimial-distance-cc-set-mixed-approach}
    Let $\cC \subseteq \cS(\cH)$ be a convex, compact set that contains at least one positive definite state. Then the map 
    \begin{equation}
        \widetilde D_{\alpha, \cC}:\cS(\cH) \to \R, \qquad \rho \mapsto \widetilde D_{\alpha, \cC}(\rho) :=  \inf\limits_{\tau \in \cC} \widetilde D_\alpha(\rho \Vert \tau)
    \end{equation}
    is uniformly continuous (cf.\ \cite[Definition 4.18]{RudinAnalysis}) for $\alpha \in (1, \infty)$. For $\rho$, $\sigma \in \cS(\cH)$ satisfying $\tfrac{1}{2}\norm{\rho - \sigma}_1 \le \varepsilon$ and $\kappa$ (see \Cref{rem:kappa}) such that $\sup\limits_{\rho \in \cS(\cH)} \widetilde D_{\alpha, \cC}(\rho)\leq\log(\kappa)<\infty$ we find
    \begin{equation}
         |\widetilde D_{\alpha, \cC}(\rho) - \widetilde D_{\alpha, \cC}(\sigma)| \le  \log(1 + \varepsilon) + \frac{\alpha}{\alpha - 1}\log(1 + \varepsilon \kappa^{\frac{\alpha - 1}{\alpha}} - \frac{\varepsilon^{\frac{2\alpha - 1}{\alpha}}}{(1 + \varepsilon)^{\frac{\alpha - 1}{\alpha}}}) \, . 
    \end{equation}
\end{theorem}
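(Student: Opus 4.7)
The plan is to combine the axiomatic and operator-space approaches developed in \cref{theo:minimial-distance-cc-set-axiomatic-approach,theo:minimial-distance-cc-set-operator-approach} into a single chain of inequalities. As in both previous proofs, I would first reduce by monotonicity of the bound in $\varepsilon$ to the case $\tfrac{1}{2}\norm{\rho - \sigma}_1 = \varepsilon$ and pick orthogonal states $\mu, \nu \in \cS(\cH)$ with $\rho + \varepsilon \mu = \sigma + \varepsilon \nu$. Using the operator-space identity $\widetilde Q_{\alpha, \cC}(X) = (\norm{X}_{\cC, \alpha, 1}^*)^\alpha$ on positive semi-definite $X$, one has $\widetilde D_{\alpha, \cC}(\rho) - \widetilde D_{\alpha, \cC}(\sigma) = \tfrac{\alpha}{\alpha-1}\log\bigl(\norm{\rho}_{\cC, \alpha, 1}^*/\norm{\sigma}_{\cC, \alpha, 1}^*\bigr)$, so it suffices to control this norm ratio (and the symmetric one obtained by swapping $\rho$ and $\sigma$).

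The central chain of inequalities mimics the axiomatic proof while keeping the norm interpretation in view. Superadditivity of $\widetilde Q_\alpha(\cdot \Vert \tau)$ (\cref{lemma:Q-alpha-super/subadditivity}) applied pointwise in $\tau$, followed by taking infima over $\tau \in \cC$ and using $\widetilde Q_{\alpha, \cC}(\mu) \geq 1$, yields
\begin{equation*}
    (\norm{\rho}_{\cC, \alpha, 1}^*)^\alpha + \varepsilon^\alpha \leq \widetilde Q_{\alpha, \cC}(\rho + \varepsilon \mu) = \widetilde Q_{\alpha, \cC}(\sigma + \varepsilon \nu) \, .
\end{equation*}
Joint convexity of $\widetilde Q_\alpha$ (\cref{lemma:Q-alpha-joint-convexity/concavity}) together with the amalgamation $\tau = (\tau_1 + \varepsilon \tau_2)/(1+\varepsilon)$, exactly as in the axiomatic proof, upper-bounds the right-hand side by
\begin{equation*}
    (1 + \varepsilon)^{\alpha-1}\bigl(\widetilde Q_{\alpha, \cC}(\sigma) + \varepsilon \widetilde Q_{\alpha, \cC}(\nu)\bigr) \leq (1 + \varepsilon)^{\alpha-1}\bigl((\norm{\sigma}_{\cC, \alpha, 1}^*)^\alpha + \varepsilon \kappa^{\alpha-1}\bigr) \, ,
\end{equation*}
where the second step uses the operator-space input $\norm{\nu}_{\cC, \alpha, 1}^* \leq \kappa^{(\alpha-1)/\alpha}$, i.e.\ $\widetilde Q_{\alpha, \cC}(\nu) \leq \kappa^{\alpha-1}$.

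The step that I expect to be the most delicate is extracting the $1/\alpha$-th root in such a way that the exponent $(\alpha-1)/\alpha$ appears on $\kappa$ and on the subtracted correction in the final bound on the ratio. Concretely, after dividing by $(\norm{\sigma}_{\cC, \alpha, 1}^*)^\alpha$, lower-bounding $\norm{\sigma}_{\cC, \alpha, 1}^* \geq 1$ where appropriate, factoring $(1+\varepsilon)^{\alpha-1}$ outside, and then passing to the $1/\alpha$-th root, a careful use of concavity of $x \mapsto x^{1/\alpha}$ (trading the interior correction of axiomatic type $\varepsilon^\alpha/(1+\varepsilon)^{\alpha-1}$ for an exterior correction $\varepsilon^{(2\alpha-1)/\alpha}/(1+\varepsilon)^{(\alpha-1)/\alpha}$) should yield
\begin{equation*}
    \frac{\norm{\rho}_{\cC, \alpha, 1}^*}{\norm{\sigma}_{\cC, \alpha, 1}^*} \leq (1 + \varepsilon)^{(\alpha-1)/\alpha}\left(1 + \varepsilon \kappa^{(\alpha-1)/\alpha} - \frac{\varepsilon^{(2\alpha-1)/\alpha}}{(1+\varepsilon)^{(\alpha-1)/\alpha}}\right) \, .
\end{equation*}
Taking $\log$ of both sides and multiplying by $\tfrac{\alpha}{\alpha-1}$ then produces the claimed inequality. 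Running the symmetric argument with the roles of $\rho$ and $\sigma$ exchanged furnishes the matching lower bound so that the absolute value is controlled, and uniform continuity on $\cS(\cH)$ is immediate since the final estimate depends only on $\varepsilon$, $\alpha$, and $\kappa$, exactly as in the preceding two theorems.
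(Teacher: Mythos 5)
Your proposal has the right overall shape (reduce to $\tfrac12\norm{\rho-\sigma}_1=\varepsilon$, decompose $\rho+\varepsilon\mu=\sigma+\varepsilon\nu$, control the ratio of dual norms), and the lower bound on $\widetilde Q_{\alpha,\cC}(\rho+\varepsilon\mu)$ via superadditivity is fine. The gap is in how you bound $\widetilde Q_{\alpha,\cC}(\sigma+\varepsilon\nu)$ from above. You use the \emph{axiomatic} joint-convexity step, which produces $(1+\varepsilon)^{\alpha-1}\bigl((\norm{\sigma}_{\cC,\alpha,1}^*)^\alpha+\varepsilon\kappa^{\alpha-1}\bigr)$, i.e.\ the term $\varepsilon\kappa^{\alpha-1}$ sits \emph{inside} the quantity whose $1/\alpha$-th root you must later take. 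No convexity/concavity manipulation converts $\bigl(1+\varepsilon\kappa^{\alpha-1}-\tfrac{\varepsilon^\alpha}{(1+\varepsilon)^{\alpha-1}}\bigr)^{1/\alpha}$ into $1+\varepsilon\kappa^{(\alpha-1)/\alpha}-\tfrac{\varepsilon^{(2\alpha-1)/\alpha}}{(1+\varepsilon)^{(\alpha-1)/\alpha}}$: subadditivity of $x\mapsto x^{1/\alpha}$ only yields $1+\varepsilon^{1/\alpha}\kappa^{(\alpha-1)/\alpha}$, with $\varepsilon^{1/\alpha}>\varepsilon$, which is strictly weaker. A concrete counterexample to the inequality you would need: $\alpha=2$, $\varepsilon=0.1$, $\kappa=100$ gives $\bigl(1+\varepsilon\kappa^{\alpha-1}-\tfrac{\varepsilon^\alpha}{(1+\varepsilon)^{\alpha-1}}\bigr)^{1/\alpha}\approx 3.32$ versus $1+\varepsilon\kappa^{(\alpha-1)/\alpha}-\tfrac{\varepsilon^{(2\alpha-1)/\alpha}}{(1+\varepsilon)^{(\alpha-1)/\alpha}}\approx 1.97$. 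So your chain only recovers an axiomatic-type bound, not the stated one.

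The ingredient you are missing is \cref{cor:triangle-inequality-Kp'r-maps}: the upper bound must be taken at the \emph{norm} level, $\norm{\sigma+\varepsilon\nu}_{\cC,\alpha,1}^*\le\norm{\sigma}_{\cC,\alpha,1}^*+\varepsilon\norm{\nu}_{\cC,\alpha,1}^*\le\norm{\sigma}_{\cC,\alpha,1}^*+\varepsilon\kappa^{(\alpha-1)/\alpha}$, which is where the linear-in-$\varepsilon$ coefficient on $\kappa^{(\alpha-1)/\alpha}$ comes from — this is precisely the operator-space half of the ``mixed'' method. Concavity of $x\mapsto x^{1/\alpha}$ is then used only on the \emph{other} side, where its direction is favourable: writing
\begin{equation}
\Bigl(\inf_{c}\alphaQ(\rho\Vert c)+\varepsilon^\alpha\inf_{c}\alphaQ(\mu\Vert c)\Bigr)^{1/\alpha}=(1+\varepsilon)^{1/\alpha}\Bigl(\tfrac{1}{1+\varepsilon}(\norm{\rho}_{\cC,\alpha,1}^*)^\alpha+\tfrac{\varepsilon}{1+\varepsilon}\varepsilon^{\alpha-1}(\norm{\mu}_{\cC,\alpha,1}^*)^\alpha\Bigr)^{1/\alpha}
\end{equation}
and applying concavity to the convex combination yields the \emph{lower} bound $(1+\varepsilon)^{(1-\alpha)/\alpha}\bigl(\norm{\rho}_{\cC,\alpha,1}^*+\varepsilon^{(2\alpha-1)/\alpha}\bigr)$, which produces the subtracted correction term in the final bound. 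With these two changes your argument closes as intended.
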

\begin{proof}
    Let $\rho, \sigma \in \cS(\cH)$. Without loss of generality, we can assume that $\frac{1}{2}\norm{\rho - \sigma}_1 = \varepsilon$, as the bound is monotonically increasing in $\varepsilon$. Since the logarithm and the map $x \mapsto x^\frac{1}{\alpha}$ are monotone, we find 
    \begin{equation}
        |\widetilde D_{\alpha, \cC}(\rho) - \widetilde D_{\alpha, \cC}(\sigma)| = \frac{\alpha}{\alpha - 1} \Bigg|\log(\frac{\norm{\rho}_{\cC, \alpha, 1}^*}{\norm{\sigma}_{\cC, \alpha, 1}^*})\Bigg| \, . 
    \end{equation}
    We further get $\mu, \nu \in \cS(\cH)$, such that $\rho + \varepsilon \mu = \sigma + \varepsilon\nu$. Due to \Cref{cor:triangle-inequality-Kp'r-maps} and $\norm{\nu}_{\cC, \alpha, 1}^* \le \kappa^{\frac{\alpha - 1}{\alpha}}$, we have 
    \begin{equation}
        \norm{\sigma + \varepsilon\nu}_{\cC, \alpha, 1}^* \le \norm{\sigma}_{\cC, \alpha, 1}^* + \varepsilon \norm{\nu}_{\cC, \alpha, 1}^* \le \norm{\sigma}_{\cC, \alpha, 1}^* + \varepsilon \kappa^{\frac{\alpha - 1}{\alpha}}\, . 
    \end{equation}
    We then also have that for $X \ge 0$, $\norm{X}_{\cC, \alpha, 1}^* = \Big(\inf\limits_{c \in \cC} \alphaQ (X \Vert c)\Big)^{\frac{1}{\alpha}}$, which gives
    \begin{equation}
        \begin{aligned}
             \norm{\rho + \varepsilon \mu}_{\cC, \alpha, 1}^* &= \Big(\inf\limits_{c \in \cC} \alphaQ (\rho + \varepsilon\mu \Vert c)\Big)^{\frac{1}{\alpha}} \\
             &\ge \Big(\inf\limits_{c \in \cC} \alphaQ (\rho \Vert c) + \varepsilon^\alpha \inf\limits_{c \in \cC} \alphaQ (\mu \Vert c)\Big)^{\frac{1}{\alpha}}\\
             &= (1 + \varepsilon)^{\frac{1}{\alpha}}\Big(\frac{1}{1 + \varepsilon} {\norm{\rho}_{\cC, \alpha, 1}^*}^\alpha + \frac{\varepsilon}{1 + \varepsilon} \varepsilon^{\alpha - 1} {\norm{\mu}_{\cC, \alpha, 1}^*}^\alpha\Big)^{\frac{1}{\alpha}}\\
             &\ge (1 + \varepsilon)^{\frac{1 - \alpha}{\alpha}}\Big(\norm{\rho}_{\cC, \alpha, 1}^* + \varepsilon^{\frac{2\alpha - 1}{\alpha}} \norm{\mu}_{\cC, \alpha, 1}^*\Big)\\
             &\ge (1 + \varepsilon)^{\frac{1 - \alpha}{\alpha}}\Big(\norm{\rho}_{\cC, \alpha, 1}^* + \varepsilon^{\frac{2\alpha - 1}{\alpha}}\Big)
        \end{aligned}
    \end{equation}
     where we used \Cref{lemma:Q-alpha-super/subadditivity}, the concavity of $x\mapsto x^{\frac{1}{\alpha}}$ and that $\norm{\mu}_{\cC, \alpha, 1}^* \ge 1$, since $\widetilde D_{\alpha, \cC}(\cdot) \ge 0$ on quantum states. Combining these two bounds, we find
    \begin{equation}
        \norm{\rho}_{\cC, \alpha, 1}^* \le (1 + \varepsilon)^{\frac{\alpha - 1}{\alpha}}\Bigg(\norm{\sigma}_{\cC, \alpha, 1}^* + \varepsilon\kappa^{\frac{\alpha - 1}{\alpha}} - \frac{\varepsilon^{\frac{2\alpha - 1}{\alpha}}}{(1 + \varepsilon)^{\frac{\alpha - 1}{\alpha}}}\Bigg)
    \end{equation}
    and as a consequence (since again $\norm{\sigma}_{\cC, \alpha, 1}^* \ge 1$ and $\varepsilon\kappa^{\frac{\alpha - 1}{\alpha}} - \frac{\varepsilon^{\frac{2\alpha - 1}{\alpha}}}{(1 + \alpha)^{\frac{\alpha - 1}{\alpha}}} > 0$)
    \begin{equation}
        \frac{\norm{\rho}_{\cC, \alpha, 1}^*}{\norm{\sigma}_{\cC, \alpha, 1}^*} \le (1 + \varepsilon)^{\frac{\alpha - 1}{\alpha}}\Bigg(1 + \varepsilon \kappa^{\frac{\alpha - 1}{\alpha}} - \frac{\varepsilon^{\frac{2\alpha - 1}{\alpha}}}{(1 + \varepsilon)^{\frac{\alpha - 1}{\alpha}}}\Bigg) \, . 
    \end{equation}
    Repeating the same steps for the inverse fraction proves the claimed bound.
\end{proof}

\begin{lemma}[Limits]\label{lem:limit-mixed}
    Let $\rho, \sigma \in \cS(\cH)$ with $\frac{1}{2}\norm{\rho - \sigma}_1 \le \varepsilon$, and $\kappa$ (see \Cref{rem:kappa}) be a bound on $\widetilde D_{\alpha, \cC}(\cdot)$ independent of $\alpha \in (1, \infty)$. Then, the limit $\alpha \to 1$ of the bounds obtained in \Cref{theo:minimial-distance-cc-set-mixed-approach} gives
    \begin{equation}
        |\widetilde D_{1, \cC}(\rho) - \widetilde D_{1, \cC}(\sigma)| \le \varepsilon \log \kappa \,  + (1 + \varepsilon) h\Big(\frac{\varepsilon}{1 + \varepsilon}\Big)
    \end{equation}
    and for $\alpha \to \infty$ 
    \begin{equation}
        |\widetilde D_{\infty, \cC}(\rho) - \widetilde D_{\infty, \cC}(\sigma)| \le \log((1 + \varepsilon)(1 + \varepsilon \kappa) - \varepsilon^2)
    \end{equation}
\end{lemma}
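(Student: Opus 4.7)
The plan is to treat the two limits separately, since the $\alpha\to\infty$ case requires only substitution and simplification whereas the $\alpha\to 1$ case needs L'H\^opital's rule, exactly in the spirit of the analogous computations in \Cref{lem:limit-axiomatic} and \Cref{lem:limit-operator}.

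For the limit $\alpha\to\infty$, first I would observe that all the exponents simplify cleanly: $\frac{\alpha-1}{\alpha}\to 1$ and $\frac{2\alpha-1}{\alpha}\to 2$, so $\kappa^{(\alpha-1)/\alpha}\to\kappa$, $(1+\varepsilon)^{(\alpha-1)/\alpha}\to 1+\varepsilon$, and $\varepsilon^{(2\alpha-1)/\alpha}\to\varepsilon^2$, while $\frac{\alpha}{\alpha-1}\to 1$. Continuity of the logarithm and of these power maps on $(0,\infty)$ lets us pass the limit inside, giving
\begin{equation*}
\log(1+\varepsilon)+\log\Bigl(1+\varepsilon\kappa-\tfrac{\varepsilon^{2}}{1+\varepsilon}\Bigr)=\log\bigl((1+\varepsilon)(1+\varepsilon\kappa)-\varepsilon^{2}\bigr),
\end{equation*}
which is the stated bound. (One should check that $\widetilde D_{\alpha,\cC}(\rho)\to\widetilde D_{\infty,\cC}(\rho)$ as $\alpha\to\infty$ to justify taking the limit of the left-hand side; this is \Cref{lem:limit-change}.)

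For the limit $\alpha\to 1$, the $\log(1+\varepsilon)$ term is fixed, and only the second summand requires work. Setting $t=\frac{\alpha-1}{\alpha}$ so that $t\to 0$ and $\frac{\alpha}{\alpha-1}=\frac{1}{t}$, the quantity of interest takes the form $\frac{1}{t}\log\bigl(1+g(t)\bigr)$ with $g(t)=\varepsilon\kappa^{t}-\varepsilon^{1+t}(1+\varepsilon)^{-t}$ and $g(0)=0$. Since $\log(1+g(t))\sim g(t)$ as $t\to 0$, one L'H\^opital step (or equivalently a first-order Taylor expansion of $g$ at $t=0$) yields
\begin{equation*}
g'(0)=\varepsilon\log\kappa-\varepsilon\bigl(\log\varepsilon-\log(1+\varepsilon)\bigr)=\varepsilon\log\kappa+\varepsilon\log\tfrac{1+\varepsilon}{\varepsilon}.
\end{equation*}
Adding the $\log(1+\varepsilon)$ term and rearranging gives $\varepsilon\log\kappa+(1+\varepsilon)\log(1+\varepsilon)-\varepsilon\log\varepsilon$, which is exactly $\varepsilon\log\kappa+(1+\varepsilon)h\bigl(\tfrac{\varepsilon}{1+\varepsilon}\bigr)$.

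Once the limits of the right-hand side are computed, the left-hand side needs to converge to what we write; for this I would cite \Cref{lem:limit-change}, which ensures $\widetilde D_{\alpha,\cC}(\rho)\to\widetilde D_{1,\cC}(\rho)$ (resp.\ $\widetilde D_{\infty,\cC}(\rho)$). The only mildly delicate point is the derivative computation at $\alpha\to 1$: one has to be careful to collect the contributions from $\kappa^{t}$, $\varepsilon^{1+t}$, and $(1+\varepsilon)^{-t}$ correctly and to recognise the resulting combination as a binary-entropy term. No other obstacles arise, and the proof is essentially a deterministic calculation analogous to that of \Cref{lem:limit-axiomatic}.
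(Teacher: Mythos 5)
Your proposal is correct and follows essentially the same route as the paper: the paper also substitutes $\beta=\frac{\alpha-1}{\alpha}$ and evaluates the $\alpha\to 1$ limit by l'H\^opital (your first-order expansion of $g$ is the same computation), while the $\alpha\to\infty$ limit is direct substitution in both. Your explicit appeal to \cref{lem:limit-change} for convergence of the left-hand side is a sensible addition that the paper leaves implicit.
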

\begin{proof}
     With $\beta = \frac{\alpha - 1}{\alpha}$ l'Hospital's rule lets us infer that
    \begin{align}
        \lim_{\beta \to 0}(\beta)^{-1}\log(1 + \varepsilon \kappa^{ \beta} - \frac{\varepsilon^{1+ \beta}}{(1 + \varepsilon)^{\beta}})&=\lim_{\beta \to 0}\frac{\varepsilon \kappa^{ \beta} \log{\kappa} - \log(\varepsilon)\frac{\varepsilon^{1+ \beta}}{(1 + \varepsilon)^{\beta}}+\log(1+\varepsilon)\frac{\varepsilon^{1+ \beta}}{(1 + \varepsilon)^{\beta}}}{1 + \varepsilon \kappa^{\beta} - \frac{\varepsilon^{1+ \beta}}{(1 + \varepsilon)^{\beta}}}\\
        &= \varepsilon \log{\kappa} - \varepsilon \log{\varepsilon} + \varepsilon \log(1+\varepsilon) \, ,
    \end{align}
    and
    \begin{equation}
        \lim_{\alpha \to \infty} \frac{\alpha}{\alpha - 1}\log(1 + \varepsilon \kappa^{ \frac{\alpha - 1}{\alpha}} - \frac{\varepsilon^{2 - \frac{1}{\alpha}}}{(1 + \varepsilon)^{\frac{\alpha - 1}{\alpha}}}) = \log(1 + \varepsilon \kappa - \frac{\varepsilon^{2}}{(1 + \varepsilon)}) \, .
    \end{equation}
\end{proof}

\section{Continuity bounds for sandwiched Rényi divergences}\label{sec:CB_Renyi}

\subsection{Continuity bounds for the sandwiched Rényi conditional entropy}\label{sec:sand-cond-entropy}
In this section, we prove and compare the bounds for the sandwiched Rényi conditional entropy using the three different approaches: almost-additive, operator space, and mixed. We will start with a discussion of these bounds for the example of the sandwiched Rényi conditional entropies, but our conclusions carry over to the other quantities in the following sections as well.

The comparison is shown in \Cref{fig:comparison-of-bounds}. As we discussed in \Cref{sec:main-results}, the strength and weaknesses of each bound depend on the combination of parameters $\alpha$, $d_A$, and $\varepsilon$.
Our analysis reveals that the almost additive approach performs best in the low $d_A$ regime with small $\alpha$, followed by a region where the mixed approach is superior, and then with increasing $\alpha$, the operator-space approach outperforms the other two. As we increase $d_A$, the almost additive approach becomes progressively weaker compared to the mixed and operator-space approaches. Specifically, in the low $\alpha$ regime, the mixed approach dominates, while in the high $\alpha$ regime, the operator-space approach performs best. This improvement in performance with increasing $d_A$ for the mixed and operator-space approaches is due to their scaling with $d_A^{2\frac{\alpha - 1}{\alpha}}$, which is more favourable than the scaling with $d_A^{2(\alpha - 1)}$ of the bound we got using the almost additive approach.
Notably, the superior scaling of the mixed and operator-space approaches allows us to take the limit $\alpha \to \infty$ and obtain a continuity bound. However, the almost additive approach fails to have this property and does not vanish for $\varepsilon \to 0$ in this limit. 
\begin{figure}[ht!]
    \centering
    \includegraphics[width=\linewidth]{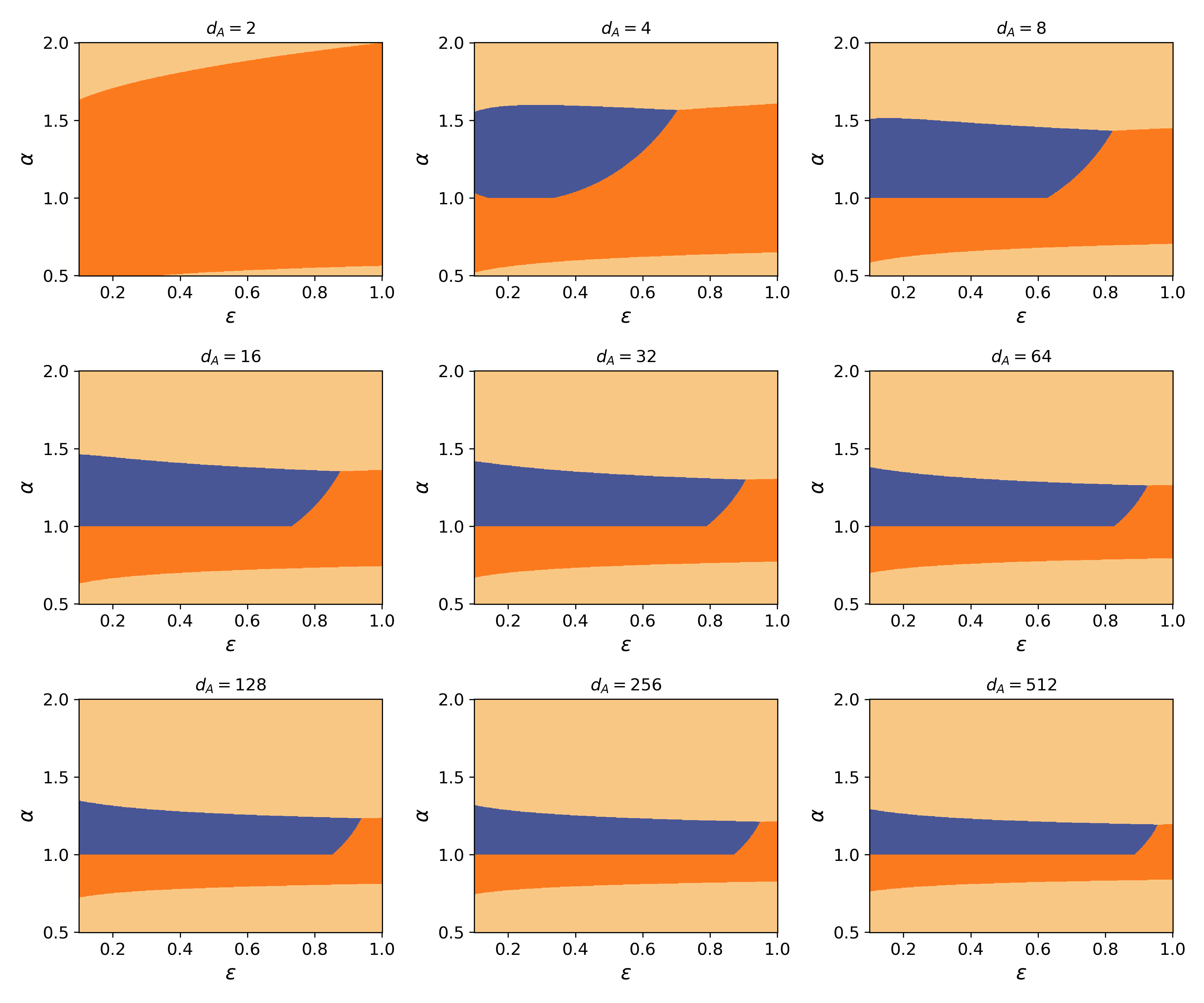}
    \caption{A comparison of the continuity bounds for $\tilde{H}_\alpha(A|B)_\rho$ proven by the \textcolor{tbcolor1}{\rule{0.3cm}{0.3cm} almost-additive}, \textcolor{tbcolor2}{\rule{0.3cm}{0.3cm} operator space}, and \textcolor{tbcolor3}{\rule{0.3cm}{0.3cm} mixed} approach. The value of $d_A$ is according to the title of each plot. The visible colour indicates where the respective bound outperforms (is tighter than) the others.}
    \label{fig:comparison-of-bounds}
\end{figure}

\begin{corollary}\label{cor:continuity-bounds-renyi-conditional-entropy}
    Let $\rho, \sigma \in \cS(\cH_A \otimes \cH_B)$, with $\frac{1}{2}\norm{\rho - \sigma} \le \varepsilon$, then for $\alpha \in [1/2, 1)$
    \begin{equation}\label{eq:bounds-renyi-conditional-entropy-alpha<1}
          |\widetilde H^\uparrow_\alpha(A|B)_\rho - \widetilde H^\uparrow_\alpha(A|B)_\sigma| \le \log(1 + \varepsilon) + \frac{1}{1 - \alpha}\log(1 + \varepsilon^\alpha d_A^{2(1-\alpha)} - \frac{\varepsilon}{(1 + \varepsilon)^{1-\alpha}}) \, ,
    \end{equation}
     which is the bound from \cite{marwah2022} (see Eq.\ \ref{eq:marwah-smaller-bound-than-1}), and for $\alpha \in (1, \infty)$
    \begin{equation}\label{eq:bounds-renyi-conditional-entropy-alpha>1}
         |\widetilde H^\uparrow_\alpha(A|B)_\rho - \widetilde H^\uparrow_\alpha(A|B)_\sigma|  \le \min \begin{cases}
             \log(1 + \varepsilon) + \frac{1}{\alpha - 1}\log(1 + \varepsilon d_A^{2(\alpha - 1)}- \frac{\varepsilon^\alpha}{(1 + \varepsilon)^{\alpha - 1}}) \, ,\\
             \frac{\alpha}{\alpha - 1}\log(1 + \varepsilon d_A^{2\frac{\alpha - 1}{\alpha}}) \, ,\\
             \log(1 + \varepsilon) + \frac{\alpha}{\alpha - 1}\log(1 + \varepsilon d_A^{2 \frac{\alpha - 1}{\alpha}} - \frac{\varepsilon^{2 - \frac{1}{\alpha}}}{(1 + \varepsilon)^{\frac{\alpha - 1}{\alpha}}})  \, . 
         \end{cases}
    \end{equation}
\end{corollary}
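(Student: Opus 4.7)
The plan is to recognise the sandwiched Rényi conditional entropy as (a constant minus) the minimal $\widetilde D_\alpha$-distance from $\rho$ to a suitable compact, convex subset of $\cS(\cH_A \otimes \cH_B)$, and then just invoke the three main theorems of \cref{sec:technical_tools}. All four inequalities of the corollary are then obtained by a uniform choice of the parameter $\kappa$.

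Concretely, I would set $\cC := \{\tfrac{1}{d_A}\1_A \otimes \tau_B : \tau_B \in \cS(\cH_B)\} \subseteq \cS(\cH_A \otimes \cH_B)$. The set $\cC$ is the image of $\cS(\cH_B)$ under an affine map, hence compact and convex, and it contains the maximally mixed state $\tfrac{1}{d_A d_B}\1_{AB}$, which is positive definite, so the hypotheses of \cref{theo:minimial-distance-cc-set-axiomatic-approach,theo:minimial-distance-cc-set-operator-approach,theo:minimial-distance-cc-set-mixed-approach} are satisfied. A short direct computation from the definition of $\widetilde Q_\alpha$ shows $\widetilde D_\alpha(\rho \Vert c \sigma) = \widetilde D_\alpha(\rho \Vert \sigma) - \log c$ for any scalar $c > 0$, so that
\begin{equation*}
    \widetilde D_{\alpha, \cC}(\rho) \;=\; \inf_{\tau_B \in \cS(\cH_B)} \widetilde D_\alpha\!\left(\rho \,\Big\Vert\, \tfrac{1}{d_A}\1_A \otimes \tau_B\right) \;=\; \log d_A - \widetilde H^{\uparrow}_\alpha(A|B)_\rho .
\end{equation*}
In particular $|\widetilde H^{\uparrow}_\alpha(A|B)_\rho - \widetilde H^{\uparrow}_\alpha(A|B)_\sigma| = |\widetilde D_{\alpha, \cC}(\rho) - \widetilde D_{\alpha, \cC}(\sigma)|$, so it suffices to bound the right-hand side by the three theorems.

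Next I would fix the uniform bound $\kappa$ required by the theorems. By the standard estimate \eqref{eq:bound-sandwiched-cond-renyi} one has $\widetilde H^{\uparrow}_\alpha(A|B)_\rho \ge -\log d_A$ for every $\alpha$ and every state $\rho$, hence $\widetilde D_{\alpha, \cC}(\rho) \le 2 \log d_A$ uniformly, so that $\kappa = d_A^2$ is an admissible choice (and independent of $\alpha$).

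Substituting $\kappa = d_A^2$ into the three main theorems then yields the corollary immediately: \cref{theo:minimial-distance-cc-set-axiomatic-approach} with $\alpha \in [1/2,1)$ gives \eqref{eq:bounds-renyi-conditional-entropy-alpha<1}, the same theorem with $\alpha \in (1,\infty)$ gives the first branch of the minimum in \eqref{eq:bounds-renyi-conditional-entropy-alpha>1}, \cref{theo:minimial-distance-cc-set-operator-approach} gives the second branch, and \cref{theo:minimial-distance-cc-set-mixed-approach} gives the third. There is no real obstacle; the only detail requiring a moment of care is the rescaling by $\tfrac{1}{d_A}$, which is needed so that $\cC \subseteq \cS(\cH_{AB})$ (as required by the theorems) and which shifts $\widetilde D_{\alpha, \cC}$ by $\log d_A$ but leaves the difference $\widetilde D_{\alpha, \cC}(\rho) - \widetilde D_{\alpha, \cC}(\sigma)$ unchanged.
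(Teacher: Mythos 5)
Your proposal is correct and is essentially identical to the paper's own proof: the same set $\cC = \{d_A^{-1}\1_A \otimes \tau_B : \tau_B \in \cS(\cH_B)\}$, the same rescaling identity $\widetilde D_\alpha(\rho\Vert c\sigma) = \widetilde D_\alpha(\rho\Vert\sigma) - \log c$ (the paper phrases it as $(1-\alpha)$-homogeneity of $\widetilde Q_\alpha$ in the second argument), the same bound $\kappa = d_A^2$ from Eq.~\eqref{eq:bound-sandwiched-cond-renyi}, and the same invocation of \cref{theo:minimial-distance-cc-set-axiomatic-approach}, \cref{theo:minimial-distance-cc-set-operator-approach}, and \cref{theo:minimial-distance-cc-set-mixed-approach}.
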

\begin{proof}
    We have that $\cC = \{d_A^{-1} \1_A \otimes \sigma_B \;:\; \sigma_B \in \cS(\cH_B)\}$ is clearly a convex and compact set, containing a positive definite state. Using this definition and the $(1 - \alpha)$-homogeneity of the $\alphaQ$ in the second argument, we get
    \begin{equation}
        \widetilde H^\uparrow_\alpha(A|B)_\rho = - \widetilde D_{\alpha, \cC}(\rho) + \log d_A
    \end{equation}
    and hence
    \begin{equation}
        |\widetilde H^\uparrow_\alpha(A|B)_\rho - \widetilde H^\uparrow_\alpha(A|B)_\sigma| = |\widetilde D_{\alpha, \cC}(\rho) - \widetilde D_{\alpha, \cC}(\sigma)|
    \end{equation}
    We further have that $\sup_{\rho \in \cS(\cH)} \widetilde D_{\alpha, \cC}(\rho) \le 2\log d_A$ using Eq.\ \eqref{eq:bound-sandwiched-cond-renyi} and hence can directly apply \Cref{theo:minimial-distance-cc-set-almost-additive-approach}, \Cref{theo:minimial-distance-cc-set-operator-approach}, and \Cref{theo:minimial-distance-cc-set-mixed-approach} to obtain the bounds in the assertion.
\end{proof}

We can now study the limits $\alpha \to 1$ and $\alpha \to \infty$ of the new bound in \Cref{cor:continuity-bounds-renyi-conditional-entropy}. We find that the former limit coincides with the bounds by Alicki, Fannes and Winter and the latter bound with the one found by Marwah \& Dupuis in the appendix of \cite[Theorem 2]{marwah2022}.

\begin{corollary}[Limits]\label{cor:limits-renyi-conditional-entropy}
    Let $\rho, \sigma \in \cS(\cH_A \otimes \cH_B)$, with $\tfrac{1}{2}\norm{\rho - \sigma}_1 \le \varepsilon$. Then, the limit $\alpha \to 1$ of Equation \eqref{eq:bounds-renyi-conditional-entropy-alpha<1} yields
    \begin{equation}
         \abs{H( A | B )_\rho - H( A | B )_\sigma} \leq  2 \varepsilon \log d_A + (1+ \varepsilon) h\left(\frac{\varepsilon}{1+\varepsilon}\right) \, ,
    \end{equation}
    and of Equation \eqref{eq:bounds-renyi-conditional-entropy-alpha>1}
    \begin{equation}
             \abs{H( A | B )_\rho - H( A | B )_\sigma} \le \min \begin{cases}
                 2 \varepsilon \log d_A + (1+ \varepsilon) h\left(\frac{\varepsilon}{1+\varepsilon}\right) \, ,\\
                 \infty \, ,\\
                 2 \varepsilon \log d_A + (1+ \varepsilon) h\left(\frac{\varepsilon}{1+\varepsilon}\right) \, ,
             \end{cases} 
    \end{equation}
    where $H_\rho( A | B )$ is the usual quantum conditional entropy, which is the bound derived in \cite{Winter-AlickiFannes-2016}. The limit $\alpha \to \infty$ of \Cref{eq:bounds-renyi-conditional-entropy-alpha>1} yields
    \begin{equation}
         |H^\uparrow_\infty(A|B)_\rho -  H^\uparrow_\infty(A|B)_\sigma| \le \min\begin{cases}
             \log(1 + \varepsilon) + 2 \log d_A \, , \\
              \log(1 + \varepsilon d_A^2) \, ,\\
              \log((1 + \varepsilon)(1 + \varepsilon d_A^{2}) - \varepsilon^{2}) \, .
         \end{cases} 
    \end{equation}
\end{corollary}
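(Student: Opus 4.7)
The plan is to take the $\alpha \to 1$ and $\alpha \to \infty$ limits of the three upper bounds appearing in \cref{cor:continuity-bounds-renyi-conditional-entropy} and match them with the inequalities claimed in the statement. First I would observe that, in the proof of the previous corollary, the relevant convex compact set is $\cC = \{d_A^{-1}\,\1_A \otimes \sigma_B : \sigma_B \in \cS(\cH_B)\}$, and the uniform bound $\sup_{\rho} \widetilde D_{\alpha,\cC}(\rho) \le 2\log d_A$ holds independently of $\alpha$. Consequently, the constant $\kappa$ appearing in the limit lemmas may be taken uniformly as $\kappa = d_A^2$.

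For the right-hand sides, I would then plug $\kappa = d_A^2$ directly into \cref{lem:limit-axiomatic,lem:limit-operator,lem:limit-mixed}. \cref{lem:limit-axiomatic} yields the axiomatic limits: $\varepsilon\log\kappa + (1+\varepsilon)\,h(\varepsilon/(1+\varepsilon)) = 2\varepsilon\log d_A + (1+\varepsilon)\,h(\varepsilon/(1+\varepsilon))$ for $\alpha \to 1$, and $\log(1+\varepsilon) + \log\kappa = \log(1+\varepsilon) + 2\log d_A$ for $\alpha \to \infty$. \cref{lem:limit-operator} contributes the divergence at $\alpha \to 1$ and the limit $\log(1+\varepsilon\kappa) = \log(1+\varepsilon d_A^2)$ at $\alpha \to \infty$. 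Finally, \cref{lem:limit-mixed} reproduces the same $\alpha \to 1$ limit as the axiomatic approach and gives $\log((1+\varepsilon)(1+\varepsilon\kappa) - \varepsilon^2) = \log((1+\varepsilon)(1+\varepsilon d_A^2) - \varepsilon^2)$ at $\alpha \to \infty$.

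For the left-hand sides, I would use the identity $\widetilde H^\uparrow_\alpha(A|B)_\rho = -\widetilde D_{\alpha,\cC}(\rho) + \log d_A$ and combine \cref{lem:limit-change} (which allows swapping the limit in $\alpha$ with the infimum defining $\widetilde D_{\alpha,\cC}$) together with \cref{prop:limits} (which identifies the pointwise limits of $\widetilde D_\alpha$ with $D$ and $D_\infty$ respectively). This produces $\widetilde H^\uparrow_\alpha(A|B)_\rho \to H(A|B)_\rho$ as $\alpha \to 1$ and $\widetilde H^\uparrow_\alpha(A|B)_\rho \to H^\uparrow_\infty(A|B)_\rho$ as $\alpha \to \infty$. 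Since the inequalities in \cref{cor:continuity-bounds-renyi-conditional-entropy} hold for each fixed $\alpha$ in the respective range, passing to the limit on both sides yields the claimed bounds.

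The only genuinely non-trivial step is the interchange of the $\alpha$-limit with the infimum over $\cC$ on the left-hand side; this has already been dealt with in \cref{lem:limit-change}, so it does not pose a real obstacle. Everything else reduces to straightforward substitution of $\kappa = d_A^2$ into the limit expressions supplied by the three limit lemmas.
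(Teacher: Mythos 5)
Your proposal is correct and follows essentially the same route as the paper: the paper's proof is simply a direct application of \cref{lem:limit-axiomatic}, \cref{lem:limit-operator}, and \cref{lem:limit-mixed} with $\kappa = d_A^2$ in the setting of \cref{cor:continuity-bounds-renyi-conditional-entropy}. Your additional care about interchanging the $\alpha$-limit with the infimum on the left-hand side via \cref{lem:limit-change} and \cref{prop:limits} is exactly the mechanism the paper relies on (implicitly) as well.
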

\begin{proof}
    The proof is a direct application of \Cref{lem:limit-almost-additive}, \Cref{lem:limit-operator}, \Cref{lem:limit-mixed} in the context of \Cref{cor:continuity-bounds-renyi-conditional-entropy}.
\end{proof}

\subsection{Continuity bounds for the sandwiched Rényi mutual information}\label{sec:sand-mutual-info}

Now, we can bring our techniques to bear on the sandwiched Rényi mutual information. We, unfortunately, cannot employ the theorems established in the beginning as $\{ \rho_A \otimes \rho_B \;:\; \rho_A \in \cS(\cH_A), \rho_B \in \cS(\cH_B)\}$ is not a convex set, however, will use techniques very similar to those presented already.

\begin{corollary}
    Let $\rho, \sigma \in \cS(\cH_A \otimes \cH_B)$ with $\tfrac{1}{2}\norm{\rho - \sigma}_1 \le \varepsilon$. Then, for $ \alpha \in [1/2, 1)$ we find
    \begin{equation}\label{eq:almost-additive-MI-smaller}
         |\alphaI(A:B)_\rho - \alphaI(A:B)_\sigma| \le 2\log(1 + \varepsilon^{\frac{1}{\alpha}}) + \frac{1}{1 - \alpha}\log(1 + \varepsilon^\alpha m^{2(1 - \alpha)} - \frac{\varepsilon^{\frac{1}{\alpha}}}{(1 + \varepsilon^\frac{1}{\alpha})^{2(1 - \alpha)}}) \, ,
    \end{equation}
    and for $\alpha \in (1, \infty)$ we have
    \begin{equation}\label{eq:almost-additive-MI-bigger}
        |\alphaI(A:B)_\rho - \alphaI(A:B)_\sigma| \le 2\log(1 + \varepsilon^{\frac{1}{\alpha}}) + \frac{1}{\alpha - 1}\log(1 + \varepsilon^{\frac{1}{\alpha}} m^{2(\alpha - 1)} - \frac{\varepsilon^\alpha}{(1 + \varepsilon^{\frac{1}{\alpha}})^{2(\alpha - 1)}}) \, ,
    \end{equation}
    where in both bounds $m = \min\{d_A, d_B\}$.
\end{corollary}
\begin{proof}
    We will only demonstrate the proof for $\alpha \in (1, \infty)$ as the case $\alpha \in [1/2,1)$ is almost completely analogous up to reversing the inequalities. We further will only cover the case where $\tfrac{1}{2}\norm{\rho - \sigma}_1 = \varepsilon$, since the proposed bounds are monotone in $\varepsilon$. Let $\mu, \nu$ be orthogonal quantum states such that $\rho + \varepsilon\mu = \sigma + \varepsilon\nu$. It is straightforward to see that deriving the claimed bound boils down to deriving upper bounds on 
    \begin{equation}
        \frac{\inf\limits_{\tau_A, \tau_B} \alphaQ(\rho \Vert \tau_A \otimes \tau_B)}{\inf\limits_{\tau_A, \tau_B} \alphaQ(\sigma \Vert \tau_A \otimes \tau_B)} \quad \text{and} \quad \frac{\inf\limits_{\tau_A, \tau_B} \alphaQ(\sigma \Vert \tau_A \otimes \tau_B)}{\inf\limits_{\tau_A, \tau_B} \alphaQ(\rho \Vert \tau_A \otimes \tau_B)} \, .
    \end{equation}
    Since the proof for both fractions is exactly the same, we will only demonstrate it for the first one here. Employing joint convexity of the $\alphaQ$ (c.f. \Cref{lemma:Q-alpha-joint-convexity/concavity}), we get that 
    \begin{equation}
        \inf\limits_{\tau_A, \tau_B} \alphaQ(\sigma + \varepsilon \nu \Vert \tau_A \otimes \tau_B) \le (1 + \varepsilon)^{\alpha - 1} (\inf\limits_{\tau_A} (\inf_{\tau_B} \alphaQ(\sigma \Vert \tau_A \otimes \tau_B) + \varepsilon\inf\limits_{\tau_B} \alphaQ(\nu \Vert \tau_A \otimes \tau_B)))
    \end{equation}
    where we could split the first infimum by writing the optimization as an optimization over $\tau_{B, 1}$ and $\tau_{B, 2}$, replacing $\tau_B = \frac{1}{1 + \varepsilon}\tau_{B, 1} + \frac{\varepsilon}{1 + \varepsilon} \tau_{B, 2}$ and then apply joint convexity. The second infimum is divided in $\tau_A = \frac{1}{1 + \varepsilon^{\frac{1}{\alpha}}} \tau_{A, 1} + \frac{\varepsilon^{\frac{1}{\alpha}}}{1 + \varepsilon^{\frac{1}{\alpha}}} \tau_{A, 2}$ and bounded via the anti-monotonicity of $\alphaQ$ in the second argument (see \cite[Lemma 4.10]{Tomamichel_2016}) and $\tau_A \ge  \frac{1}{1 + \varepsilon^{\frac{1}{\alpha}}} \tau_{A, 1}$, $\tau_A \ge \frac{\varepsilon^{\frac{1}{\alpha}}}{1 + \varepsilon^{\frac{1}{\alpha}}} \tau_{A, 2}$. This gives
    \begin{align}
        (1 + \varepsilon)^{\alpha - 1} (\inf\limits_{\tau_A} (\inf\limits_{\tau_B} &\alphaQ(\sigma \Vert \tau_A \otimes \tau_B) + \varepsilon\inf\limits_{\tau_B} \alphaQ(\nu \Vert \tau_A \otimes \tau_B))) \\
        &\le (1 + \varepsilon)^{\alpha - 1}(1 + \varepsilon^{\frac{1}{\alpha}})^{\alpha - 1}(\inf\limits_{\tau_A, \tau_B} \alphaQ(\sigma \Vert \tau_A \otimes \tau_B) + \varepsilon^{1 + \frac{1 - \alpha}{\alpha}}\inf\limits_{\tau_A, \tau_B}\alphaQ(\nu \Vert \tau_A \otimes \tau_B))\\
        &\le (1 + \varepsilon^{\frac{1}{\alpha}})^{2(\alpha - 1)}(\inf\limits_{\tau_A, \tau_B }\Qalpha(\sigma \Vert \tau_A \otimes \tau_B) + \varepsilon^{\frac{1}{\alpha}} \inf\limits_{\tau_A, \tau_B} \Qalpha(\nu \Vert \tau_A \otimes \tau_B))\,.
    \end{align}
    Now, using again the superadditivity (c.f. \Cref{lemma:Q-alpha-super/subadditivity}), we get
    \begin{align}
         \inf\limits_{\tau_A, \tau_B} \alphaQ(\sigma + \varepsilon \nu \Vert \tau_A \otimes \tau_B) &= \inf\limits_{\tau_A, \tau_B} \alphaQ(\rho + \varepsilon \mu \Vert \tau_A \otimes \tau_B)\\
         &\ge \inf\limits_{\tau_A, \tau_B} \alphaQ(\rho \Vert \tau_A \otimes \tau_B) + \varepsilon^\alpha \inf\limits_{\tau_A, \tau_B} \alphaQ( \mu \Vert \tau_A \otimes \tau_B)
    \end{align}
    It holds that for a quantum state $\xi$, $1 \le \inf\limits_{\tau_A, \tau_B}\alphaQ(\xi\Vert \tau_A \otimes \tau_B) \le m^{2(\alpha - 1)}$, where $m = \min \{ d_A, d_B\}$. Indeed, assuming without loss of generality $m = d_A$, we can estimate $\widetilde I_\alpha(A:B)_\xi \leq - \widetilde H(A|B)_\xi + \log d_A$ and use the bound in Eq.\ \eqref{eq:bound-sandwiched-cond-renyi}. Combining both estimates with this bound, we get
    \begin{equation}
        \inf\limits_{\tau_A, \tau_B} \alphaQ(\rho \Vert \tau_A \otimes \tau_B) \le (1 + \varepsilon^{\frac{1}{\alpha}})^{2(\alpha - 1)}(\inf\limits_{\tau_A, \tau_B} \alphaQ(\sigma \Vert \tau_A \otimes \tau_B) + \varepsilon^{\frac{1}{\alpha}}m^{2(\alpha - 1)}) - \varepsilon^\alpha \,. 
    \end{equation}
    Subsequently, we divide by $\inf\limits_{\tau_A, \tau_B} \alphaQ(\sigma \Vert \tau_A \otimes \tau_B)$ and use again that for $\xi$ a quantum state $1 \le \inf\limits_{\tau_A, \tau_B} \alphaQ(\xi \Vert \tau_A \otimes \tau_B)$. In addition, we use that
    $(1 + \varepsilon^{\frac{1}{\alpha}})^{2(\alpha - 1)}\varepsilon^{\frac{1}{\alpha}}m^{2(\alpha - 1)} - \varepsilon^\alpha \geq \varepsilon^{ 1/\alpha}- \varepsilon^{\alpha} \geq 0$.
    This leads to
      \begin{align}
        \frac{\inf\limits_{\tau_A, \tau_B} \alphaQ(\rho \Vert \tau_A \otimes \tau_B)}{\inf\limits_{\tau_A, \tau_B} \alphaQ(\sigma \Vert \tau_A \otimes \tau_B)} &\le(1 + \varepsilon^{\frac{1}{\alpha}})^{2(\alpha - 1)}\left(1 + \varepsilon^{\frac{1}{\alpha}}m^{2(\alpha - 1)} - \frac{\varepsilon^\alpha}{(1 + \varepsilon^{\frac{1}{\alpha}})^{2(\alpha - 1)}}\right)
    \end{align}
    Applying the logarithm, multiplying with $\frac{1}{\alpha - 1}$ and then repeating the whole procedure for the other fraction gives the claimed result.
\end{proof}

\begin{corollary}[Limits]
Let $\rho, \sigma \in \cS(\cH_A \otimes \cH_B)$ with $\tfrac{1}{2}\norm{\rho - \sigma}_1 \le \varepsilon$. Then, taking $\alpha \to 1$ in  Eqs.\ \eqref{eq:almost-additive-MI-smaller} and \eqref{eq:almost-additive-MI-bigger} yields
    \begin{equation}
        |I(A:B)_\rho - I(A:B)_\sigma| \leq 2 \varepsilon \log{m} + 2 (1+\varepsilon) h\left(\frac{\varepsilon}{1+\varepsilon}\right),
    \end{equation}
    where $I(A:B)_\rho$ is the usual mutual information. This is the bound from \cite{Shirokov_Mutual_Information}. Taking the limit $\alpha \to \infty$ leads to 
        \begin{equation}
        |I^\uparrow_\infty(A:B)_\rho - I^\uparrow_\infty(A:B)_\sigma| \leq \log{4 m^2},
    \end{equation}
    which is no longer a continuity bound.
\end{corollary}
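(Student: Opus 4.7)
The plan is to verify both claimed limits by carrying out the same l'Hôpital analysis already used in Lemmas \ref{lem:limit-axiomatic} and \ref{lem:limit-mixed}, applied separately to the $\alpha\in[1/2,1)$ bound \eqref{eq:axiomatic-MI-smaller} and the $\alpha\in(1,\infty)$ bound \eqref{eq:axiomatic-MI-bigger}. By symmetry of the two expressions (they are related by swapping $\alpha\leftrightarrow 1/\alpha$-like roles), the two one-sided limits $\alpha\nearrow 1$ and $\alpha\searrow 1$ should give the same value; in the plan below I focus on the $\alpha\searrow 1$ case coming from \eqref{eq:axiomatic-MI-bigger}.

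For the limit $\alpha\to 1$, first observe that $2\log(1+\varepsilon^{1/\alpha})\to 2\log(1+\varepsilon)$ trivially. For the second term, set
\begin{equation}
f(\alpha)\;=\;1+\varepsilon^{1/\alpha}m^{2(\alpha-1)}-\frac{\varepsilon^\alpha}{(1+\varepsilon^{1/\alpha})^{2(\alpha-1)}},
\end{equation}
and note $f(1)=1+\varepsilon-\varepsilon=1$, so $\tfrac{1}{\alpha-1}\log f(\alpha)$ is of the indeterminate form $0/0$. By l'Hôpital, the limit equals $f'(1)$. Differentiating the two summands using $\tfrac{d}{d\alpha}\varepsilon^{1/\alpha}=-\tfrac{\log\varepsilon}{\alpha^2}\varepsilon^{1/\alpha}$ and the logarithmic derivative of the third term, one obtains
\begin{equation}
f'(1)\;=\;-2\varepsilon\log\varepsilon+2\varepsilon\log m+2\varepsilon\log(1+\varepsilon).
\end{equation}
Adding the contribution of the first term gives $2(1+\varepsilon)\log(1+\varepsilon)-2\varepsilon\log\varepsilon+2\varepsilon\log m$, which by the identity $(1+\varepsilon)h\!\left(\tfrac{\varepsilon}{1+\varepsilon}\right)=-\varepsilon\log\varepsilon+(1+\varepsilon)\log(1+\varepsilon)$ (already used throughout the paper) equals the stated bound $2\varepsilon\log m+2(1+\varepsilon)h(\tfrac{\varepsilon}{1+\varepsilon})$. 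The analogous computation for \eqref{eq:axiomatic-MI-smaller} with $\alpha\nearrow 1$ goes through identically and produces the same value, so the two-sided limit at $\alpha=1$ exists and matches \cite{Shirokov_Mutual_Information}.

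For the limit $\alpha\to\infty$, the first term is easy: $\varepsilon^{1/\alpha}\to 1$, so $2\log(1+\varepsilon^{1/\alpha})\to 2\log 2=\log 4$. Inside the second term, the contribution $\varepsilon^{1/\alpha}m^{2(\alpha-1)}$ grows like $m^{2(\alpha-1)}$ while $\varepsilon^\alpha/(1+\varepsilon^{1/\alpha})^{2(\alpha-1)}\to 0$ (for $\varepsilon<1$; the case $\varepsilon=0$ is trivial and $\varepsilon=1$ is handled by direct substitution). Hence $\tfrac{1}{\alpha-1}\log f(\alpha)\to 2\log m$, and combining we recover $\log 4+\log m^2=\log(4m^2)$, which as claimed is no longer a continuity bound since it fails to vanish as $\varepsilon\to 0$.

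The only nontrivial step is the bookkeeping of the derivative $f'(1)$; the main risk is forgetting that the extra $(\alpha-1)$ factor multiplying $\log(1+\varepsilon^{1/\alpha})$ in $\log g(\alpha)$ kills the term involving $\tfrac{d}{d\alpha}\log(1+\varepsilon^{1/\alpha})$ at $\alpha=1$, which is what makes the computation clean. Everything else is routine substitution and invoking the binary-entropy identity already exploited in \Cref{lem:limit-axiomatic,lem:limit-mixed}.
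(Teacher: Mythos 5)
Your proposal is correct and follows essentially the same route as the paper: the paper's proof simply invokes l'Hôpital's rule "as in Lemma \ref{lem:limit-mixed}" for $\alpha\to 1$ and factors out $m^{2(\alpha-1)}$ inside the logarithm for $\alpha\to\infty$, which is exactly the dominant-term argument you give. Your computation of $f'(1)$ and the use of the identity $(1+\varepsilon)h\bigl(\tfrac{\varepsilon}{1+\varepsilon}\bigr)=-\varepsilon\log\varepsilon+(1+\varepsilon)\log(1+\varepsilon)$ check out, so you have merely written out in detail what the paper leaves implicit.
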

\begin{proof}
    The first limit can be obtained using l'Hospital's rule as in \Cref{lem:limit-mixed}. The second limit can be obtained by isolating $m^{2(\alpha -1)}$ in the logarithm.
\end{proof}

\subsection{Continuity bounds for the sandwiched Rényi conditional mutual information}\label{sec:sand-cond-mutual-info}

Let us recall that given a tripartite space $\cH_{ABC}=\cH_A \otimes \cH_B \otimes \cH_C$ and  $\rho_{ABC} \in  \cS(\cH_{ABC}) $, the sandwiched R{\'e}nyi conditional mutual information of $\rho_{ABC}$ is given by 
\begin{equation}
    \widetilde I^\uparrow_\alpha(A:C|B)_\rho := \widetilde H^\uparrow_\alpha(A|B)_\rho - \widetilde H^\uparrow_\alpha(A|BC)_\rho \, .
\end{equation}
We also define the max conditional mutual information 
\begin{equation}
    \widetilde I^\uparrow_\infty(A:C|B)_\rho := \widetilde H^\uparrow_\infty(A|B)_\rho - \widetilde H^\uparrow_\infty(A|BC)_\rho \, .
\end{equation}
and the quantum conditional mutual information
\begin{equation} \label{eq:cond-mutual-information}
    \widetilde I(A:C|B)_\rho := \widetilde H(A|B)_\rho - \widetilde H(A|BC)_\rho \, .
\end{equation}

As a consequence of the definition, we can derive continuity bounds for the sandwiched Rényi conditional mutual information in terms of the continuity bounds for sandwiched Rényi conditional entropies obtained in Section \ref{sec:sand-cond-entropy}. This is the content of the next result. 
\begin{corollary}[Continuity bound for sandwiched Rényi conditional mutual information]\label{cor:continuity-bound-renyi-CMI}
    Let $\rho$, $\sigma \in \mathcal S(\mathcal H_{ABC})$ be quantum states, with $\tfrac{1}{2}\norm{\rho - \sigma}_1 \le \varepsilon$.
    Then, for $\alpha \in [1/2,1)$ we have 
    \begin{align} \label{eq:CB-sand-CMI-smaller1}
        |\widetilde I^\uparrow_\alpha(A:C|B)_\rho  - \widetilde I^\uparrow_\alpha(A:C|B)_\sigma| \le 2\log(1 + \varepsilon) + \frac{2}{1 - \alpha}\log(1 + \varepsilon^\alpha d_A^{2(1 - \alpha)} - \frac{\varepsilon}{(1 + \varepsilon)^{1 - \alpha}}) \, , 
    \end{align}
    and for $\alpha \in (1, \infty)$,
    \begin{align} \label{eq:CB-sand-CMI-larger1}
        |\widetilde I^\uparrow_\alpha(A:C|B)_\rho  - \widetilde I^\uparrow_\alpha(A:C|B)_\sigma| \le \min \begin{cases}
            2 \log(1 + \varepsilon) + \frac{2}{\alpha - 1}\log(1 + \varepsilon d_A^{2(\alpha - 1)} - \frac{\varepsilon^\alpha}{(1 + \varepsilon)^{\alpha - 1}}) \, , \\
            \frac{2\alpha}{\alpha - 1}\log(1 + \varepsilon d_A^{2\frac{\alpha - 1}{\alpha}}) \, ,\\
            2 \log(1 + \varepsilon) + \frac{2 \alpha}{\alpha - 1}\log(1 + \varepsilon d_A^{2 \frac{\alpha - 1}{\alpha}} - \frac{\varepsilon^{2 - \frac{1}{\alpha}}}{(1 + \varepsilon)^{\frac{\alpha - 1}{\alpha}}}) \, . 
        \end{cases} 
    \end{align}
\end{corollary}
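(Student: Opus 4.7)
The plan is to reduce the statement to Corollary \ref{cor:continuity-bounds-renyi-conditional-entropy} via the triangle inequality applied to the defining identity
\begin{equation*}
\widetilde I^\uparrow_\alpha(A:C|B)_\rho = \widetilde H^\uparrow_\alpha(A|B)_\rho - \widetilde H^\uparrow_\alpha(A|BC)_\rho.
\end{equation*}
Subtracting the analogous expression for $\sigma$ and applying the ordinary triangle inequality gives
\begin{equation*}
|\widetilde I^\uparrow_\alpha(A:C|B)_\rho - \widetilde I^\uparrow_\alpha(A:C|B)_\sigma| \le |\widetilde H^\uparrow_\alpha(A|B)_\rho - \widetilde H^\uparrow_\alpha(A|B)_\sigma| + |\widetilde H^\uparrow_\alpha(A|BC)_\rho - \widetilde H^\uparrow_\alpha(A|BC)_\sigma|.
\end{equation*}

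Next I would control each of the two summands by Corollary \ref{cor:continuity-bounds-renyi-conditional-entropy}. The key observation is that monotonicity of the trace distance under the partial trace (a CPTP map) yields $\tfrac{1}{2}\|\rho_{AB} - \sigma_{AB}\|_1 \le \varepsilon$, while the full states $\rho_{ABC}$, $\sigma_{ABC}$ trivially satisfy the same bound. Since in both conditional entropies the unconditioned system is $A$, the relevant dimension factor is $d_A$ in both cases, so Corollary \ref{cor:continuity-bounds-renyi-conditional-entropy} applies with the same parameters $\varepsilon$ and $d_A$ to each term.

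For $\alpha \in [1/2,1)$ there is only one available bound in \Cref{cor:continuity-bounds-renyi-conditional-entropy}, so summing the two identical bounds produces exactly the prefactor $2$ in \eqref{eq:CB-sand-CMI-smaller1}. For $\alpha \in (1,\infty)$ three distinct bounds are available (axiomatic, operator-space, mixed). For each of the three, applying the same bound to both conditional-entropy differences yields one of the three lines in \eqref{eq:CB-sand-CMI-larger1} (again multiplied by $2$), and taking the minimum across choices is then trivially valid because each individual choice is itself an upper bound.

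There is no real obstacle: the argument is purely a triangle-inequality/partial-trace monotonicity reduction. The only points to check carefully are (i) that partial tracing out $C$ really preserves the trace-distance hypothesis, so that the conditional-entropy bound with argument $\rho_{AB}$ is legitimate, and (ii) that one may mix-and-match choices freely, or alternatively, as I do here, apply the same choice on both sides and then minimise over the three bounds externally, which is what the $\min$ in \eqref{eq:CB-sand-CMI-larger1} encodes.
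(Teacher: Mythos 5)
Your proposal is correct and follows essentially the same route as the paper: the triangle inequality splits the conditional mutual information difference into two conditional entropy differences, each of which is controlled by \cref{cor:continuity-bounds-renyi-conditional-entropy} (with the data-processing inequality for the trace distance justifying the application to the reduced states $\rho_{AB},\sigma_{AB}$), yielding the factor of $2$. The paper's proof is exactly this argument, stated more tersely.
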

\begin{proof}
    Note that
    \begin{align}
        | \widetilde I^\uparrow_\alpha(A:C|B)_\rho  - \widetilde I^\uparrow_\alpha(A:C|B)_\sigma | \leq | \widetilde H^\uparrow_\alpha(A|B)_\rho - \widetilde H^\uparrow_\alpha(A|B)_\sigma| + | \widetilde H^\uparrow_\alpha(A|BC)_\rho - \widetilde H^\uparrow_\alpha(A|BC)_\sigma| \, .
    \end{align}
    Thus, a continuity bound for the sandwiched Rényi conditional mutual information follows as a continuity bound for the sandwiched Rényi conditional entropy, with a factor of $2$. We conclude using the bounds from \Cref{cor:continuity-bounds-renyi-conditional-entropy}.
\end{proof}

For the sake of consistency, we state in the following the limits $\alpha\rightarrow1$ and $\alpha\rightarrow\infty$ which directly follows from \Cref{cor:limits-renyi-conditional-entropy} by the same argument as in the proof of the above corollary:
\begin{corollary}[Limits]\label{cor:limits-renyi-conditional-information}
    Let $\rho$, $\sigma \in \mathcal S(\mathcal H_{ABC})$ be quantum states, with $\tfrac{1}{2}\norm{\rho - \sigma}_1 \le \varepsilon$. Then, the limit $\alpha \to 1$ of Equation \eqref{eq:CB-sand-CMI-smaller1} yields
    \begin{equation}
         |I(A:C|B)_\rho  - I(A:C|B)_\sigma| \leq  4 \varepsilon \log d_A + 2(1+ \varepsilon) h\left(\frac{\varepsilon}{1+\varepsilon}\right) \, ,
    \end{equation}
    and the limit of Equation \eqref{eq:CB-sand-CMI-larger1}
    \begin{equation}
             |I(A:C|B)_\rho  -  I(A:C|B)_\sigma| \le \min \begin{cases}
                 4 \varepsilon \log d_A + 2(1+ \varepsilon) h\left(\frac{\varepsilon}{1+\varepsilon}\right) \, ,\\
                 \infty \, ,\\
                 4 \varepsilon \log d_A + 2(1+ \varepsilon) h\left(\frac{\varepsilon}{1+\varepsilon}\right) \, ,
             \end{cases} 
    \end{equation}
    where $I(A:C|B)_\rho$ is the usual quantum conditional information (see Eq. \eqref{eq:cond-mutual-information}). The limit $\alpha \to \infty$ of equation \eqref{eq:CB-sand-CMI-larger1} yields
    \begin{equation}
         |I^\uparrow_\infty(A:C|B)_\rho  -  I^\uparrow_\infty(A:C|B)_\sigma| \le \min\begin{cases}
                2\log(1 + \varepsilon) + 4 \log d_A \, , \\
                2\log(1 + \varepsilon d_A^2) \, ,\\
                2\log((1 + \varepsilon)(1 + \varepsilon d_A^{2}) - \varepsilon^{2}) \, .
         \end{cases} 
    \end{equation}
\end{corollary}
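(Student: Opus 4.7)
The plan is to reduce the statement directly to \Cref{cor:limits-renyi-conditional-entropy} by mimicking the argument in the proof of \Cref{cor:continuity-bound-renyi-CMI}. Recall that the sandwiched Rényi conditional mutual information was defined as the difference
\begin{equation}
    \widetilde I^\uparrow_\alpha(A{:}C|B)_\rho = \widetilde H^\uparrow_\alpha(A|B)_\rho - \widetilde H^\uparrow_\alpha(A|BC)_\rho,
\end{equation}
and this splitting is exactly what was used to derive the bounds in \Cref{cor:continuity-bound-renyi-CMI} --- each of the three bounds is obtained by triangle inequality and inserting twice the corresponding conditional entropy bound from \Cref{cor:continuity-bounds-renyi-conditional-entropy}. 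The same manipulation survives the limit, since a limit of a sum of two nonnegative upper bounds is the sum of the limits.

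Thus I would proceed as follows. First, take $\alpha \to 1$ in the three bounds of \Cref{eq:CB-sand-CMI-larger1} (and the analogous single bound of \Cref{eq:CB-sand-CMI-smaller1}). Each summand is exactly twice the corresponding continuity bound from \Cref{cor:continuity-bounds-renyi-conditional-entropy}, so the l'Hôpital-style computation already carried out in \Cref{cor:limits-renyi-conditional-entropy} applies verbatim, giving the factor $2(1+\varepsilon) h(\varepsilon/(1+\varepsilon))$ and the dimension term $4\varepsilon \log d_A$ for both the axiomatic and mixed versions, while the operator-space bound diverges as before. By \Cref{prop:limits} the sandwiched Rényi conditional entropies converge pointwise to the von Neumann conditional entropies as $\alpha\to 1$, so the left-hand side converges to $|I(A{:}C|B)_\rho - I(A{:}C|B)_\sigma|$ as defined in \Cref{eq:cond-mutual-information}, which produces the first claimed bound.

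Second, take $\alpha\to\infty$ in \Cref{eq:CB-sand-CMI-larger1}. Again, each bound is twice the corresponding conditional-entropy bound, so the limits computed in \Cref{cor:limits-renyi-conditional-entropy} apply directly and deliver $2\log(1+\varepsilon)+4\log d_A$, $2\log(1+\varepsilon d_A^2)$, and $2\log((1+\varepsilon)(1+\varepsilon d_A^2)-\varepsilon^2)$ for the axiomatic, operator-space, and mixed bounds respectively. On the left-hand side, $\widetilde H^\uparrow_\alpha(A|\cdot)_\rho \to H^\uparrow_\infty(A|\cdot)_\rho$ as $\alpha \to \infty$ (again by \Cref{prop:limits} together with the definition of the min-conditional entropy), so the difference converges to $|\widetilde I^\uparrow_\infty(A{:}C|B)_\rho - \widetilde I^\uparrow_\infty(A{:}C|B)_\sigma|$, yielding the second claimed bound.

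The main ``obstacle'' is really only a bookkeeping one: one must verify that passing the limit $\alpha\to 1$ or $\alpha\to\infty$ inside the absolute value on the left-hand side is justified, which is immediate from the pointwise convergence of $\widetilde H^\uparrow_\alpha$ in these regimes together with continuity of the absolute value, and that the minima on the right-hand side commute with the respective limits --- this is trivial as each entry is itself a continuous function of $\alpha$ on the relevant interval and the limits have already been identified. No new estimates beyond those of \Cref{cor:limits-renyi-conditional-entropy} are needed.
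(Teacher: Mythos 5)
Your proposal is correct and follows exactly the paper's route: the paper likewise obtains this corollary as a direct application of \cref{cor:limits-renyi-conditional-entropy}, using the triangle-inequality splitting from the proof of \cref{cor:continuity-bound-renyi-CMI} to double each conditional-entropy bound. Your additional remarks on interchanging limits with the absolute value and the minimum are harmless elaboration, not a deviation.
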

\begin{proof}
    The proof is a direct application of \Cref{cor:limits-renyi-conditional-entropy}.
\end{proof}

\subsection{Continuity bounds in the first argument for sandwiched Rényi divergences}\label{sec:sand-first-arg}
\begin{corollary}[Continuity bound in the first argument]\label{cor:continuity-bound-renyi-first-argument}
    Let $\rho$, $\sigma$, $\tau \in \mathcal S(\mathcal H)$ be quantum states, with $\ker \tau \subseteq \ker \rho \cap \ker \sigma$, $\tfrac{1}{2}\norm{\rho - \sigma}_1 \le \varepsilon$ and $\alpha \in [1/2, 1)$. Then 
    \begin{equation} \label{eq:continuity-bounds-renyi-first-alpha<1}
        |\Dalpha(\rho \Vert \tau) - \Dalpha(\sigma \Vert \tau)| \le \log(1 + \varepsilon) + \frac{1}{1 - \alpha}\log(1 + \varepsilon^\alpha \widetilde m_\tau^{\alpha - 1} - \frac{\varepsilon}{(1 + \varepsilon)^{1 - \alpha}}) \, , 
    \end{equation}
    where $\widetilde m_\tau$ is the smallest non-zero eigenvalue of $\tau$. For $\alpha \in (1, \infty)$ we find 
    \begin{equation} \label{eq:continuity-bounds-renyi-first-alpha>1}
         |\Dalpha(\rho \Vert \tau) - \Dalpha(\sigma \Vert \tau)| \le \min \begin{cases}
             \log(1 + \varepsilon) + \frac{1}{\alpha - 1}\log(1 + \varepsilon \widetilde m_\tau^{1 - \alpha} - \frac{\varepsilon^\alpha}{(1 + \varepsilon)^{\alpha - 1}}) \, , \\
             \frac{\alpha}{\alpha - 1} \log(1 + \varepsilon \widetilde m_\tau^{\frac{1 - \alpha}{\alpha}}) \, ,\\
             \log(1 + \varepsilon) + \frac{\alpha}{\alpha - 1}\log(1 + \varepsilon \widetilde m_\tau^{\frac{1 - \alpha}{\alpha}} - \frac{\varepsilon^{2 - \frac{1}{\alpha}}}{(1 + \varepsilon)^{\frac{\alpha - 1}{\alpha}}}) \, . 
         \end{cases} 
    \end{equation}
\end{corollary}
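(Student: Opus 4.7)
The plan is to derive all four bounds as essentially direct corollaries of the three main theorems from \Cref{sec:technical_tools} (\Cref{theo:minimial-distance-cc-set-axiomatic-approach} for $\alpha<1$ and all three theorems for $\alpha>1$), applied to the singleton convex compact set $\cC=\{\tau\}$. The bound $\kappa$ that appears in these theorems will be computed in terms of the minimal non-zero eigenvalue $\widetilde m_\tau$.

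First, I would handle the kernel condition. Since the three master theorems require $\cC$ to contain a positive definite state, and $\tau$ need not be full rank, I would restrict attention to the Hilbert subspace $\Pi_\tau\cH$ associated to the support projector of $\tau$. Because $\ker\tau\subseteq\ker\rho\cap\ker\sigma$, the states $\rho,\sigma,\tau$ are supported on this subspace, $\tau$ is positive definite there, $\widetilde D_\alpha(\rho\Vert\tau)$ is unchanged by this restriction (in view of the pseudo-inverse convention in the definition), and $\tfrac12\norm{\rho-\sigma}_1$ remains bounded by $\varepsilon$. On this subspace, $\cC=\{\tau\}$ is trivially convex, compact, and contains a positive definite state, so all hypotheses of \Cref{theo:minimial-distance-cc-set-axiomatic-approach,theo:minimial-distance-cc-set-operator-approach,theo:minimial-distance-cc-set-mixed-approach} are satisfied, and $\widetilde D_{\alpha,\cC}(\rho)=\widetilde D_\alpha(\rho\Vert\tau)$.

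Second, I would determine a uniform upper bound $\log\kappa$ on $\sup_{\rho\in\cS(\Pi_\tau\cH)}\widetilde D_{\alpha,\cC}(\rho)$. Using the monotonicity of $\alpha\mapsto\widetilde D_\alpha$ (see Eq.\ \eqref{eq:max-upper-bound-on-sandwiched}),
\begin{equation}
    \widetilde D_\alpha(\rho\Vert\tau)\;\le\;D_\infty(\rho\Vert\tau)\;=\;\log\norm{\tau^{-1/2}\rho\,\tau^{-1/2}}_\infty\;\le\;\log\norm{\tau^{-1}}_\infty\;=\;-\log\widetilde m_\tau,
\end{equation}
where the middle inequality uses $\rho\le\Pi_\tau$ on the support. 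Thus one may take $\kappa=\widetilde m_\tau^{-1}$, which is finite and independent of $\alpha$, matching the ``uniform'' requirement in \Cref{rem:kappa}.

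Finally, I would substitute $\kappa=\widetilde m_\tau^{-1}$ into the three master bounds. For $\alpha\in[1/2,1)$ only the axiomatic bound is available, and it specialises directly to \eqref{eq:continuity-bounds-renyi-first-alpha<1}. For $\alpha\in(1,\infty)$, the axiomatic, operator space, and mixed bounds respectively yield the first, second, and third branches of the minimum in \eqref{eq:continuity-bounds-renyi-first-alpha>1}, after the algebraic substitutions $\kappa^{\alpha-1}\mapsto\widetilde m_\tau^{1-\alpha}$ and $\kappa^{(\alpha-1)/\alpha}\mapsto\widetilde m_\tau^{(1-\alpha)/\alpha}$. There is no real obstacle here: the work has been absorbed into the master theorems and the only non-routine point is the reduction to the support of $\tau$, which is standard given the kernel hypothesis.
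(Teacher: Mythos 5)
Your proposal is correct and follows essentially the same route as the paper's proof: restrict to the support of $\tau$, observe that $\cC=\{\tau\}$ is a compact convex set containing a positive definite state there, bound $\kappa$ by $\widetilde m_\tau^{-1}$ via $\widetilde D_\alpha(\cdot\Vert\tau)\le D_\infty(\cdot\Vert\tau)\le-\log\widetilde m_\tau$, and apply \cref{theo:minimial-distance-cc-set-axiomatic-approach}, \cref{theo:minimial-distance-cc-set-operator-approach} and \cref{theo:minimial-distance-cc-set-mixed-approach}. No gaps.
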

\begin{proof}
    We first restrict $\cH$ to the support of $\tau$ which gives us a Hilbert space $\widetilde \cH$ on which $\tau$ is positive definite. Clearly $\rho$, $\sigma \in \cS(\widetilde \cH)$ as $\ker \tau \subseteq \ker \rho \cap \ker\sigma$. We further have that $\cC := \{\tau\}$ is a convex, compact set containing a positive definite state and 
    \begin{equation}
        \sup\limits_{\rho \in \cS(\widetilde \cH)} \widetilde D_{\alpha, \cC}(\rho) \le \log \frac{1}{\widetilde m_\tau}\,,
    \end{equation}
    which follows from $\widetilde D_\alpha(\nu\Vert \tau) \le D_{\infty}(\nu \Vert \tau) \le -\log \widetilde m_\tau$ for $\nu \in \cS(\widetilde \cH)$ (see Eq.\ \eqref{eq:max-upper-bound-on-sandwiched}). Lastly, we find 
    \begin{equation}
        \widetilde D_\alpha(\eta\Vert \tau) = \widetilde D_{\alpha, \cC}(\eta) \qquad \eta \in \{\rho, \sigma\} \, , 
    \end{equation}
    allowing us to use \Cref{theo:minimial-distance-cc-set-almost-additive-approach}, \Cref{theo:minimial-distance-cc-set-operator-approach}, \Cref{theo:minimial-distance-cc-set-mixed-approach}. This yields both assertions.
\end{proof}

We can again consider the limits $\alpha \to 1$ and $\alpha \to \infty$. 

\begin{corollary}[Limits]
    Let $\rho, \sigma, \tau \in \cS(\cH)$ be quantum states, with $\ker \tau \subseteq \ker \rho \cap \ker \sigma$, $\frac{1}{2}\norm{\rho - \sigma}_1 \le \varepsilon$. For the limit $\alpha \to 1$ in Equation \eqref{eq:continuity-bounds-renyi-first-alpha<1}, we obtain
    \begin{equation}
        |D(\rho \Vert \tau) - D(\sigma \Vert \tau)| \le \varepsilon\log(\widetilde m_\tau^{-1}) + (1+\varepsilon) h\left(\frac{\varepsilon}{1+\varepsilon}\right) \, ,
    \end{equation}
    where $\widetilde m_\tau$ is the smallest non-zero eigenvalue of $\tau$, and for \Cref{eq:continuity-bounds-renyi-first-alpha>1}
    \begin{equation}
        |D(\rho \Vert \tau) - D(\sigma \Vert \tau)| \le \min \begin{cases}
            \varepsilon\log(\widetilde m_\tau^{-1}) + (1+\varepsilon) h\left(\frac{\varepsilon}{1+\varepsilon}\right) \, ,\\
            \infty \, , \\
            \varepsilon\log(\widetilde m_\tau^{-1}) + (1+\varepsilon) h\left(\frac{\varepsilon}{1+\varepsilon}\right) \, ,
        \end{cases}
    \end{equation}
    which is exactly the continuity bound in \cite{Bluhm2022ContinuityBounds}. For $\alpha \to \infty$ Equation \eqref{eq:continuity-bounds-renyi-first-alpha>1} gives 
    \begin{equation}
        |D_\infty(\rho\Vert \tau) - D_\infty(\sigma \Vert \tau)| \le \min 
        \begin{cases}
            \log(1+\varepsilon) + \log(\widetilde m_\tau^{-1}) \, ,\\
            \log(1 + \varepsilon \widetilde m_\tau^{-1}) \, ,\\
            \log((1 + \varepsilon)(1 + \varepsilon \widetilde m_\tau^{-1}) - \varepsilon^{2}) \, . 
        \end{cases}
    \end{equation}
\end{corollary}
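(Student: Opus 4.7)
My plan is to piggyback on the three already-established limit lemmas (\cref{lem:limit-axiomatic}, \cref{lem:limit-operator}, \cref{lem:limit-mixed}) and combine them with the identification of the limiting divergences from \cref{prop:limits}. The key observation is that \cref{cor:continuity-bound-renyi-first-argument} was obtained by specializing $\cC = \{\tau\}$ in the three master theorems and using the uniform bound $\sup_{\rho\in\cS(\widetilde\cH)}\widetilde D_{\alpha,\cC}(\rho) \le \log(\widetilde m_\tau^{-1})$, which is independent of $\alpha$. Since the singleton $\{\tau\}$ gives $\widetilde D_{\alpha,\cC}(\rho) = \widetilde D_\alpha(\rho\Vert\tau)$, the three limit lemmas apply verbatim with $\kappa = \widetilde m_\tau^{-1}$. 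Thus the proof is a bookkeeping exercise: apply each limit lemma, then identify the limit on the left-hand side via \cref{prop:limits}.

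For the $\alpha\to 1$ limit of the axiomatic bounds in both $\alpha<1$ and $\alpha>1$ regimes, I would invoke \cref{lem:limit-axiomatic} with $\kappa=\widetilde m_\tau^{-1}$, yielding $\varepsilon\log(\widetilde m_\tau^{-1}) + (1+\varepsilon)h(\tfrac{\varepsilon}{1+\varepsilon})$. For the mixed bound with $\alpha>1$, \cref{lem:limit-mixed} gives the same expression. The operator space bound diverges as $\alpha\to 1$, matching the $\infty$ entry in the minimum. On the left-hand side, \cref{prop:limits} identifies $\widetilde D_\alpha(\rho\Vert\tau) \to D(\rho\Vert\tau)$ and likewise for $\sigma$, so $|\widetilde D_\alpha(\rho\Vert\tau) - \widetilde D_\alpha(\sigma\Vert\tau)| \to |D(\rho\Vert\tau) - D(\sigma\Vert\tau)|$, and one takes the limit on both sides of the inequality.

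For the $\alpha \to \infty$ limit of the $\alpha>1$ bounds, I would apply \cref{lem:limit-axiomatic} (which produces $\log(1+\varepsilon) + \log(\widetilde m_\tau^{-1})$), \cref{lem:limit-operator} (giving $\log(1+\varepsilon\widetilde m_\tau^{-1})$), and \cref{lem:limit-mixed} (giving $\log((1+\varepsilon)(1+\varepsilon\widetilde m_\tau^{-1}) - \varepsilon^2)$), again with $\kappa = \widetilde m_\tau^{-1}$. The identification of the LHS as $|D_\infty(\rho\Vert\tau) - D_\infty(\sigma\Vert\tau)|$ is again immediate from \cref{prop:limits}.

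There is essentially no main obstacle. The only subtle point is that taking a limit preserves a non-strict inequality, which requires the left-hand side to converge — but this is guaranteed by the continuity in $\alpha$ of $\widetilde D_\alpha(\rho\Vert\tau)$ together with the explicit limits of \cref{prop:limits}. One could alternatively perform the l'Hôpital computations directly (they are the same ones already carried out in the proofs of \cref{lem:limit-axiomatic}, \cref{lem:limit-operator}, \cref{lem:limit-mixed}), but deferring to those lemmas keeps the proof short and consistent with the modular structure of the paper.
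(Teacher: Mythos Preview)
Your proposal is correct and matches the paper's own proof, which simply states that the corollary is a direct application of \cref{lem:limit-axiomatic}, \cref{lem:limit-operator}, and \cref{lem:limit-mixed} in the context of \cref{cor:continuity-bound-renyi-first-argument}. Your additional remarks about identifying the left-hand side via \cref{prop:limits} and preserving inequalities under limits are fine elaborations but not strictly needed, since the limit lemmas already handle the left-hand side through \cref{lem:limit-change}.
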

\begin{proof}
    The proof is a direct application of \Cref{lem:limit-almost-additive}, \Cref{lem:limit-operator}, \Cref{lem:limit-mixed} in the context of \Cref{cor:continuity-bound-renyi-first-argument}.
\end{proof}

\subsection{Divergence bounds for sandwiched Rényi divergences}\label{subsec:divergence-bound}
The continuity bounds in the first argument directly give us divergence bounds of the sandwiched R{\'e}nyi divergences.
\begin{corollary}[Divergence bound]
    Let $\rho$, $\tau \in \mathcal S(\mathcal H)$ with $\ker \tau \subseteq \ker \rho$ and $\tfrac{1}{2}\norm{\rho - \tau}_1 \le \varepsilon$ and $\alpha \in [1/2, 1)$
    \begin{equation} \label{eq:divergence-bounds-renyialpha<1}
        \Dalpha(\rho \Vert \tau) \le \log(1 + \varepsilon) + \frac{1}{1 - \alpha}\log(1 + \varepsilon^\alpha \widetilde m_\tau^{\alpha - 1} - \frac{\varepsilon}{(1 + \varepsilon)^{1 - \alpha}}) \, , 
    \end{equation}
    where $\widetilde m_\tau$ is the smallest non-zero eigenvalue of $\tau$. For $\alpha \in (1, \infty)$ we find 
    \begin{equation} \label{eq:divergence-bounds-renyi-first-alpha>1}
         \Dalpha(\rho \Vert \tau) \le \min\begin{cases}
             \log(1 + \varepsilon) + \frac{1}{\alpha - 1}\log(1 + \varepsilon \widetilde m_\tau^{1 - \alpha} - \frac{\varepsilon^\alpha}{(1 + \varepsilon)^{\alpha - 1}}) \, , \\
             \frac{\alpha}{\alpha - 1} \log(1 + \varepsilon \widetilde m_\tau^{\frac{1 - \alpha}{\alpha}}) \, ,\\
             \log(1 + \varepsilon) + \frac{\alpha}{\alpha - 1}\log(1 + \varepsilon \widetilde m_\tau^{\frac{1 - \alpha}{\alpha}} - \frac{\varepsilon^{2 - \frac{1}{\alpha}}}{(1 + \varepsilon)^{\frac{\alpha - 1}{\alpha}}})  \, . 
         \end{cases}
    \end{equation}
\end{corollary}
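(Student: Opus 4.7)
The plan is to reduce this divergence bound to the continuity bound in the first argument, which has already been established in \Cref{cor:continuity-bound-renyi-first-argument}. The key observation is that the right-hand sides of Eqs.\ \eqref{eq:divergence-bounds-renyialpha<1} and \eqref{eq:divergence-bounds-renyi-first-alpha>1} are identical to those of Eqs.\ \eqref{eq:continuity-bounds-renyi-first-alpha<1} and \eqref{eq:continuity-bounds-renyi-first-alpha>1} in the previous corollary, and the hypotheses match once we set $\sigma = \tau$ there.

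First, I would verify that the setting of \Cref{cor:continuity-bound-renyi-first-argument} applies: we take $\sigma := \tau$, so the three-state hypotheses become $\tau, \tau \in \cS(\cH)$ with $\ker \tau \subseteq \ker \rho \cap \ker \tau = \ker \rho$, which is exactly our hypothesis. Likewise $\tfrac{1}{2}\|\rho - \sigma\|_1 = \tfrac{1}{2}\|\rho - \tau\|_1 \le \varepsilon$, matching the $\varepsilon$-closeness condition.

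Second, I would use the fact that $\widetilde{D}_\alpha(\tau \Vert \tau) = 0$. This follows immediately from the definition, since $\widetilde{Q}_\alpha(\tau \Vert \tau) = \tr[(\tau^{\frac{1-\alpha}{2\alpha}} \tau\, \tau^{\frac{1-\alpha}{2\alpha}})^\alpha] = \tr[\tau] = 1$ (restricting to the support of $\tau$, on which $\tau$ is invertible), and hence its logarithm vanishes.

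Applying \Cref{cor:continuity-bound-renyi-first-argument} then yields
\begin{equation}
    |\widetilde{D}_\alpha(\rho \Vert \tau) - \widetilde{D}_\alpha(\tau \Vert \tau)| = |\widetilde{D}_\alpha(\rho \Vert \tau)| \le B_\alpha(\varepsilon, \widetilde m_\tau),
\end{equation}
where $B_\alpha(\varepsilon, \widetilde m_\tau)$ is the appropriate right-hand side for the regime $\alpha \in [1/2,1)$ or $\alpha \in (1,\infty)$ (taking the minimum over the three approaches in the latter case). Finally, since $\widetilde{D}_\alpha(\rho \Vert \tau) \ge 0$ for quantum states $\rho, \tau$ (by monotonicity of $\alpha \mapsto \widetilde{D}_\alpha$ together with $\widetilde{D}_{1/2}(\rho\Vert\tau) = -\log F(\rho,\tau) \ge 0$, or equivalently by data processing), the absolute value may be dropped, giving the stated divergence bound directly. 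No step is genuinely difficult here; the only thing to double-check is that the support condition needed to invoke the previous corollary is automatically satisfied, which it is by the assumption $\ker \tau \subseteq \ker \rho$.
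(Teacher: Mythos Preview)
Your proof is correct and follows exactly the paper's approach: the paper's own proof is the single line ``a direct application of \cref{cor:continuity-bound-renyi-first-argument}, with $\sigma = \tau$,'' which is precisely what you spell out in more detail.
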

\begin{proof}
    The proof is a direct application of \Cref{cor:continuity-bound-renyi-first-argument}, with $\sigma = \tau$.
\end{proof}

\begin{corollary}[Limits]
  Let $\rho$, $\tau \in \mathcal S(\mathcal H)$ with $\ker \tau \subseteq \ker \rho$ and $\tfrac{1}{2}\norm{\rho - \tau}_1 \le \varepsilon$. For the limit $\alpha \to 1$ in Equation \eqref{eq:divergence-bounds-renyialpha<1}, we obtain
    \begin{equation}
        D(\rho \Vert \tau) \le \varepsilon\log(\widetilde m_\tau^{-1}) + (1+\varepsilon) h\left(\frac{\varepsilon}{1+\varepsilon}\right) \, ,
    \end{equation}
    where $\widetilde m_\tau$ is the smallest non-zero eigenvalue of $\tau$, and for \Cref{eq:divergence-bounds-renyi-first-alpha>1}
    \begin{equation}
        D(\rho \Vert \tau) \le \min \begin{cases}
            \varepsilon\log(\widetilde m_\tau^{-1}) + (1+\varepsilon) h\left(\frac{\varepsilon}{1+\varepsilon}\right) \, ,\\
            \infty \, , \\
            \varepsilon\log(\widetilde m_\tau^{-1}) + (1+\varepsilon) h\left(\frac{\varepsilon}{1+\varepsilon}\right) \, ,
        \end{cases}
    \end{equation}
    which is exactly the divergence bound in \cite{Bluhm2022ContinuityBounds}. For $\alpha \to \infty$ in Equation \eqref{eq:divergence-bounds-renyi-first-alpha>1} gives 
    \begin{equation}
        D_\infty(\rho\Vert \tau) \le  \min
        \begin{cases}
            \log(1+\varepsilon) + \log(\widetilde m_\tau^{-1}) \, ,\\
            \log(1 + \varepsilon \widetilde m_\tau^{-1}) \, ,\\
            \log((1 + \varepsilon)(1 + \varepsilon \widetilde m_\tau^{-1}) - \varepsilon^{2}) \, .
        \end{cases}
    \end{equation}
\end{corollary}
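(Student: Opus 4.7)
The plan is to reduce the statement entirely to the three limit lemmas \cref{lem:limit-axiomatic}, \cref{lem:limit-operator}, and \cref{lem:limit-mixed} with the choice $\kappa = \widetilde m_\tau^{-1}$, exactly as was done in the limits corollary accompanying \cref{cor:continuity-bound-renyi-first-argument}. The key observation is that the divergence bounds \eqref{eq:divergence-bounds-renyialpha<1} and \eqref{eq:divergence-bounds-renyi-first-alpha>1} were obtained by specializing the first-argument continuity bounds \eqref{eq:continuity-bounds-renyi-first-alpha<1} and \eqref{eq:continuity-bounds-renyi-first-alpha>1} to $\sigma = \tau$; as functions of $(\alpha,\varepsilon,\widetilde m_\tau)$ they are literally the same expressions, so their limits are already available. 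The legitimacy of $\kappa = \widetilde m_\tau^{-1}$ as a uniform (in $\alpha$) upper bound on $\widetilde D_{\alpha,\{\tau\}}(\cdot) = \widetilde D_\alpha(\cdot\Vert\tau)$ follows from the monotonicity $\widetilde D_\alpha(\rho\Vert\tau) \le D_\infty(\rho\Vert\tau) \le \log \widetilde m_\tau^{-1}$ noted in the proof of \cref{cor:continuity-bound-renyi-first-argument}.

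\textbf{Key steps for $\alpha \to 1$.} I would first apply l'Hospital's rule to the indeterminate term $\tfrac{1}{1-\alpha}\log(1 + \varepsilon^\alpha \widetilde m_\tau^{\alpha-1} - \tfrac{\varepsilon}{(1+\varepsilon)^{1-\alpha}})$, obtaining $-\varepsilon\log\varepsilon + \varepsilon\log\widetilde m_\tau^{-1} + \varepsilon\log(1+\varepsilon)$ as in the proof of \cref{lem:limit-axiomatic}. Adding the surviving $\log(1+\varepsilon)$ and regrouping
\begin{equation}
(1+\varepsilon)\log(1+\varepsilon) - \varepsilon\log\varepsilon = (1+\varepsilon)h\!\left(\tfrac{\varepsilon}{1+\varepsilon}\right)
\end{equation}
yields the claimed bound $\varepsilon\log\widetilde m_\tau^{-1} + (1+\varepsilon)h(\varepsilon/(1+\varepsilon))$ for the axiomatic expression; the analogous l'Hospital computation gives the same limit for the mixed bound, and the operator space bound diverges (as already observed in \cref{lem:limit-operator}), producing the $\min$ with an $\infty$ slot.

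\textbf{Key steps for $\alpha \to \infty$.} The axiomatic bound is handled by one more l'Hospital computation on $\tfrac{1}{\alpha-1}\log(1 + \varepsilon\widetilde m_\tau^{1-\alpha} - \tfrac{\varepsilon^\alpha}{(1+\varepsilon)^{\alpha-1}}) \to \log\widetilde m_\tau^{-1}$, so combined with $\log(1+\varepsilon)$ one recovers $\log(1+\varepsilon) + \log\widetilde m_\tau^{-1}$. The operator space bound is elementary: $\tfrac{\alpha}{\alpha-1}\to 1$ and $\widetilde m_\tau^{(1-\alpha)/\alpha}\to\widetilde m_\tau^{-1}$ give $\log(1+\varepsilon\widetilde m_\tau^{-1})$. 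For the mixed bound, use in addition $\varepsilon^{(2\alpha-1)/\alpha}\to\varepsilon^2$ and $(1+\varepsilon)^{(\alpha-1)/\alpha}\to 1+\varepsilon$ to obtain $\log((1+\varepsilon)(1+\varepsilon\widetilde m_\tau^{-1}) - \varepsilon^2)$.

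\textbf{Main obstacle.} There is no substantive obstacle: the proof is a bookkeeping exercise, and the paper already invokes exactly this argument in the preceding corollary of limits for the first-argument continuity bound. The only minor caution is to confirm that the same $\kappa$ works uniformly in $\alpha$ (so that the $\alpha\to 1$ and $\alpha\to\infty$ limits of \cref{lem:limit-axiomatic}, \cref{lem:limit-operator}, \cref{lem:limit-mixed} can be applied at face value), which is immediate from $D_\infty(\rho\Vert\tau) \le \log\widetilde m_\tau^{-1}$.
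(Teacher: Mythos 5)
Your proposal is correct and matches the paper's argument exactly: the paper also proves this corollary as a direct application of \cref{lem:limit-axiomatic}, \cref{lem:limit-operator}, and \cref{lem:limit-mixed} in the context of \cref{cor:continuity-bound-renyi-first-argument} with $\kappa = \widetilde m_\tau^{-1}$, the l'Hospital computations you sketch being precisely those carried out in the proofs of those lemmas.
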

\begin{proof}
    The proof is a direct application of \Cref{lem:limit-almost-additive}, \Cref{lem:limit-operator}, \Cref{lem:limit-mixed} in the context of \Cref{cor:continuity-bound-renyi-first-argument}.
\end{proof}

\subsection{Distance to separable states} \label{sec:sep-states}

We consider the distance to the set of separable states in terms of the sandwiched R{\'e}nyi divergence and prove that it is uniformly continuous in the spirit of \cite{DonaldHorodecki1999continuityREentanglement, Winter-AlickiFannes-2016}. Note that the $\alpha \to 1$ limit recovers the Corollary 8 of \cite{Winter-AlickiFannes-2016}.

\begin{corollary}[Distance to the set of separable states]\label{cor:minimal-distance-seperable-states}\label{cor:continuity-bound-renyi-min-dist-sep-states}
    Let $\mathrm{SEP}_{AB} \subseteq \cS(\cH_A \otimes \cH_B)$ be the set of separable states. Then the map 
    \begin{equation}
        \widetilde D_{\alpha, \mathrm{SEP}}:\cS(\cH_A \otimes \cH_B) \to \R, \qquad \rho \mapsto \widetilde D_{\alpha, \mathrm{SEP}}(\rho) :=  \inf\limits_{\tau \in \mathrm{SEP}_{AB}} \widetilde D_\alpha(\rho \Vert \tau)
    \end{equation}
    is uniformly continuous for $\alpha \in [1/2, 1) \cup (1, \infty)$ and for $\rho$, $\sigma \in \cS(\cH_A \otimes \cH_B)$ with $\tfrac{1}{2}\norm{\rho - \sigma}_1 \le \varepsilon$, and $\alpha \in [1/2,1)$
    \begin{equation}
        |\widetilde D_{\alpha, \mathrm{SEP}}(\rho) - \widetilde D_{\alpha, \mathrm{SEP}}(\sigma)| \le \log(1 + \varepsilon) + \frac{1}{1 - \alpha}\log(1 + \varepsilon^\alpha m^{1 - \alpha} - \frac{\varepsilon}{(1 + \varepsilon)^{1 - \alpha}})
    \end{equation}
    and for $\alpha \in (1, \infty)$ 
    \begin{equation}
         |\widetilde D_{\alpha, \mathrm{SEP}}(\rho) - \widetilde D_{\alpha, \mathrm{SEP}}(\sigma)| \le \min\begin{cases}
             \log(1 + \varepsilon) + \frac{1}{\alpha - 1}\log(1 + \varepsilon m^{\alpha - 1} - \frac{\varepsilon^\alpha}{(1 + \varepsilon)^{\alpha - 1}}) \, , \\
             \frac{\alpha}{\alpha - 1} \log(1 + \varepsilon m^{\frac{\alpha - 1}{\alpha}}) \, ,\\
             \log(1 + \varepsilon) + \frac{\alpha}{\alpha - 1}\log(1 + \varepsilon m^{\frac{\alpha - 1}{\alpha}} - \frac{\varepsilon^{2 - \frac{1}{\alpha}}}{(1 + \varepsilon)^{\frac{\alpha - 1}{\alpha}}}) \, , 
         \end{cases}
    \end{equation}
    where $m = \min \{d_A, d_B\}$.
\end{corollary}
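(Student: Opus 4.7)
The plan is to invoke the three master theorems of \cref{sec:technical_tools} --- namely \cref{theo:minimial-distance-cc-set-axiomatic-approach}, \cref{theo:minimial-distance-cc-set-operator-approach}, and \cref{theo:minimial-distance-cc-set-mixed-approach} --- with the specific compact convex set $\cC = \mathrm{SEP}_{AB}$. Once the set-theoretic hypotheses are checked and an appropriate $\kappa$ is identified, each of the three inequalities in the statement follows by direct substitution (the axiomatic bound gives the $\alpha \in [1/2,1)$ estimate and the first entry of the minimum for $\alpha \in (1,\infty)$; the operator space theorem gives the second entry; the mixed theorem gives the third). The uniform continuity conclusion is then automatic, since the right-hand sides all tend to $0$ as $\varepsilon \to 0$.

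First I would verify the hypotheses on $\cC = \mathrm{SEP}_{AB}$: convexity is immediate from the definition of separable states as a convex hull of product states; compactness follows because $\mathrm{SEP}_{AB}$ is the closed convex hull (in finite dimension) of the compact set of product states and is contained in the compact state space $\cS(\cH_{AB})$; and $\mathrm{SEP}_{AB}$ contains the full-rank state $\mathds{1}_{AB}/(d_Ad_B) = (\mathds{1}_A/d_A)\otimes(\mathds{1}_B/d_B)$. This handles assumptions (1)--(3) of the master theorems.

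The main obstacle is locating the uniform constant $\kappa$ to be plugged into the theorems, namely showing
\begin{equation*}
    \sup_{\rho \in \cS(\cH_{AB})}\, \widetilde D_{\alpha,\mathrm{SEP}}(\rho) \le \log m, \qquad m = \min\{d_A,d_B\}.
\end{equation*}
By the monotonicity $\widetilde D_\alpha \le D_\infty$ (Eq.~\eqref{eq:max-upper-bound-on-sandwiched}) it suffices to produce, for each $\rho_{AB}$, a separable $\sigma_{AB}$ with $D_\infty(\rho_{AB}\|\sigma_{AB}) \le \log m$, i.e.\ $\rho_{AB} \le m\,\sigma_{AB}$. This is exactly the bound $\log(1+R_g(\rho)) \le \log m$ on the generalized robustness of entanglement, which is standard and can be proven as follows: assume w.l.o.g.\ $d_A \le d_B$; by convexity of $\mathrm{SEP}_{AB}$ and of $R_g$ it is enough to treat pure states $|\psi\rangle = \sum_{i=1}^r \sqrt{p_i}\ket{ii}$ in Schmidt form (Schmidt rank $r\le d_A$). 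Choose the separable witness $\sigma = \sum_i p_i \ket{ii}\bra{ii}$; the inequality $d_A\sigma - \ket{\psi}\bra{\psi} \ge 0$ then reduces to the matrix inequality $d_A\,\mathrm{diag}(p) \succeq \sqrt{p}\sqrt{p}^{\,T}$ on $\mathbb{R}^r$, which follows from Cauchy--Schwarz combined with $r\le d_A$.

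With $\kappa = m$ established and the three master theorems invoked, the final step is purely cosmetic: identify $\widetilde D_{\alpha,\mathrm{SEP}}$ as the infimum defining the map in those theorems and read off the stated bounds. The limits $\alpha \to 1$ and $\alpha \to \infty$, should one wish to record them (recovering Winter's Corollary~8 in the $\alpha\to1$ case), follow from \cref{lem:limit-axiomatic}, \cref{lem:limit-operator}, and \cref{lem:limit-mixed} applied in the same manner as for the previous sections of \cref{sec:CB_Renyi}.
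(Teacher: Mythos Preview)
Your proposal is correct and follows the same overall architecture as the paper: verify that $\mathrm{SEP}_{AB}$ is compact, convex, and contains a full-rank state; establish the uniform bound $\sup_\rho \widetilde D_{\alpha,\mathrm{SEP}}(\rho)\le\log m$; then plug $\kappa=m$ into \cref{theo:minimial-distance-cc-set-axiomatic-approach}, \cref{theo:minimial-distance-cc-set-operator-approach}, and \cref{theo:minimial-distance-cc-set-mixed-approach}.

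The one genuine difference is how you obtain $\kappa=m$. The paper reduces to pure states via the joint convexity of $\widetilde Q_\alpha$ (for $\alpha>1$), writes $\widetilde Q_\alpha(\ket{\psi}\!\bra{\psi}\Vert\tau)=(\bra{\psi}\tau^{(1-\alpha)/\alpha}\ket{\psi})^\alpha$, and evaluates this at the \emph{uniform} separable state $\tau_0=\tfrac{1}{m}\sum_i\ket{e_i}\!\bra{e_i}\otimes\ket{f_i}\!\bra{f_i}$ built from the Schmidt basis, giving $m^{\alpha-1}$ on the nose; the $\alpha<1$ case is then declared analogous. You instead first pass to $D_\infty$ via \eqref{eq:max-upper-bound-on-sandwiched}, interpret the target inequality as the robustness bound $\rho\le m\sigma$, and verify it for pure states using the \emph{Schmidt-weighted} separable state $\sigma=\sum_i p_i\ket{ii}\!\bra{ii}$ together with Cauchy--Schwarz. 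Your route has the mild advantage that a single $\alpha$-independent argument covers both ranges of $\alpha$ at once and connects to a well-known entanglement monotone; the paper's route avoids the detour through $D_\infty$ and computes $\widetilde Q_\alpha$ exactly rather than just bounding it. Either choice of separable witness works and the conclusions are identical.
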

\begin{proof}
    The set of separable states is known to be compact and convex and contains the maximally mixed state, which is positive definite. Consider $\alpha > 1$. Due to the joint convexity of $\widetilde Q_\alpha$ we can reduce to pure states and further using \Cref{rem:change-order-in-norm} shows
    \begin{align}
        \inf\limits_{\tau \in \mathrm{SEP}_{AB}} \widetilde Q_\alpha(\rho \Vert \tau) &\leq \sup\limits_{\ket{\psi}} \inf\limits_{\tau \in  \mathrm{SEP}_{AB}} \widetilde Q_\alpha(\ketbra{\psi}{\psi} \Vert \tau)\\
        &=\sup\limits_{\ket{\psi}} \inf\limits_{\tau \in  \mathrm{SEP}_{AB}}  (\bra{\psi} \tau^{\frac{1-\alpha}{\alpha}}\ket{\psi})^\alpha,
    \end{align}
    where $\ketbra{\psi}{\psi}$ is a rank-$1$ projection. Let
    \begin{equation}
        \ket{\psi} = \sum_{i = 1}^{m} \lambda_i \ket{e_i}_A \ket{f_i}_B 
    \end{equation}
    be the Schmidt decomposition of $\ket{\psi}$ such that in particular $\lambda_i \geq 0$ for all $i \in \{1, \ldots, m\}$ and $\sum_i \lambda_i^2 =1$. Both $\{\ket{e_i}\}_i$ and $\{\ket{f_j}\}_j$ are orthonormal sets. Choose
    \begin{equation}
        \tau_0 := \frac{1}{m} \sum_{i = 1}^{m} \dyad{e_i} \otimes \dyad{f_i}.
    \end{equation}
    Then,
        \begin{align}
        \inf\limits_{\tau \in \mathrm{SEP}_{AB}} \widetilde Q_\alpha(\rho \Vert \tau) &\leq \sup\limits_{\ket{\psi}} (\bra{\psi} \tau_0^{\frac{1-\alpha}{\alpha}}\ket{\psi})^\alpha \\
        &=\sup\limits_{\ket{\psi}} m^{\alpha -1} \left(\sum_{i = 1}^{m} \lambda_i^2\right)^\alpha \\
        &= m^{\alpha -1} \, .
    \end{align}
    
     The proof for $\alpha \in [1/2,1)$ proceeds analogously, using the same $\tau_0$. Employing \Cref{theo:minimial-distance-cc-set-almost-additive-approach}, \Cref{theo:minimial-distance-cc-set-operator-approach} and \Cref{theo:minimial-distance-cc-set-mixed-approach} gives the claimed bounds.
\end{proof}

We can conclude the limits $\alpha \to 1$ and $\alpha \to \infty$ as before using \Cref{lem:limit-almost-additive}, \Cref{lem:limit-operator} and \Cref{lem:limit-mixed}.

\begin{corollary}[Limits]
  Let $\rho$, $\sigma \in \mathcal S(\mathcal H)$  and $\tfrac{1}{2}\norm{\rho - \sigma}_1 \le \varepsilon$. For the limit $\alpha \to 1$ in Equation \eqref{eq:divergence-bounds-renyialpha<1}, we obtain
    \begin{equation}
        |D_{ \mathrm{SEP}}(\rho ) - D_{ \mathrm{SEP}}(\sigma )| \le \min \begin{cases}
            \varepsilon\log(m) + (1+\varepsilon) h\left(\frac{\varepsilon}{1+\varepsilon}\right) \, ,\\
            \infty \, , \\
            \varepsilon\log(m) + (1+\varepsilon) h\left(\frac{\varepsilon}{1+\varepsilon}\right) \, ,
        \end{cases}
    \end{equation}
    which is exactly the bound in \cite{Winter-AlickiFannes-2016}. For $\alpha \to \infty$ we infer
    \begin{equation}
        |D_{\infty, \mathrm{SEP}}(\rho) - D_{\infty, \mathrm{SEP}}(\sigma)| \le  \min
        \begin{cases}
            \log(1+\varepsilon) + \log( m^{-1}) \, ,\\
            \log(1 + \varepsilon  m) \, ,\\
            \log((1 + \varepsilon)(1 + \varepsilon  m) - \varepsilon^{2}) \, .
        \end{cases}
    \end{equation}
\end{corollary}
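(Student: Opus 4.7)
The proof will be essentially a transport of the three abstract limit lemmas through the setup of \cref{cor:minimal-distance-seperable-states}. The plan is as follows: first recall that $\mathrm{SEP}_{AB}$ is a compact, convex set containing the maximally mixed state $\mathds{1}/(d_A d_B)$, which is positive definite, so all hypotheses of \cref{lem:limit-change}, \cref{lem:limit-axiomatic}, \cref{lem:limit-operator} and \cref{lem:limit-mixed} are satisfied. Second, observe that in the proof of \cref{cor:minimal-distance-seperable-states} the uniform bound $\widetilde D_{\alpha,\mathrm{SEP}}(\rho) \le \log m$ is established via the explicit state $\tau_0 = \tfrac{1}{m}\sum_{i=1}^m \ketbra{e_i}\otimes\ketbra{f_i}$, and crucially this bound is independent of $\alpha$. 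Hence we may take $\kappa = m$ in all three limit lemmas.

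Next I would justify that the limit $\alpha \to 1$ (resp.\ $\alpha\to\infty$) commutes with the infimum defining $\widetilde D_{\alpha,\mathrm{SEP}}$: this is precisely \cref{lem:limit-change} applied to $\cC = \mathrm{SEP}_{AB}$, so in the limit we recover $D_{\mathrm{SEP}}(\rho) = \inf_{\tau \in \mathrm{SEP}_{AB}} D(\rho\Vert\tau)$ and $D_{\infty,\mathrm{SEP}}(\rho) = \inf_{\tau \in \mathrm{SEP}_{AB}} D_\infty(\rho\Vert\tau)$ as the pointwise objects whose continuity we need to assert. This step is essentially free: the only thing to check is that the limits of the right-hand sides of the three bounds in \cref{cor:minimal-distance-seperable-states} match the proposed expressions, and that the interchange of limit and infimum does not destroy the estimates.

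The final step is the mechanical application of the three limit lemmas to each of the three bounds appearing inside the $\min$ in \cref{cor:minimal-distance-seperable-states}. Substituting $\kappa = m$ into \cref{lem:limit-axiomatic} gives the $\alpha\to 1$ axiomatic and mixed bounds $\varepsilon \log(m^{-1}) + (1+\varepsilon) h(\varepsilon/(1+\varepsilon))$, while \cref{lem:limit-operator} (or equivalently the direct inspection of the operator-space bound $\frac{\alpha}{\alpha-1}\log(1+\varepsilon m^{(\alpha-1)/\alpha})$) diverges as $\alpha \to 1$, producing the $\infty$ entry in the $\min$. For $\alpha \to \infty$, \cref{lem:limit-operator} yields $\log(1+\varepsilon m^{-1})$, the trivial limit of the axiomatic bound gives $\log(1+\varepsilon) + \log(m^{-1})$, and \cref{lem:limit-mixed} yields $\log((1+\varepsilon)(1+\varepsilon m^{-1}) - \varepsilon^2)$.

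I do not anticipate a genuine obstacle: the substantive work (compactness and positive-definiteness of the reference element in $\mathrm{SEP}_{AB}$, the uniform-in-$\alpha$ bound via $\tau_0$, and the \textsc{l'Hospital} computations underlying the limit lemmas) has already been carried out in the cited results. The only mild care required is in the $\alpha \to 1$ case of the operator-space bound, where one must note \emph{explicitly} that the $\infty$ entry is not an artifact of the proof but reflects the actual divergence of that branch of the bound, as recorded in \cref{lem:limit-operator}; this is why we retain it inside the $\min$ rather than dropping it. With these observations in place, the corollary follows by a direct substitution argument, exactly paralleling the proof of the analogous limit corollary for the first entry of the divergence.
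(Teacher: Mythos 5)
Your proposal is correct and follows essentially the same route as the paper, which disposes of this corollary with a one-line appeal to \cref{lem:limit-axiomatic}, \cref{lem:limit-operator} and \cref{lem:limit-mixed} applied in the setting of \cref{cor:continuity-bound-renyi-min-dist-sep-states} with the $\alpha$-independent choice $\kappa = m$. Your additional remarks on invoking \cref{lem:limit-change} to commute the limit with the infimum and on the genuineness of the $\infty$ entry are consistent with the paper's framework and add nothing that conflicts with it.
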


\subsection{Distance measures in resource theories}

While the previous section already gave an application to entanglement theory of our tools developed in \Cref{sec:technical_tools}, we can apply our techniques to any resource theory that includes a set of free states $\mathcal F$ that is compact and convex and contains a positive definite state. Then $\widetilde D_{\alpha, \cF}(\rho)$ quantifies the resourcefulness of the state $\rho$ in terms of its distance to the set of free states (see, e.g., \cite{zhu2017coherence}). This resource measure is known as the \emph{Rényi relative entropy of resource}. Especially, the requirements for $\mathcal F$ hold for the resource theories falling into the framework of \cite{anshu2018quantifying}, requiring the sets of free states to be compact and convex and further contain the maximally mixed state. These theories include the resource theories of 
\begin{enumerate}
    \item entanglement (where the free states are the separable states)
    \item coherence (where the free states are the states diagonal in a fixed basis)
    \item asymmetry (where the free states are those invariant under some group)
    \item nonuniformity and purity (where the only free state is the maximally mixed state)
    \item thermodynamics (where the only free state is the Gibbs state for a fixed Hamiltonian and temperature; for this to be the maximally mixed state, we can take the temperature to be infinite)
    \item contextuality (where the set of free states are non-contextual probability distributions)
    \item stabilizer computation (where the free states are the convex hull of states that can be produced with a Clifford unitary from a standard state such as $\ket{0}$)
\end{enumerate}
For more details on the resource theories in question, we refer the reader to \cite{anshu2018quantifying} and the references therein as well as to the textbook \cite{gour2024resources}. We now present the continuity bounds we obtain on resource measures:

\begin{corollary} (Distance to the set of free states)
    Let $\mathcal F \subset \mathcal S(\mathcal H)$ be the free states of a resource theory such that $\1/d \in \mathcal F$. Then the map 
    \begin{equation}
        \widetilde D_{\alpha, \mathcal F}:\cS(\cH) \to \R, \qquad \rho \mapsto \widetilde D_{\alpha, \mathcal F}(\rho) :=  \inf\limits_{\tau \in \mathcal F} \widetilde D_\alpha(\rho \Vert \tau)
    \end{equation}
    is uniformly continuous for $\alpha \in [1/2, 1) \cup (1, \infty)$ and for $\rho$, $\sigma \in \cS(\cH)$ with $\tfrac{1}{2}\norm{\rho - \sigma}_1 \le \varepsilon$, and $\alpha \in [1/2,1)$
    \begin{equation}
        |\widetilde D_{\alpha, \cF}(\rho) - \widetilde D_{\alpha, \cF}(\sigma)| \le \log(1 + \varepsilon) + \frac{1}{1 - \alpha}\log(1 + \varepsilon^\alpha d^{1 - \alpha} - \frac{\varepsilon}{(1 + \varepsilon)^{1 - \alpha}})
    \end{equation}
    and for $\alpha \in (1, \infty)$ 
    \begin{equation}
         |\widetilde D_{\alpha, \cF}(\rho) - \widetilde D_{\alpha, \cF}(\sigma)| \le \min\begin{cases}
             \log(1 + \varepsilon) + \frac{1}{\alpha - 1}\log(1 + \varepsilon d^{\alpha - 1} - \frac{\varepsilon^\alpha}{(1 + \varepsilon)^{\alpha - 1}}) \, , \\
             \frac{\alpha}{\alpha - 1} \log(1 + \varepsilon d^{\frac{\alpha - 1}{\alpha}}) \, ,\\
             \log(1 + \varepsilon) + \frac{\alpha}{\alpha - 1}\log(1 + \varepsilon d^{\frac{\alpha - 1}{\alpha}} - \frac{\varepsilon^{2 - \frac{1}{\alpha}}}{(1 + \varepsilon)^{\frac{\alpha - 1}{\alpha}}}) \, .
         \end{cases}
    \end{equation}
\end{corollary}
\begin{proof}
    We observe that, using $\1/d \in \cF$,
    \begin{equation}
        \sup_{\rho \in \cS(\cH)}\widetilde D_{\alpha, \cF}(\rho) \leq \log(d) + \sup_{\rho \in \cS(\cH)} \frac{1}{\alpha-1} \log  \tr[\rho^\alpha] \leq \log(d).
    \end{equation}
    Then, the claimed bounds follow from \Cref{theo:minimial-distance-cc-set-almost-additive-approach}, \Cref{theo:minimial-distance-cc-set-operator-approach}, and \Cref{theo:minimial-distance-cc-set-mixed-approach}.
\end{proof}

\begin{remark}
    The bounds presented here use no special knowledge of $\mathcal F$ other than that it contains the maximally mixed state. Note that this state could be replaced with another full-rank state of choice. As seen in \Cref{sec:sep-states} better bounds are in some cases possible for a given resource theory using other states in $\cF$ to estimate the $\kappa$ in \Cref{theo:minimial-distance-cc-set-almost-additive-approach}, \Cref{theo:minimial-distance-cc-set-operator-approach}, and \Cref{theo:minimial-distance-cc-set-mixed-approach}. Similar calculations can also provide continuity bounds on the resource measure in the resource theory of thermodynamics at finite temperature. We leave further explorations of concrete resource theories to future work.
\end{remark}

\subsection{Generalized sandwiched Rényi mutual information}

In this section, we are interested in continuity bounds for the \emph{generalized sandwiched R{\'e}nyi mutual information} defined as
\begin{equation}\label{eq:gen-sandwiched-MI}
    \widetilde I^\uparrow_\alpha(\rho_{AB}\|\tau_A) :=  \inf_{\sigma_B \in \mathcal S(\mathcal H_B)}\widetilde D_\alpha(\rho_{AB}\|\tau_A \otimes \sigma_B)
\end{equation}
for $\rho_{AB} \in \mathcal S(\mathcal H_{AB})$ and $\tau_A \in \mathcal S(\mathcal H_A)$ such that $\ker(\tau_A) \subseteq \ker{\rho_A}$. 

This quantity appears in the context of hypothesis testing \cite{hayashi2016correlation}. More concretely, in appears in the strong converse exponent of the hypothesis test where the null hypothesis is that the state is $\rho_{AB}^{\otimes n}$ and the alternative hypothesis that the state is $\tau_A^{\otimes n} \otimes \sigma_{B^n}$ for some state $\sigma_{B^n}$, not necessarily product. In this case, we impose that the error of the second kind goes to zero exponentially fast with a rate exceeding the mutual information $I(\rho_{AB}\| \tau_A)$. Here, $I(\rho_{AB}\| \tau_A)$ is defined as in Eq.\ \eqref{eq:gen-sandwiched-MI}, but with the Umegaki relative entropy. Then, the authors of \cite{hayashi2016correlation} prove that the error of the first kind converges to $1$ exponentially fast and the rate is determined by $\widetilde I^\uparrow_\alpha(\rho_{AB}\|\tau_A)$ with $\alpha > 1$. Recently, the generalized sandwiched R{\'e}nyi mutual information has also appeared in the context of convex splitting \cite{cheng2023tight, cheng2023quantum}.

For this quantity, we can prove the following continuity bounds:

\begin{corollary}
    Let $\rho_{AB}$, $\sigma_{AB} \in \mathcal S(\mathcal H_{AB})$ and $\tau_A \in \mathcal S(\mathcal H_A)$ such that $\ker\tau_A \subseteq \ker{\rho_A} \cap \ker \sigma_A$. Moreover, let $\frac{1}{2}\|\rho_{AB} - \sigma_{AB}\| \leq \epsilon$ and let $\widetilde m_{\tau}$ be the minimal non-zero eigenvalue of $\tau_A$. Then the function $\eta_{AB} \mapsto \widetilde I^\uparrow_\alpha(\eta_{AB}\|\tau_A)$ is uniformly continuous on $\mathcal S(\mathcal H_{AB})$ with the following continuity bounds: for $\alpha \in [1/2, 1)$, we find
    \begin{equation}
         |\widetilde I^\uparrow_\alpha(\rho_{AB}\|\tau_A) - \widetilde I^\uparrow_\alpha(\sigma_{AB}\|\tau_A)| \le \log(1 + \varepsilon) + \frac{1}{1 - \alpha} \log(1 + \varepsilon^\alpha \left(\frac{m}{\widetilde m_\tau}\right)^{1 - \alpha} - \frac{\varepsilon}{(1 + \varepsilon)^{1 - \alpha}})
    \end{equation}
    and for $\alpha \in (1, \infty)$
    \begin{equation}
         |\widetilde I^\uparrow_\alpha(\rho_{AB}\|\tau_A) - \widetilde I^\uparrow_\alpha(\sigma_{AB}\|\tau_A)| \le \min \begin{cases}
             \log(1 + \varepsilon) + \frac{1}{\alpha - 1}\log(1 + \varepsilon \left(\frac{m}{\widetilde m_\tau}\right)^{\alpha - 1} - \frac{\varepsilon}{(1 + \varepsilon)^{\alpha - 1}}) \, ,\\
             \frac{\alpha}{\alpha - 1} \log(1 + \varepsilon \left(\frac{m}{\widetilde m_\tau}\right)^{\frac{\alpha - 1}{\alpha}}) \, , \\
             \log(1 + \varepsilon) + \frac{\alpha}{\alpha - 1} \log(1 + \varepsilon \left(\frac{m}{\widetilde m_\tau}\right)^{\frac{\alpha - 1}{\alpha}} - \frac{\varepsilon^{2 - \frac{1}{\alpha}}}{(1 + \varepsilon)^{\frac{\alpha - 1}{\alpha}}}) \, , 
         \end{cases}
    \end{equation}
    where $m = \min\{d_A, d_B\}$.
\end{corollary}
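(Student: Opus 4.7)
The strategy is to realize $\widetilde I^\uparrow_\alpha(\rho_{AB}\|\tau_A)$ as the minimum-distance quantity $\widetilde D_{\alpha,\cC_{\tau_A}}(\rho_{AB})$ associated to the compact convex set
\[
\cC_{\tau_A} := \{\tau_A \otimes \sigma_B : \sigma_B \in \cS(\cH_B)\} \subseteq \cS(\cH_{AB}),
\]
whose convexity and compactness are inherited from $\cS(\cH_B)$ under the continuous affine map $\sigma_B \mapsto \tau_A \otimes \sigma_B$. If $\tau_A$ fails to be positive definite, one first restricts to $\widetilde\cH_A := \mathrm{supp}(\tau_A)$, just as in the proof of \cref{cor:continuity-bound-renyi-first-argument}; the kernel hypothesis $\ker \tau_A \subseteq \ker \rho_A \cap \ker \sigma_A$ makes this legitimate, and on the restricted space $\tau_A \otimes \mathds{1}_B / d_B$ is a positive definite element of $\cC_{\tau_A}$. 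The corollary then follows from \cref{theo:minimial-distance-cc-set-axiomatic-approach}, \cref{theo:minimial-distance-cc-set-operator-approach}, and \cref{theo:minimial-distance-cc-set-mixed-approach} once a uniform bound $\kappa$ on $\widetilde D_{\alpha,\cC_{\tau_A}}$ is identified.

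The key technical step is showing that $\kappa = m/\widetilde m_\tau$ with $m = \min\{d_A,d_B\}$ works. Since $\widetilde D_\alpha \leq D_\infty$ by Eq.~\eqref{eq:max-upper-bound-on-sandwiched}, it suffices to produce two upper bounds whose minimum is $\log(m/\widetilde m_\tau)$. For the first, plug $\sigma_B = \mathds{1}_B/d_B$ into the max-divergence: combining $\|\tau_A^{-1/2}\|_\infty^2 = \widetilde m_\tau^{-1}$ with $\|\rho_{AB}\|_\infty \leq 1$ yields
\[
\widetilde I^\uparrow_\alpha(\rho_{AB}\|\tau_A) \leq D_\infty(\rho_{AB}\|\tau_A \otimes \mathds{1}_B/d_B) \leq \log(d_B/\widetilde m_\tau).
\]
For the second, use the operator inequality $\tau_A \geq \widetilde m_\tau \mathds{1}_{\widetilde A}$ on the restricted space, together with the anti-monotonicity of $\widetilde D_\alpha$ in its second argument (Lemma~4.10 of~\cite{Tomamichel_2016}) and the scaling identity $\widetilde D_\alpha(\rho\|cY) = \widetilde D_\alpha(\rho\|Y) - \log c$, to obtain
\[
\widetilde D_\alpha(\rho_{AB}\|\tau_A \otimes \sigma_B) \leq -\log \widetilde m_\tau + \widetilde D_\alpha(\rho_{AB}\|\mathds{1}_{\widetilde A}\otimes \sigma_B);
\]
infimizing over $\sigma_B$ identifies the second summand with $-\widetilde H^\uparrow_\alpha(\widetilde A|B)_\rho$, which is bounded above by $\log d_A$ by Eq.~\eqref{eq:bound-sandwiched-cond-renyi}. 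Hence $\widetilde I^\uparrow_\alpha(\rho_{AB}\|\tau_A) \leq \log(d_A/\widetilde m_\tau)$.

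Feeding $\kappa = m/\widetilde m_\tau$ into \cref{theo:minimial-distance-cc-set-axiomatic-approach} produces the asserted single bound for $\alpha \in [1/2,1)$. For $\alpha \in (1,\infty)$, applying \cref{theo:minimial-distance-cc-set-axiomatic-approach}, \cref{theo:minimial-distance-cc-set-operator-approach}, and \cref{theo:minimial-distance-cc-set-mixed-approach} respectively delivers the three entries inside the minimum. The principal obstacle is the uniform bound above: unlike for the ordinary sandwiched R\'enyi mutual information in \cref{sec:sand-mutual-info}, the quantity is \emph{asymmetric} in $A$ and $B$ because $\tau_A$ is fixed, so the $d_A$-dependent half of the estimate is not a symmetric mirror image of the $d_B$-dependent half and must instead be obtained by the subtle combination of the eigenvalue lower bound on $\tau_A$ with the conditional-entropy bound rather than from a direct symmetry argument.
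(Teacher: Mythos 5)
Your proposal is correct and follows essentially the same route as the paper: restrict to the support of $\tau_A \otimes \1_B$, identify $\widetilde I^\uparrow_\alpha(\cdot\|\tau_A)$ with $\widetilde D_{\alpha,\cC}$ for $\cC = \{\tau_A \otimes \sigma_B\}$, establish $\kappa = m/\widetilde m_\tau$, and invoke Theorems \ref{theo:minimial-distance-cc-set-axiomatic-approach}, \ref{theo:minimial-distance-cc-set-operator-approach}, and \ref{theo:minimial-distance-cc-set-mixed-approach}. The only (cosmetic) difference is in obtaining $\kappa$: the paper gets $\log(m/\widetilde m_\tau)$ in one step from $\tau_A \ge \widetilde m_\tau \1$ together with $-\widetilde H^\uparrow_\alpha(A|B)_\rho \le \log\min\{d_A,d_B\}$ from Eq.~\eqref{eq:bound-sandwiched-cond-renyi}, whereas you derive the $d_A$- and $d_B$-dependent halves by two separate arguments and take their minimum — the extra max-divergence computation is not actually needed.
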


\begin{proof}
    We first restrict the Hilbert space $\cH_{AB}$ to the support of $\tau_A \otimes \1_B$ and get that this is a Hilbert space $\widetilde \cH_{AB}$ on which $\tau_A \otimes \1$ is positive definite. Clearly $\rho_{AB}, \sigma_{AB} \in \cS(\widetilde{\cH}_{AB})$ as $\ker \tau_A \subseteq \ker \rho_A \cap \ker \sigma_A$, a fact that can be verified using a purification of $\rho_{AB}$ and an appropriate Schmidt decomposition that for $P_A$ the orthogonal projection onto the support of $\tau_A$, it holds that $(P_A \otimes \mathds 1) \rho_{AB} (P_A \otimes \mathds 1) = \rho_{AB}$. Hence
    \begin{equation}
        \mathcal C(\tau_A) := \left\{\tau_A \otimes \sigma_B: \sigma_B \in \mathcal S(\mathcal H_B)\right\}
    \end{equation}
    is a compact convex subset of the state space $\mathcal S(\mathcal H_{AB})$ which contains a positive definite state (e.g.~$\tau_A \otimes \frac{\1_B}{d_B}$). Finally, we have that 
    \begin{align}
       0 \le \widetilde I^\uparrow_\alpha(\rho_{AB}\|\tau_A) & \leq  \inf_{\sigma_B} \widetilde D_\alpha(\rho_{AB}\|\widetilde m_{\tau} \mathds{1}_A \otimes \sigma_B) \\
        & = - \log(\widetilde m_{\tau}) + \inf_{\sigma_B} \widetilde D_\alpha(\rho_{AB}\| \mathds{1}_A \otimes \sigma_B)
    \end{align}
    where we have used Lemma 4.3 of \cite{Tomamichel_2016} in the inequality. Thus
    \begin{equation}
        \sup_{\rho_{AB}} |\widetilde I^\uparrow_\alpha(\rho_{AB}\|\tau_A)| \leq - \log(\widetilde m_{\tau}) - \widetilde H^\uparrow_\alpha(A|B)_\rho \leq \log \frac{m}{\widetilde m_{\tau}}.
    \end{equation}
    The assertion now follows from applying \Cref{theo:minimial-distance-cc-set-almost-additive-approach}, \Cref{theo:minimial-distance-cc-set-operator-approach} and \ref{theo:minimial-distance-cc-set-mixed-approach}.
\end{proof}

For $\tau_A = \mathds 1/d_A$, we retrieve the bounds on the sandwiched R{\'e}nyi conditional entropy in Section \ref{sec:sand-cond-entropy} as expected. Therefore, they also have an interpretation in a hypothesis-testing scenario.

\begin{corollary}[Limits]
    Let $\rho_{AB}$, $\sigma_{AB} \in \mathcal S(\mathcal H_{AB})$ and $\tau \in \mathcal S(\mathcal H_A)$ such that $\ker\tau_A \subseteq \ker\rho_A \cap \ker \sigma_A$. Moreover, let $\frac{1}{2}\|\rho_{AB} - \sigma_{AB}\| \leq \epsilon$ and $\widetilde m_{\tau}$ be the minimal non-zero eigenvalue of $\tau_A$. Then we find 
    \begin{align}
        |I(\rho_{AB}\|\tau_A) - I(\sigma_{AB}\|\tau_A)| &\le \varepsilon\log \left(\frac{m}{\widetilde m_{\tau}}\right) + (1 + \varepsilon) h\Big(\frac{\varepsilon}{1 + \varepsilon}\Big) \, ,\\
        |\widetilde I^\uparrow_\infty(\rho_{AB}\|\tau_A) - \widetilde I^\uparrow_\infty(\sigma_{AB}\|\tau_A)| &\le \log(1 + \varepsilon\frac{m}{\widetilde m_{\tau}})\, .
    \end{align}
\end{corollary}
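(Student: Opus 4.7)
The plan is to simply pass to the limit in the continuity bounds proved in the preceding corollary, reusing the technology from \cref{lem:limit-change}, \cref{lem:limit-axiomatic}, and \cref{lem:limit-operator}. The key observation is that, in the proof of the preceding corollary, the generalized sandwiched R\'enyi mutual information was identified with $\widetilde D_{\alpha,\mathcal C(\tau_A)}$ for the compact convex set $\mathcal C(\tau_A) = \{\tau_A \otimes \sigma_B : \sigma_B \in \mathcal S(\mathcal H_B)\}$, which contains the positive definite state $\tau_A \otimes \mathds 1_B/d_B$ (after restricting to the support of $\tau_A \otimes \mathds 1_B$). Moreover, the bound $\sup_{\rho_{AB}} \widetilde I^\uparrow_\alpha(\rho_{AB}\|\tau_A) \le \log(m/\widetilde m_\tau)$ holds \emph{uniformly} in $\alpha$, so $\kappa := m/\widetilde m_\tau$ is an admissible choice independent of $\alpha$ (cf.~\cref{rem:kappa}).

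Having set this up, the $\alpha\to 1$ limit of the left-hand side is handled by \cref{lem:limit-change}: applied to $\mathcal C(\tau_A)$, it gives $\lim_{\alpha\to 1}\widetilde I^\uparrow_\alpha(\rho_{AB}\|\tau_A) = I(\rho_{AB}\|\tau_A)$ and likewise for $\sigma_{AB}$. For the $\alpha\to\infty$ limit one uses the analogous statement from the same lemma to identify $\lim_{\alpha\to\infty}\widetilde I^\uparrow_\alpha(\rho_{AB}\|\tau_A) = \widetilde I^\uparrow_\infty(\rho_{AB}\|\tau_A)$.

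The right-hand sides are then treated by the l'Hospital-style computations already carried out in \cref{lem:limit-axiomatic} and \cref{lem:limit-operator}. Taking $\alpha\to 1$ in the axiomatic bound of the previous corollary (the first of the three options for $\alpha>1$, or equivalently the single bound for $\alpha<1$) with $\kappa = m/\widetilde m_\tau$ yields directly
\begin{equation}
    \varepsilon\log\!\Bigl(\frac{m}{\widetilde m_\tau}\Bigr) + (1+\varepsilon)\, h\!\Bigl(\frac{\varepsilon}{1+\varepsilon}\Bigr),
\end{equation}
which matches the claim. For $\alpha\to\infty$ we instead apply the operator space bound (the second option for $\alpha>1$), whose limit is $\log(1+\varepsilon\kappa) = \log(1+\varepsilon\, m/\widetilde m_\tau)$ by \cref{lem:limit-operator}.

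I do not anticipate a major obstacle here: the work was already done in establishing the previous corollary and the three limit lemmas. The only point requiring mild care is the exchange of limit and infimum on the left-hand side, which is precisely what \cref{lem:limit-change} provides once we verify that $\mathcal C(\tau_A)$ is compact convex and contains a positive definite element after restriction to $\operatorname{supp}(\tau_A)\otimes\mathcal H_B$. With that in hand, combining the convergence of both sides concludes the proof.
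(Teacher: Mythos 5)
Your proposal is correct and follows essentially the same route as the paper, which simply invokes \cref{lem:limit-axiomatic} and \cref{lem:limit-operator} with $\kappa = m/\widetilde m_\tau$ applied to the bounds of the preceding corollary. Your additional care in citing \cref{lem:limit-change} to justify exchanging the limit with the infimum over $\mathcal C(\tau_A)$ is a reasonable (and slightly more explicit) elaboration of the same argument.
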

\begin{proof}
    This follows from Lemmas \ref{lem:limit-almost-additive} and \ref{lem:limit-operator}.
\end{proof}

\section{\texorpdfstring{$\alpha$}{alpha}-Approximate quantum Markov chains}\label{sec:approxQMC}

The main aim of this section is to study $\alpha$-approximate quantum Markov chains, namely positive states $\rho_{ABC}$ on a tripartite space whose (non-variational) sandwiched Rényi conditional mutual information is small enough. Here, we show that this notion is equivalent to that of approximate quantum Markov chains, i.e.~states for which the conditional mutual information is small enough. Beforehand, we need to introduce some technical results, which concern continuity bounds for sandwiched Rényi divergences in both inputs.

\subsection{Continuity bounds for non-variational Rényi divergences via the ALAFF method}

In contrast to our main results, we prove in this section some continuity bounds for the sandwiched Rényi divergence and its derived quantities with respect to both inputs. The drawback of this more general approach is that the bounds obtained are less tight compared to those previously proven by techniques tailored to continuity bounds where the second input is fixed or optimized over.

In comparison to the quantities studied before, we no longer optimise over the second state but consider it to be the marginal of the input. Let us consider a bipartite Hilbert space $\cH_{AB}=\cH_A \otimes \cH_B$ and $\rho_{AB} \in \cS(\cH_{AB})$. Then, for $\alpha \in [1/2, 1) \cup (1, \infty)$, the \emph{(non-variational) sandwiched R{\'e}nyi conditional entropy} is given by 
    \begin{equation}
        \widetilde H_\alpha(A|B)_\rho := \frac{1}{1 - \alpha} \log \widetilde Q_\alpha(\rho_{AB} \Vert \1_A \otimes \rho_B) \, .
    \end{equation}
Note that we will also use the notation $\alphaQ(A|B)_\rho := \alphaQ(\rho_{AB} \Vert \1 \otimes \rho_B)$. Analogously, for the \textit{(non-variational) sandwiched R{\'e}nyi mutual information} we set
\begin{equation}
    \widetilde I_\alpha(A:B)_\rho :=  \widetilde D_\alpha(\rho_{AB} \Vert \rho_A \otimes \rho_B) \, ,
\end{equation}
and lastly, we define the \textit{(non-variational) sandwiched R{\'e}nyi  conditional mutual information} for $\rho_{ABC} \in \cS(\cH_A \otimes \cH_B \otimes \cH_C)$ by
\begin{equation}
     \widetilde I_\alpha(A:C|B)_\rho := \widetilde H_\alpha(C|B)_\rho - \widetilde H_\alpha(C|AB)_\rho \, .
\end{equation}

The approach to derive continuity bounds will be based on the \textit{ALAFF method} \cite{Bluhm2022ContinuityBounds,Bluhm2022ContinuityBoundsShort}. This is a generalization of the \emph{Alicki-Fannes-Winter (AFW) method}, introduced under this name by Shirokov \cite{Shirokov-AFWmethod-2020, Shirokov-ContinuityReview-2022} and based on the seminal results for continuity bounds of entropies by the authors of \cite{AlickiFannes-2004,Winter-AlickiFannes-2016}.

Let $\cH$ denote a finite-dimensional Hilbert space and $f$ be a real-valued function on the convex set $\cS_0 \subseteq \cS(\cH)$. We say that $f$ is \textit{almost locally affine (ALAFF)}, if there exist continuous functions $a_f, b_f : [0, 1] \to \R$, that are non decreasing on $[0, \tfrac{1}{2}]$, vanish as $p \to 0^+$ and satisfy
\begin{equation}\label{eq:alaff-property}
        -a_f(p) \le f(p \rho + (1 - p) \sigma) - p f(\rho) - (1 - p) f(\sigma) \le b_f(p)
\end{equation}
for all $p \in [0, 1]$ and $\rho, \sigma \in \cS_0$. Moreover, the set $\cS_0$  is called \textit{perturbed $\Delta$-invariant} with perturbation parameter $s \in [0, 1)$ if for all $\rho, \sigma \in \cS_0$ with $\rho \neq \sigma$, there is a state $\tau$ such that both states
\begin{equation}
    \Delta^\pm(\rho, \sigma, \tau) = s \tau + (1 - s) \varepsilon^{-1}[\rho - \sigma]_\pm\in\cS_0 \, ,
\end{equation}
where we fix $\varepsilon = \tfrac{1}{2}\norm{\rho - \sigma}_1$, and denote by $[\,\cdot\,]_\pm$ the positive and negative parts of a self-adjoint operator, respectively.

In the following result, we prove that $\widetilde{Q}_\alpha(\cdot \| \cdot)$ is almost locally affine for $\alpha \in [1/2,1)\cup(1, + \infty)$.
Note that we only need to prove the almost joint concavity part for $\alpha \in (1, + \infty)$ (respectively, convexity, for $\alpha \in [1/2,1)$), since  $\widetilde{Q}_\alpha(\cdot \| \cdot)$ is already jointly convex (resp. jointly concave).

\begin{theorem}\label{lemma:almost_concavity_convexity_Q}
     Let $(\rho_1, \sigma_1), (\rho_2, \sigma_2) \in \cS_{\ker}\coloneqq\{(\rho, \sigma) \in \cS(\cH) \times \cS(\cH) \;:\; \ker \sigma \subseteq \ker \rho\}$, $p \in [0,1]$, and define $\rho := p \rho_1 + (1-p ) \rho_2$ and $\sigma := p \sigma_1 + (1-p ) \sigma_2$, respectively. Let us denote by $m_{\sigma_1}$ and $m_{\sigma_2}$ the minimal non-zero eigenvalue of $\sigma_1$ and $\sigma_2$, respectively. Then, for $\alpha \in [1/2,1)$, we have 
     \begin{equation}\label{eq:xi-lower}
        \widetilde{Q}_\alpha (\rho \| \sigma) \leq p \, \widetilde{Q}_\alpha (\rho_1 \| \sigma_1) + (1-p)  \widetilde{Q}_\alpha (\rho_2 \| \sigma_2) + \xi (\alpha, p, \sigma_1, \sigma_2) \, ,
    \end{equation}
    and for $\alpha \in (1, +\infty)$,
    \begin{equation}\label{eq:xi-upper}
        \widetilde{Q}_\alpha (\rho \| \sigma) \geq p  \widetilde{Q}_\alpha (\rho_1 \| \sigma_1) + (1-p)  \widetilde{Q}_\alpha (\rho_2 \| \sigma_2) + \xi(\alpha, p, \sigma_1, \sigma_2) \, ,
    \end{equation}
    where for $\alpha\in[1/2,1)$ the error term $\xi$ is positive and if $\alpha\in(1,\infty)$ it is negative. Moreover,
    \begin{equation}
        \begin{aligned}
            \xi(\alpha, p, \sigma_1, \sigma_2)&\coloneqq -1+p^\alpha (p + (1-p) m^{-1}_{\sigma_1} )^{1-\alpha}    + (1-p)^{\alpha} (p m^{-1}_{\sigma_2} + (1-p)  )^{1-\alpha}\\
            &\leq (1-\alpha)\sqrt{p}\left(\left(\log(m^{-1}_{\sigma_1})+1\right)(m^{-1}_{\sigma_1} )^{1-\alpha}+\left(m^{-1}_{\sigma_2}+1\right)( m^{-1}_{\sigma_2})^{1-\alpha}\right)\eqqcolon u_\alpha(p)
        \end{aligned}
    \end{equation}
    for $\alpha \in [1/2,1)$
    and 
    \begin{equation}
        \begin{aligned}
            \xi(\alpha, p, \sigma_1, \sigma_2)&\coloneqq -\big(p-p^\alpha (p + (1-p) m^{-1}_{\sigma_1} )^{1-\alpha} \big) m^{1-\alpha}_{\sigma_1}  \\
            &\qquad-  \big((1-p) - (1-p)^{\alpha} (p m^{-1}_{\sigma_2} + (1-p)  )^{1-\alpha} \big)m^{1-\alpha}_{\sigma_2}\\
            &\geq(1-\alpha)\sqrt{p}\left(\left(\log(m^{-1}_{\sigma_1})+1\right)m^{1-\alpha}_{\sigma_1}+\left(m^{-1}_{\sigma_2}+1\right)m^{1-\alpha}_{\sigma_2}\right)\eqqcolon v_\alpha(p)
        \end{aligned}
    \end{equation}
    for $\alpha \in (1,\infty)$. 
\end{theorem}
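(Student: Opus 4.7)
The inequalities (\ref{eq:xi-lower}) and (\ref{eq:xi-upper}) are the ``opposite direction'' of the joint convexity/concavity of $\widetilde{Q}_\alpha$ proven in Lemma \ref{lemma:Q-alpha-joint-convexity/concavity}: for $\alpha\in(1,\infty)$, joint convexity already gives $\widetilde{Q}_\alpha(\rho\|\sigma)\le p\widetilde{Q}_\alpha(\rho_1\|\sigma_1)+(1-p)\widetilde{Q}_\alpha(\rho_2\|\sigma_2)$ for free, and the real work is establishing the reverse inequality up to an additive correction $\xi\le 0$; symmetrically, joint concavity supplies the trivial direction when $\alpha\in[1/2,1)$, and one needs an upper bound with correction $\xi\ge 0$. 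The plan is to combine three standard tools: (i) the first-argument sub-/super\-additivity of $\widetilde{Q}_\alpha$ from Lemma \ref{lemma:Q-alpha-super/subadditivity}, (ii) monotonicity of $\widetilde{Q}_\alpha$ in the second argument (increasing for $\alpha\in[1/2,1)$, decreasing for $\alpha>1$), and (iii) the homogeneity $\widetilde{Q}_\alpha(\rho\|\lambda\sigma)=\lambda^{1-\alpha}\widetilde{Q}_\alpha(\rho\|\sigma)$.

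I would start with $\alpha>1$. Superadditivity applied with $X_1=p\rho_1$, $X_2=(1-p)\rho_2$, $Y=\sigma$ yields $\widetilde{Q}_\alpha(\rho\|\sigma)\ge p^\alpha\widetilde{Q}_\alpha(\rho_1\|\sigma)+(1-p)^\alpha\widetilde{Q}_\alpha(\rho_2\|\sigma)$. Next I would compare $\sigma$ with a scalar multiple of $\sigma_1$: on $\supp(\sigma_1)$ one has $\sigma_2\le \Pi_{\sigma_1}\le m_{\sigma_1}^{-1}\sigma_1$, so $\Pi_{\sigma_1}\sigma\Pi_{\sigma_1}\le(p+(1-p)m_{\sigma_1}^{-1})\sigma_1$; since the kernel hypothesis places $\rho_1$ inside $\supp(\sigma_1)$, the off-support part of $\sigma$ is invisible to $\widetilde{Q}_\alpha(\rho_1\|\cdot)$, and antimonotonicity together with homogeneity produce $\widetilde{Q}_\alpha(\rho_1\|\sigma)\ge(p+(1-p)m_{\sigma_1}^{-1})^{1-\alpha}\widetilde{Q}_\alpha(\rho_1\|\sigma_1)$. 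The symmetric bound with $\sigma_1\leftrightarrow\sigma_2$ handles the $\rho_2$ term. Collecting produces a lower bound of the form $a\,\widetilde{Q}_\alpha(\rho_1\|\sigma_1)+b\,\widetilde{Q}_\alpha(\rho_2\|\sigma_2)$ with $a\le p$ and $b\le 1-p$, and substituting the crude upper bound $\widetilde{Q}_\alpha(\rho_i\|\sigma_i)\le m_{\sigma_i}^{1-\alpha}$ (which follows from $\widetilde{D}_\alpha\le D_\infty\le\log m_{\sigma_i}^{-1}$) into the negative coefficients $(a-p)$ and $(b-(1-p))$ recovers exactly the stated $\xi$. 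For $\alpha\in[1/2,1)$ the argument is a faithful mirror using subadditivity, monotonicity in the second argument, and the simpler bound $\widetilde{Q}_\alpha(\rho_i\|\sigma_i)\le 1$ coming from $\widetilde{D}_\alpha\ge 0$ on states.

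The explicit bounds $\xi\le u_\alpha(p)$ and $\xi\ge v_\alpha(p)$ are scalar estimates on the closed-form expression for $\xi$. The key elementary inequality is the tangent-line bound $t^{1-\alpha}\le 1+(1-\alpha)(t-1)$ for $t\ge 0$ (valid for $\alpha\in(0,1)$ by concavity of $t\mapsto t^{1-\alpha}$, with the reverse inequality for $\alpha>1$), which isolates the prefactor $(1-\alpha)$ appearing in $u_\alpha$ and $v_\alpha$. Combining this with $p^\alpha\le \sqrt{p}$ on $[0,1]$ whenever $\alpha\ge 1/2$ produces the $\sqrt{p}$ scaling, while the logarithmic factor $\log m_{\sigma_1}^{-1}+1$ arises from an inequality of the form $1-x^{-\beta}\le \beta\log x$ applied to $x=m_{\sigma_1}^{-1}$.

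The main technical obstacle lies in step (ii): the operator inequality $\sigma\le(p+(1-p)m_{\sigma_1}^{-1})\sigma_1$ does not hold on the ambient Hilbert space when $\sigma_1$ is not full rank, because the image of the right-hand side is contained in $\supp(\sigma_1)$ while $\sigma$'s image is $\supp(\sigma_1)\vee\supp(\sigma_2)$. One must either reduce to the full-rank case via a perturbation-and-limit argument, or argue directly that $\widetilde{Q}_\alpha(\rho_1\|\sigma)=\widetilde{Q}_\alpha(\rho_1\|\Pi_{\sigma_1}\sigma\Pi_{\sigma_1})$ because $\rho_1$ is supported on $\supp(\sigma_1)$, which itself requires some care with fractional powers and pseudoinverses. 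A secondary, fiddly difficulty is the bookkeeping converting the closed form of $\xi$ into $u_\alpha,v_\alpha$: the asymmetric treatment of the two summands (a $\log$ factor on one side, a linear $m_{\sigma_2}^{-1}$ factor on the other) suggests that the first summand, carrying the non-Lipschitz $p^\alpha$ factor, needs a different estimate than its $1-p$ counterpart.
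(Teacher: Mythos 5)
Your proposal follows essentially the same route as the paper's proof: splitting $\rho$ via the first-argument sub-/superadditivity of $\widetilde Q_\alpha$, splitting $\sigma$ via the projection bound $P_1\sigma P_1\le (p+(1-p)m_{\sigma_1}^{-1})\sigma_1$ combined with operator monotonicity/concavity of $x\mapsto x^{(1-\alpha)/\alpha}$, and then inserting the crude bounds $\widetilde Q_\alpha(\rho_i\|\sigma_i)\le 1$ (resp.\ $\le m_{\sigma_i}^{1-\alpha}$) to isolate $\xi$; you also correctly identify the support/projection issue that the paper resolves exactly as you suggest. The only cosmetic difference is in deriving $u_\alpha,v_\alpha$, where the paper writes the relevant differences as $\int_0^1\frac{d}{ds}(\cdot)^{s(1-\alpha)}\,ds$ rather than invoking tangent-line bounds, but the resulting elementary estimates are equivalent.
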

\begin{proof}
    We start with the case $\alpha \in [1/2, 1)$. Here, we first separate $\rho$ using the superadditivity of $\alphaQ$ (cf. \Cref{lemma:Q-alpha-super/subadditivity})
    \begin{align}
        \widetilde{Q}_\alpha (\rho \| \sigma) & = \tr[ (\sigma^{\frac{1-\alpha}{2 \alpha}} \rho \sigma^{\frac{1-\alpha}{2 \alpha}}  )^\alpha]\leq p^\alpha \tr[ (\sigma^{\frac{1-\alpha}{2 \alpha}} \rho_1 \sigma^{\frac{1-\alpha}{2 \alpha}}  )^\alpha] + (1-p)^{\alpha} \tr[ (\sigma^{\frac{1-\alpha}{2 \alpha}} \rho_2 \sigma^{\frac{1-\alpha}{2 \alpha}}  )^\alpha]\,,
    \end{align}
    To split $\sigma$, let $P_1$ be the projection onto the support of $\sigma_1$. Then, we can upper bound $P_1\sigma P_1$ by $\sigma_1$ using $P_1 \sigma_2 P_1 \le P_1 \le m_{\sigma_1}^{-1} \sigma_1$ to obtain
    \begin{align}
        P_1\sigma P_1 & =  p \sigma_1 + (1-p) P_1 \sigma_2 P_1 \leq \big(p + (1-p) m_{\sigma_1}^{-1} \big) \sigma_1 =:c_{\sigma_1}\sigma_1\,.
    \end{align}
    Then, we rewrite and upper bound by
    \begin{equation}
    \begin{aligned}
          \tr[ (\sigma^{\frac{1-\alpha}{2 \alpha}} \rho_1 \sigma^{\frac{1-\alpha}{2 \alpha}}  )^\alpha]& =\tr[ ( \rho_1^{1/2} P_1\sigma^{\frac{1-\alpha}{ \alpha}}P_1 \rho_1^{1/2})^\alpha] \\
          & \leq \tr[ ( \rho_1^{1/2} (P_1\sigma P_1)^{\frac{1-\alpha}{ \alpha}} \rho_1^{1/2})^\alpha] \\
          &\leq c_{\sigma_1}^{1-\alpha}\tr[(\rho_1^{1/2} \sigma_1^{\frac{1-\alpha}{ \alpha}} \rho_1^{1/2})^\alpha]\,,  
    \end{aligned}
    \end{equation}
    which uses that $x\mapsto x^s$ is operator monotone and operator concave for $s\in[0,1]$ and, in particular, $\frac{1-\alpha}{\alpha}\in(0,1]$ for all $\alpha\in[\frac{1}{2},1)$. Therefore, the first inequality follows from \cite[Theorem V.2.3]{Bhatia1997} Repeating the same steps for $\sigma_2$, this time inserting the projection $P_2$ onto the support of $\sigma_2$, and using the inequalities $\widetilde{Q}_\alpha (\rho_1 \| \sigma_1) \leq 1$ and $\widetilde{Q}_\alpha (\rho_2 \| \sigma_2) \leq 1$,  we obtain
     \begin{align}
        \widetilde{Q}_\alpha (\rho \| \sigma) & \leq  p^\alpha c_{\sigma_1}^{1-\alpha} \widetilde{Q}_\alpha (\rho_1 \| \sigma_1) + (1-p)^{\alpha} c_{\sigma_2}^{1-\alpha} \widetilde{Q}_\alpha (\rho_2 \| \sigma_2) \\
        &\leq p \widetilde{Q}_\alpha (\rho_1 \| \sigma_1) + (1-p) \widetilde{Q}_\alpha (\rho_2 \| \sigma_2)  + \xi(\alpha, p, \sigma_1, \sigma_2)  \, ,
    \end{align}
        for 
        \begin{align}
            \xi(\alpha, p, \sigma_1, \sigma_2) &=( p^\alpha c_{\sigma_1}^{1-\alpha}-p  )  + ((1-p)^{\alpha} c_{\sigma_2}^{1-\alpha} - (1-p) )\\
            &\geq( p^\alpha c_{\sigma_1}^{1-\alpha}-p  ) \widetilde{Q}_\alpha (\rho_1 \| \sigma_1) + ((1-p)^{\alpha} c_{\sigma_2}^{1-\alpha} - (1-p) )\widetilde{Q}_\alpha (\rho_2 \| \sigma_2)\,,
        \end{align}
        which proves the first bound. For $p\in\{0,1\}$ the bound is clear so that w.l.o.g.~we consider $p\in(0,1)$ in the following. Next, we upper bound $\xi(\alpha, p, \sigma_1, \sigma_2)$ further:
    \begin{equation}
        \begin{aligned}
            \xi(\alpha, p, \sigma_1, \sigma_2)&=-p+p^\alpha (p + (1-p) m^{-1}_{\sigma_1} )^{1-\alpha} -(1-p)+ (1-p)^{\alpha} (p m^{-1}_{\sigma_2} + (1-p)  )^{1-\alpha}\\
            &= \int_0^1p\frac{d}{ds}(1 + p^{-1}(1-p)m^{-1}_{\sigma_1} )^{s(1-\alpha)}+(1-p)\frac{d}{ds} ((1-p)^{-1}p m^{-1}_{\sigma_2} + 1 )^{s(1-\alpha)}ds\\
            &= (1-\alpha)\int_0^1p^{1-s(1-\alpha)}\left(\log(p + (1-p)m^{-1}_{\sigma_1})-\log(p)\right)(p + (1-p)m^{-1}_{\sigma_1} )^{s(1-\alpha)}\\
            &\qquad+(1-p)^{1-s(1-\alpha)}\left(\log(p m^{-1}_{\sigma_2} + (1-p))-\log(1-p)\right)(p m^{-1}_{\sigma_2} + (1-p))^{s(1-\alpha)}ds\,,
        \end{aligned}
    \end{equation}
    where the natural logarithm is considered. Next, we bound the following term separately
    \begin{equation}
        \begin{aligned}
            -\int_0^1p^{1-s(1-\alpha)}\log(p)(p + (1-p)m^{-1}_{\sigma_1} )^{s(1-\alpha)}&\leq-(p + (1-p)m^{-1}_{\sigma_1} )^{1-\alpha}\int_0^1p^{1-s/2}\log(p)ds\\
            &{\leq}2(m^{-1}_{\sigma_1} )^{1-\alpha}p(\sqrt{p}-p)\\
            &{\leq}(m^{-1}_{\sigma_1} )^{1-\alpha}\sqrt{p}
        \end{aligned}
    \end{equation}
    where we used $p(\sqrt{p}-p)\leq\frac{1}{2}\sqrt{p}$. 
    Similarly,
    \begin{equation}
        \begin{aligned}
            -\int_0^1(1-p)^{1-s(1-\alpha)}&\log(1-p)(pm^{-1}_{\sigma_2}+(1-p) )^{s(1-\alpha)}\\
            &\leq-(pm^{-1}_{\sigma_2}+(1-p) )^{1-\alpha}\int_0^1(1-p)^{1-s/2}\log(1-p)ds\\
            &{\leq}2(m^{-1}_{\sigma_1} )^{1-\alpha}(1-p)(\sqrt{1-p}-(1-p))\\
            &{\leq}(m^{-1}_{\sigma_1} )^{1-\alpha}\sqrt{p}
        \end{aligned}
    \end{equation}
    which uses that $\sqrt{1-p}-(1-p)\leq\frac{1}{2}\sqrt{p}$ in the last inequality. 
    Therefore, 
    \begin{equation}
        \begin{aligned}
            \xi(&\alpha, p, \sigma_1, \sigma_2)\\
            &= (1-\alpha)\int_0^1p^{1-s(1-\alpha)}\left(\log(p + (1-p)m^{-1}_{\sigma_1})-\log(p)\right)(p + (1-p)m^{-1}_{\sigma_1} )^{s(1-\alpha)}\\
            &\qquad\qquad\qquad+(1-p)^{1-s(1-\alpha)}\left(\log(p m^{-1}_{\sigma_2} + (1-p))-\log(1-p)\right)(p m^{-1}_{\sigma_2} + (1-p))^{s(1-\alpha)}ds\\
            &\leq (1-\alpha)\left(\sqrt{p}\left(\log(m^{-1}_{\sigma_1})+1\right)(m^{-1}_{\sigma_1} )^{1-\alpha}+\left(\log(p m^{-1}_{\sigma_2} + (1-p))+\sqrt{p}\right)( m^{-1}_{\sigma_2})^{1-\alpha}\right)\\
            &\leq (1-\alpha)\left(\sqrt{p}\left(\log(m^{-1}_{\sigma_1})+1\right)(m^{-1}_{\sigma_1} )^{1-\alpha}+\left(\log(p m^{-1}_{\sigma_2} + 1)+\sqrt{p}\right)( m^{-1}_{\sigma_2})^{1-\alpha}\right)\\
            &\leq (1-\alpha)\sqrt{p}\left(\left(\log(m^{-1}_{\sigma_1})+1\right)(m^{-1}_{\sigma_1} )^{1-\alpha}+\left(m^{-1}_{\sigma_2}+1\right)( m^{-1}_{\sigma_2})^{1-\alpha}\right)\,
        \end{aligned}
    \end{equation}
    finishes the bounds for the case $\alpha\in[1/2,1)$. Next, we consider the case of $\alpha \in (1, +\infty)$, which follows a similar line of reasoning. 
    The superadditivity of $\alphaQ$ (cf. \Cref{lemma:Q-alpha-super/subadditivity}) gives
    \begin{align}
        \widetilde{Q}_\alpha (\rho \| \sigma) & \geq p^\alpha \tr[ (\sigma^{\frac{1-\alpha}{2 \alpha}} \rho_1 \sigma^{\frac{1-\alpha}{2 \alpha}}  )^\alpha] + (1-p)^{\alpha} \tr[ (\sigma^{\frac{1-\alpha}{2 \alpha}} \rho_2 \sigma^{\frac{1-\alpha}{2 \alpha}}  )^\alpha]\,.
    \end{align}
    Then, we use the bounds $P_1\sigma P_1\leq c_{\sigma_1}\sigma_1$ and $P_2\sigma P_2\leq c_{\sigma_2}\sigma_2$ again. Since $\alpha \in (1, +\infty)$, the fraction $\frac{1-\alpha }{\alpha} \in (-1,0)$ so that $x \mapsto - x^{\frac{1-\alpha }{\alpha}}$ is operator monotone and operator concave, allowing us to use \cite[Exercise V.2.2]{Bhatia1997} and find
     \begin{align}
        \widetilde{Q}_\alpha (\rho \| \sigma) & \geq  p^\alpha c_{\sigma_1}^{1-\alpha} \widetilde{Q}_\alpha (\rho_1 \| \sigma_1) + (1-p)^{\alpha} c_{\sigma_2}^{1-\alpha} \widetilde{Q}_\alpha (\rho_2 \| \sigma_2) \\
        &= p \widetilde{Q}_\alpha (\rho_1 \| \sigma_1) + (1-p) \widetilde{Q}_\alpha (\rho_2 \| \sigma_2)  + \xi(\alpha, p, \sigma_1, \sigma_2)  \, ,
    \end{align}
    Since $\widetilde{Q}_\alpha (\rho_i \| \sigma_i) \leq m_{\sigma_i}^{1-\alpha}$ for $i = 1, 2$, we find that 
    \begin{align}
        \xi(\alpha, p, \sigma_1, \sigma_2) 
        &=-[ (p - p^\alpha c_{\sigma_1}^{1-\alpha} ) m^{1-\alpha}_{\sigma_1} + ((1-p) - (1-p)^{\alpha} c_{\sigma_2}^{1-\alpha} )m^{1-\alpha}_{\sigma_2}]\\
        &\leq-[ (p - p^\alpha c_{\sigma_1}^{1-\alpha} ) \widetilde{Q}_\alpha (\rho_1 \| \sigma_1) + ((1-p) - (1-p)^{\alpha} c_{\sigma_2}^{1-\alpha} )\widetilde{Q}_\alpha (\rho_2 \| \sigma_2)]\,,
    \end{align}
    giving the third bound in the assertion. At last, we simplify this bound as follows;
    \begin{equation}
        \begin{aligned}
            -\xi(&\alpha, p, \sigma_1, \sigma_2) \\
            &=\big(p-p^\alpha (p + (1-p) m^{-1}_{\sigma_1} )^{1-\alpha} \big) m^{1-\alpha}_{\sigma_1}+ \big((1-p) - (1-p)^{\alpha} (p m^{-1}_{\sigma_2} + (1-p)  )^{1-\alpha} \big)m^{1-\alpha}_{\sigma_2}\\
            &= (\alpha-1)\int_0^1p^{1-s(1-\alpha)}\left(\log(p + (1-p)m^{-1}_{\sigma_1})-\log(p)\right)(p + (1-p)m^{-1}_{\sigma_1} )^{s(1-\alpha)}m^{1-\alpha}_{\sigma_1}\\
            &\qquad\qquad+(1-p)^{1-s(1-\alpha)}\left(\log(p m^{-1}_{\sigma_2} + (1-p))-\log(1-p)\right)(p m^{-1}_{\sigma_2} + (1-p))^{s(1-\alpha)}m^{1-\alpha}_{\sigma_2}ds\\
            &\overset{(i)}{\leq}(\alpha-1)\left(p\left(\log(m^{-1}_{\sigma_1})-\log(p)\right)m^{1-\alpha}_{\sigma_1}+\left(\log(p m^{-1}_{\sigma_2} + (1-p))+\sqrt{p}\right)m^{1-\alpha}_{\sigma_2}\right)\\
            &\overset{(ii)}{\leq}
            (\alpha-1)\left(\sqrt{p}\left(\log(m^{-1}_{\sigma_1})+1\right)m^{1-\alpha}_{\sigma_1}+\left(\log(p m^{-1}_{\sigma_2} + 1)+\sqrt{p}\right)m^{1-\alpha}_{\sigma_2}\right)\\
            &\leq(\alpha-1)\sqrt{p}\left(\left(\log(m^{-1}_{\sigma_1})+1\right)m^{1-\alpha}_{\sigma_1}+\left( m^{-1}_{\sigma_2}+1\right)m^{1-\alpha}_{\sigma_2}\right)\,,
        \end{aligned}
    \end{equation}
    additionally, to the inequalities exploited before, we used in (i), the inequality $-(1-p)\log(1-p)\leq\sqrt{p}$ and in (ii) $-p\log p \le \sqrt{p}$. This finishes the proof.
\end{proof}

\begin{remark}\label{rem:behaviour_error_almost_convexity_Q}
    The function $\xi(\alpha, p, \sigma_1, \sigma_2)$ has the following behaviour:
    \begin{itemize}
        \item If $p=0,1$, we have $\xi(\alpha, p, \sigma_1, \sigma_2)=0$. For the bounds $u_\alpha(p), v_\alpha(p)$ from \Cref{lemma:almost_concavity_convexity_Q} only $p = 0$ makes them vanish.
        \item If $\sigma_1=\sigma_2$, we can replace $\xi(\alpha, p, \sigma_1, \sigma_1)$  by the simpler function $\xi(\alpha, p, \sigma_1)=\big(-1 + p^\alpha+ (1-p)^{\alpha}  \big)$, for $\alpha \in [1/2,1)$, and $\xi(\alpha, p, \sigma_1)=-\big(1 - p^\alpha- (1-p)^{\alpha}  \big)m^{1-\alpha}_{\sigma_1}$, for $\alpha \in (1,+\infty)$, as in these cases we can take $c_{\sigma_1}$ and $c_{\sigma_2}$ to be $1$ in the proof. 
        \item It can be seen from the statement of \Cref{lemma:almost_concavity_convexity_Q} that $p \mapsto u_\alpha(p)$ and $p \mapsto v_\alpha(p)$ are non-decreasing on $[0,1/2]$, since the square root is.
    \end{itemize}
\end{remark}
We are now in position of using the findings of Lemma \ref{lemma:almost_concavity_convexity_Q} for $\widetilde{Q}_\alpha (\cdot \| \cdot)$, jointly with a suitable $\Delta$-invariant set, to apply the ALAFF method (cf. \cite[Theorem 4.6]{Bluhm2022ContinuityBounds}) and obtain continuity bounds. For the time being, we focus on $\widetilde H_\alpha(A|B)$ and $\widetilde I_\alpha(A:C|B)$, however, prove more general continuity bounds for $\widetilde{Q}_\alpha (\cdot \| \cdot)$, and thus $\widetilde{D}_\alpha (\cdot \| \cdot)$, in Section \ref{sec:general_continuity_bounds}. 

\begin{corollary}\label{cor:CB_nonvar_Renyi_CondEnt}
    Let $d_{AB}^{-1} > m > 0$ and $\cS_0 := \{\rho : \rho \in \cS(\cH_{AB}), m_\rho \ge m\}$ with $m_{\rho}$ the minimal eigenvalue of $\rho$. Then for $\rho, \sigma \in \cS_0$ with $\frac{1}{2}\norm{\rho - \sigma}_1 \le \varepsilon$, we find for $\alpha \in [1/2, 1) \cup (1, \infty)$
    \begin{equation}
        |\widetilde{H}_\alpha (A|B)_\rho - \widetilde{H}_\alpha (A|B)_\sigma| \le c(\alpha, m, d_A, d_{AB}) \sqrt{\varepsilon} \, . 
    \end{equation}
    with $d_A, d_{AB}$ the dimensions of $\cH_A$ and $\cH_{AB}$ respectively and
    \begin{equation}
        c(\alpha, m, d_A, d_{AB}) \coloneqq  \left\{\begin{array}{ll}
            \frac{d_A^{2(1 - \alpha)} - 1}{1 - \alpha} \frac{1}{1 - m d_{AB}} + \frac{\sqrt{2} \, \widetilde c(\alpha, m)}{(1 - m d_{AB})} d_A^{2(1 - \alpha)} & \alpha \in [1/2, 1)\\
            2\log d_A \frac{1}{1 - m d_{AB}} + \frac{\sqrt{2} \, \widetilde c(1, m)}{(1 - m d_{AB})} & \alpha = 1\\
            \frac{d_A^{2(\alpha - 1)} - 1}{\alpha - 1} \frac{1}{1 - m d_{AB}} + \frac{\sqrt{2}\, \widetilde c(\alpha, m)}{(1 - m d_{AB})} & \alpha \in (1, \infty) 
        \end{array}\right.
    \end{equation}
    where $\widetilde c(\alpha, m) = \left\{\begin{array}{ll}
         (\log(m^{-1}) + m^{-1} + 2) m^{\alpha - 1} & \alpha \in [1/2, 1]\\
        (\log(m^{-1}) + m^{-1} + 2) m^{1 - \alpha} & \alpha \in (1, \infty)
    \end{array}\right.$.
\end{corollary}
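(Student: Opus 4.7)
The strategy is to apply the ALAFF method of \cite[Theorem 4.6]{Bluhm2022ContinuityBounds} directly to $f(\rho):=\widetilde H_\alpha(A|B)_\rho$ on the set $\cS_0$. This amounts to verifying (i) perturbed $\Delta$-invariance of $\cS_0$, and (ii) that $f$ is almost locally affine on $\cS_0$ with small enough $a_f,b_f$.

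\textbf{Step 1: Perturbed $\Delta$-invariance.} Choose $\tau=\1_{AB}/d_{AB}$ and $s=m\,d_{AB}\in[0,1)$, which is permissible since $md_{AB}<1$ by hypothesis. For any $\rho,\sigma\in\cS_0$ with $\rho\neq\sigma$ and $\varepsilon=\tfrac12\|\rho-\sigma\|_1$, the operators $\varepsilon^{-1}[\rho-\sigma]_\pm$ are states, so
\[
\Delta^\pm(\rho,\sigma,\tau)=s\tfrac{\1_{AB}}{d_{AB}}+(1-s)\varepsilon^{-1}[\rho-\sigma]_\pm\;\ge\;\tfrac{s}{d_{AB}}\1_{AB}=m\1_{AB},
\]
which places $\Delta^\pm$ in $\cS_0$.

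\textbf{Step 2: ALAFF property of $f$.} Write $q(\eta):=\widetilde Q_\alpha(\eta\|\1_A\otimes\eta_B)$ and note that for $\eta\in\cS_0$ one has $\eta_B\ge m\,d_A\,\1_B$, so the states $\tilde\sigma_i:=d_A^{-1}\1_A\otimes(\rho_i)_B$ satisfy $m_{\tilde\sigma_i}\ge m$. Hence Lemma \ref{lemma:almost_concavity_convexity_Q} is applicable, and together with the homogeneity $\widetilde Q_\alpha(\cdot\|\lambda Y)=\lambda^{1-\alpha}\widetilde Q_\alpha(\cdot\|Y)$ and joint convexity/concavity of $\widetilde Q_\alpha$, it yields, for $\rho=p\rho_1+(1-p)\rho_2$,
\[
p\,q(\rho_1)+(1-p)\,q(\rho_2)\;-\;d_A^{|1-\alpha|}\,E_\alpha(p)\;\le\;q(\rho)\;\le\;p\,q(\rho_1)+(1-p)\,q(\rho_2)\;+\;d_A^{|1-\alpha|}\,E_\alpha(p),
\]
where the two one-sided inequalities hold in one or the other direction according to $\alpha\lessgtr 1$, and $E_\alpha(p)\lesssim \sqrt{p}\,\tilde c(\alpha,m)$ is the $\sqrt{p}$-type bound supplied by $u_\alpha,v_\alpha$. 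Taking $\frac{1}{1-\alpha}\log(\cdot)$ and splitting
\[
\tfrac{1}{1-\alpha}\bigl[\log q(\rho)-p\log q(\rho_1)-(1-p)\log q(\rho_2)\bigr]
\]
through the midpoint $\tfrac{1}{1-\alpha}\log(pq(\rho_1)+(1-p)q(\rho_2))$, I would control one half by $\log(1+x)\le x$ (producing a term proportional to $d_A^{|1-\alpha|}E_\alpha(p)/[pq(\rho_1)+(1-p)q(\rho_2)]$, bounded thanks to the uniform lower bound on $q(\eta)$ coming from $m_\eta\ge m$) and the other half by the log-sum-exp deficit estimate, which can be absorbed into the range of $f$ on $\cS_0$. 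This produces explicit $a_f,b_f$ of the form $A\cdot p+B\sqrt{p}$ with $A$ controlled by the range of $f$, namely $\frac{d_A^{2(1-\alpha)}-1}{1-\alpha}$ or its $\alpha>1$ analogue (using $\widetilde H_\alpha(A|B)\in[-\log d_B,\log d_A]$ on $\cS_0$), and $B$ controlled by $\tilde c(\alpha,m)$.

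\textbf{Step 3: Applying ALAFF and main obstacle.} Invoking \cite[Theorem 4.6]{Bluhm2022ContinuityBounds} with the $\Delta$-invariance parameter $s=md_{AB}$ and the ALAFF data above gives a bound of shape $|f(\rho)-f(\sigma)|\le\bigl(A(\alpha,d_A)+B(\alpha,m)\bigr)\cdot\sqrt{\varepsilon}/(1-md_{AB})+\text{lower order}$, which rearranges into the announced $c(\alpha,m,d_A,d_{AB})\sqrt{\varepsilon}$. The $\alpha=1$ expression is obtained either by passing to the limit in the formula or, equivalently, by running the same ALAFF argument with $f=H(A|B)$ directly (so that $A(\alpha,d_A)$ turns into $2\log d_A$). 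The main obstacle is Step 2: the translation from the additive error $E_\alpha(p)$ on $\widetilde Q_\alpha$ to an ALAFF estimate on $\widetilde H_\alpha$ through the logarithm must be done carefully so as to preserve the $\sqrt{p}$ decay (this is what ultimately produces the $\sqrt{\varepsilon}$ rate rather than $\varepsilon\log d_A+h(\varepsilon)$), and here it is essential that the minimum-eigenvalue constraint $m_\rho\ge m$ keeps $q(\eta)$ bounded away from zero so that $\log(1+E/q)$ does not blow up.
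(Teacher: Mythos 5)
Your Steps 1 and 3 use exactly the same ingredients as the paper (the perturbation $\1_{AB}/d_{AB}$ with parameter $s=md_{AB}$, Theorem \ref{lemma:almost_concavity_convexity_Q}, and the ALAFF theorem of \cite{Bluhm2022ContinuityBounds}), but you reorder them in a way that turns your Step 2 into a genuine gap. The paper does \emph{not} establish an ALAFF property for $\widetilde H_\alpha(A|B)$ itself: it applies the ALAFF theorem to $\rho\mapsto\widetilde Q_\alpha(A|B)_\rho$, for which exact convexity/concavity in one direction and the additive $\sqrt{p}$-error $v_\alpha(p)$ (resp.\ $u_\alpha(p)$) in the other are already supplied by Theorem \ref{lemma:almost_concavity_convexity_Q}, obtains $|\widetilde Q_\alpha(A|B)_\rho-\widetilde Q_\alpha(A|B)_\sigma|\le(\alpha-1)c(\cdot)\sqrt{\varepsilon}$, and only then converts to $\widetilde H_\alpha$ by a single application of $\frac{1}{\alpha-1}\log(1+x)\le\frac{x}{\alpha-1}$ together with a lower bound on $\widetilde Q_\alpha(A|B)_\sigma$. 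Your route requires pushing the logarithm through the almost-affinity \emph{before} invoking ALAFF, and the sketch you give does not close: in the direction where the error $\xi$ \emph{decreases} the argument of the logarithm, you need $\bigl|\log\bigl(1+\xi/(pq_1+(1-p)q_2)\bigr)\bigr|\lesssim|\xi|$, which is only valid when $|\xi|/(pq_1+(1-p)q_2)$ is bounded away from $1$; since $|\xi|$ scales like $\sqrt{p}\,(\log(m^{-1})+m^{-1}+2)\,m^{1-\alpha}$ while the denominator is only bounded below by a dimension-dependent constant, this ratio exceeds $1$ for moderate $p$ once $m$ is small or $\alpha$ large, and the linearization fails precisely where the ALAFF hypotheses must hold (all $p\in[0,1]$, with $a_f,b_f$ nondecreasing on $[0,\tfrac12]$). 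Handling this requires an additional case analysis against the range of $\widetilde H_\alpha$ that you have not supplied.

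A second, smaller problem is quantitative: even if Step 2 is repaired, your $a_f,b_f$ pick up the concavity deficit of the logarithm (your ``log-sum-exp deficit''), which contributes an extra term of order $p\log d_A$ or $h(p)$ to the $\sqrt{p}$ coefficient, while your linear term is governed by the range of $\widetilde H_\alpha$ (about $2\log d_A$) rather than by the range $d_A^{2|\alpha-1|}-1$ of $\widetilde Q_\alpha$. The resulting constant is neither the stated $c(\alpha,m,d_A,d_{AB})$ nor obviously dominated by it (the $\sqrt{\varepsilon}$ coefficient would exceed $\sqrt2\,\widetilde c(\alpha,m)/(1-md_{AB})$), so the corollary as written would not follow. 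To match the statement you should follow the paper's ordering: run ALAFF on $\widetilde Q_\alpha(A|B)_\cdot$, where the range bound $d_A^{2|\alpha-1|}-1$ and the error $u_\alpha,v_\alpha$ enter directly, and take the logarithm only once on the final difference.
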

\begin{proof}
    We will only demonstrate the proof for $\alpha > 1$ as the one for $\alpha < 1$ is completely analogous. First note that $\cS_0$ is a convex, $m d_{AB}$-perturbed $\Delta$-invariant set\footnote{One can for example use $\frac{\1}{d_{AB}}$ to perturb.} (see \cite[Corollary 6.8]{Bluhm2022ContinuityBounds} for comparison). We further have that $\rho \mapsto \alphaQ(A|B)_\rho$ as a map from $\cS_0 \to [0, \infty)$ is convex and almost concave with remainder $v_\alpha(p)$ (cf. \Cref{lemma:almost_concavity_convexity_Q}). Lastly, we get 
    \begin{equation}
        \sup\limits_{\mu, \nu \in \cS(\cH)}|\alphaQ(A|B)_\mu - \alphaQ(A|B)_\nu | \le d_A^{2(\alpha - 1)} - 1\, , 
    \end{equation}
    and that $p \mapsto \frac{v_\alpha(p)}{1 - p}$ is monotone on $[0, 1)$ as clearly $\frac{\sqrt{p}}{1 - p}$ is. Employing \cite[Theorem 4.6]{Bluhm2022ContinuityBounds} gives 
    \begin{align}
        |\alphaQ(A|B)_\rho - \alphaQ(A|B)_\sigma| &\le  (d_A^{2(\alpha - 1)} - 1)\frac{\varepsilon}{1 - m d_{AB}} + (\alpha - 1) \frac{\sqrt{1 - m d_{AB} + \varepsilon}}{(1 - m d_{AB})}  \widetilde c(\alpha, m) \sqrt{\varepsilon} \\
        &\le  (d_A^{2(\alpha - 1)} - 1)\frac{\varepsilon}{1 - m d_{AB}} + (\alpha - 1) \frac{\sqrt{2} \,  \widetilde c(\alpha, m)}{(1 - m d_{AB})} \sqrt{\varepsilon}\\
        &\le (\alpha - 1)c(\alpha, m, d_A, d_{AB}) \sqrt{\varepsilon} \, .
    \end{align}
    Note that the normalization of $\alphaQ(A|B)_\rho$ is cancelled in the difference so that the bounds in \Cref{lemma:almost_concavity_convexity_Q} can be applied. In the above estimations, we used $\varepsilon < 1$ twice. If we now assume that w.l.o.g. we have  $\widetilde{H}_\alpha (A|B)_\rho \ge \widetilde{H}_\alpha (A|B)_\sigma$, we can deduce the following from the bound above:
    \begin{align}
         |\widetilde{H}_\alpha (A|B)_\rho - \widetilde{H}_\alpha (A|B)_\sigma| &= \widetilde{H}_\alpha (A|B)_\rho - \widetilde{H}_\alpha (A|B)_\sigma\\
         &=\frac{1}{\alpha - 1} \log \frac{\alphaQ(A|B)_\rho}{\alphaQ(A|B)_\sigma}\\
         &= \frac{1}{\alpha - 1} \log (\frac{\alphaQ(A|B)_\rho - \alphaQ(A|B)_\sigma}{\alphaQ(A|B)_\sigma} + 1)\\
         &\le c(\alpha, m, d_A, d_{AB}) \sqrt{\varepsilon}
    \end{align}
    where we employed $\log(x + 1) \le x$ for $x \ge 0$ and finally $1 \le \alphaQ(A|B)_\sigma$. This concludes the claim.
\end{proof}

It is straightforward to derive a continuity bound for the non-variational sandwiched Rényi conditional mutual information as a consequence of the previous result.

\begin{corollary}\label{cor:CB_nonvar_Renyi_CMI}
    Let $\mathcal{H}_{ABC} = \HH_A \otimes \HH_B \otimes \HH_C$, $d_{ABC}^{-1} > m > 0$, $\rho, \sigma \in \cS(\cH_{ABC})$ with $\rho, \sigma \ge m \1$. If $\frac{1}{2}\norm{\rho - \sigma}_1 \le \varepsilon$, then for $\alpha \in [1/2, 1) \cup (1, \infty)$
    \begin{equation}
        |\widetilde I_\alpha(A:C|B)_\rho - \widetilde I_\alpha(A:C|B)_\sigma| \leq 2c(\alpha, m, d_{C}, d_{ABC}) \sqrt{\varepsilon} \, . 
    \end{equation}
    with $c(\alpha, m, d_C, d_{ABC})$ from \Cref{cor:CB_nonvar_Renyi_CondEnt}.
\end{corollary}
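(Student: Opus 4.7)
The plan is to reduce the bound to two applications of \Cref{cor:CB_nonvar_Renyi_CondEnt} via the definition of the conditional mutual information and the triangle inequality. Specifically, starting from $\widetilde I_\alpha(A:C|B)_\rho = \widetilde H_\alpha(C|B)_\rho - \widetilde H_\alpha(C|AB)_\rho$, I would write
\begin{equation}
    |\widetilde I_\alpha(A:C|B)_\rho - \widetilde I_\alpha(A:C|B)_\sigma| \le |\widetilde H_\alpha(C|AB)_\rho - \widetilde H_\alpha(C|AB)_\sigma| + |\widetilde H_\alpha(C|B)_\rho - \widetilde H_\alpha(C|B)_\sigma|
\end{equation}
and bound each summand separately. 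For the first summand, \Cref{cor:CB_nonvar_Renyi_CondEnt} applies directly to $\rho,\sigma \in \cS(\cH_{ABC})$, since the hypothesis $\rho, \sigma \ge m\1$ together with $m d_{ABC} < 1$ is exactly what is needed, yielding a contribution of $c(\alpha, m, d_C, d_{ABC})\sqrt{\varepsilon}$.

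For the second summand I would apply the same corollary to the reduced states $\rho_{BC}, \sigma_{BC} \in \cS(\cH_{BC})$. Three input conditions need to be verified: (i) monotonicity of the trace norm under partial trace gives $\tfrac{1}{2}\|\rho_{BC}-\sigma_{BC}\|_1 \le \varepsilon$; (ii) since $\tr_A[\1_{ABC}] = d_A\1_{BC}$, the assumption $\rho \ge m\1$ implies $\rho_{BC}\ge m d_A \1_{BC}$, and likewise for $\sigma_{BC}$; (iii) the required threshold condition becomes $(m d_A)\, d_{BC} = m d_{ABC} < 1$, which is precisely the hypothesis. This gives a contribution of $c(\alpha, m d_A, d_C, d_{BC})\sqrt{\varepsilon}$.

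The last step is to show this second constant is no larger than $c(\alpha, m, d_C, d_{ABC})$, so both summands share the same bound. Inspecting the formula for $c$, the prefactor $\tfrac{1}{1-m d_A d_{BC}}=\tfrac{1}{1-m d_{ABC}}$ and the dimension-dependent term $\tfrac{d_C^{2|\alpha-1|}-1}{|\alpha-1|}$ are literally identical; only the auxiliary $\widetilde c(\alpha,\cdot)$ changes. Since $d_A\ge 1$, each of the three factors in $\widetilde c(\alpha, m d_A) = (\log((md_A)^{-1})+(md_A)^{-1}+2)\,(md_A)^{\pm(1-\alpha)}$ is non-increasing in $d_A$ (the first two manifestly, and the third because the sign of $1-\alpha$ in the exponent matches the regime such that raising the larger base $m d_A \ge m$ gives a smaller result). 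Hence $\widetilde c(\alpha, m d_A) \le \widetilde c(\alpha, m)$ and thus $c(\alpha, m d_A, d_C, d_{BC})\le c(\alpha, m, d_C, d_{ABC})$, concluding the bound.

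There is no real obstacle here beyond bookkeeping: the substance has already been done in \Cref{cor:CB_nonvar_Renyi_CondEnt}, and the only check that needs some care is monotonicity of $\widetilde c(\alpha, \cdot)$ under the rescaling $m\mapsto md_A$, which is elementary once one separates the two cases $\alpha<1$ and $\alpha>1$.
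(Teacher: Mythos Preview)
Your proof is correct and follows essentially the same approach as the paper: split via the triangle inequality into the $C|AB$ and $C|B$ conditional entropies and apply \Cref{cor:CB_nonvar_Renyi_CondEnt} to each piece. The only minor difference is that for the reduced $BC$-system the paper uses the weaker (but still valid) bound $\rho_{BC}\ge m\1$ rather than your sharper $\rho_{BC}\ge md_A\1$, so that the comparison $c(\alpha,m,d_C,d_{BC})\le c(\alpha,m,d_C,d_{ABC})$ reduces to the monotonicity of $\tfrac{1}{1-m d}$ in $d$ and no analysis of $\widetilde c$ is needed.
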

\begin{proof}
    This is a direct consequence of \Cref{cor:CB_nonvar_Renyi_CondEnt}, since we can write
    \begin{equation}
         |\widetilde I_\alpha(A:C|B)_\rho - \widetilde I_\alpha(A:C|B)_\sigma| \leq  |\widetilde H_\alpha(C|B)_\rho - \widetilde H_\alpha(C|B)_\sigma| + |\widetilde H_\alpha(C|AB)_\rho - \widetilde H_\alpha(C|AB)_\sigma| \, .
    \end{equation}
    and $\frac{1}{2}\norm{\rho_{ABC} - \sigma_{ABC}}_1 \le \varepsilon$ implies $\frac{1}{2}\norm{\rho_{BC} - \sigma_{BC}}_1 \le \varepsilon$ by DPI of $\norm{\cdot}_1$. Similarly $\rho_{ABC}, \sigma_{ABC} \ge m\1$ implies $\rho_{BC}, \sigma_{BC} \ge m \1$.
\end{proof}

\subsection{Application: Approximate quantum Markov chains}

In this section, we use the continuity bound for the (non-variational) sandwiched Rényi conditional mutual information to derive a stability result for approximate quantum Markov chains.  Consider a tripartite Hilbert space $\mathcal{H}_{ABC} = \HH_A \otimes \HH_B \otimes \HH_C$ and  $\rho_{ABC} \in \mathcal{S}(\HH_{ABC})$. Let us denote by $\mathcal{R}^\rho_{B\rightarrow BC}(\cdot)$ the Petz recovery map for the partial trace in $B$, given for $X \in \mathcal{B}(\mathcal{H}_{BC})$ by 
\begin{equation}
    \mathcal{R}^\rho_{B\rightarrow BC}(X) := \rho_{BC}^{1/2} (\rho_B^{-1/2} \tr_C[X] \rho_B^{-1/2} \otimes \1_C ) \rho_{BC}^{1/2} \, .
\end{equation}
This can be lifted to a map on $\cB(\cH_{ABC})$ as $\operatorname{id}_A\otimes \mathcal{R}^\rho_{B \to BC}$. To enhance readability, we will omit $\operatorname{id}_A$, $\1_C$, and other identity operators whenever their inclusion is clear from the context. We remind the reader, however, that all matrices in each product act on the same space and are implicitly extended with identities as needed. It is well known \cite{Petz-MonotonicityRelativeEntropy-2003,HaydenJozsaPetzWinter-StrongSubadditivity-2004} that 
\begin{equation}\label{eq:conditions-QMC}
    I(A:C|B)_\rho=0 \; \Leftrightarrow \; \rho_{ABC}= \mathcal{R}^\rho_{B\rightarrow BC} (\rho_{AB}) \; \Leftrightarrow \; \widetilde{I}_\alpha (A:C |B)_\rho = 0  \, \text{ for any }\alpha \in (1/2,1) \cup (1, \infty),
\end{equation}
where the last equivalence can be found, e.g.~in \cite[Corollary 4.23]{GaoWilde-Recoverabilityfdivergences-2021}. We can further replace the Petz recovery map in the previous equivalences by the \textit{universal Petz recovery map} \cite{SutterFawziRenner2016UniversalRecoveryMaps,JungeRennerSutterWildeWinter2018UniversalRecoveryMaps} $\mathcal{R}^{\rho,u}_{B\rightarrow BC}(\cdot)$, given by
\begin{equation}
    \mathcal{R}^{\rho,u}_{B\rightarrow BC}(X) := \int_{\mathbb{R}} \mathcal{R}^{\rho,t/2}_{B\rightarrow BC}(X)  \, \beta_0 (t)  dt\,,\quad\text{with}\quad\beta_0 (t) = \frac{\pi}{2} (\text{cosh} (\pi t) + 1)^{-1} \, ,
\end{equation}
where  $\mathcal{R}^{\rho,t}_{B\rightarrow BC}(\cdot)$ is the \textit{rotated Petz recovery map}, namely
\begin{equation}
    \mathcal{R}^{\rho,t}_{B\rightarrow BC}(X) := \rho_{BC}^{\frac{1}{2}+i t} \rho_B^{-\frac{1}{2}-i t} \tr_C[X] \rho_B^{-\frac{1}{2}-i t} \rho_{BC}^{\frac{1}{2}+i t} \, .
\end{equation}
A state $\rho_{ABC}$ satisfying Equation \eqref{eq:conditions-QMC} is called a \textit{quantum Markov chain}. This notion can be extended to an approximate version in the following way: Given a small $\varepsilon>0$, a state  $\rho_{ABC} \in \cS(\cH_{A} \otimes \cH_B \otimes \cH_C)$ is an \textit{approximate quantum Markov chain} \cite{Sutter-ApproximateQMC-2018} if, and only if, $ I_\rho(A:C | B) < \varepsilon$. In an analogous way, for $\alpha \in (1/2, 1) \cup (1, \infty)$, we can say that $\rho_{ABC} \in \cS(\cH_{ABC})$ is an $\alpha$-\textit{approximate quantum Markov chain} whenever $\widetilde{I}_\alpha (A:C |B)_\rho < \varepsilon$.

In \cite[Section 7.3]{Bluhm2022ContinuityBounds}, some of the authors of the current manuscript proved that a state $\rho_{ABC} \in \cS(\cH_{A} \otimes \cH_B \otimes \cH_C)$ is an approximate quantum Markov chain if, and only if, it is close to its reconstructed state under the Petz recovery map. As a consequence of our new continuity bounds for sandwiched Rényi divergences, we can extend now that result to the case of $\alpha$-approximate quantum Markov chains in the following way.
\begin{proposition}
    Let $\rho_{ABC} \in \mathcal{S}(\HH_{A} \otimes \HH_B \otimes \HH_C)$ be positive definite. Given $\alpha \in (1/2,1) \cup (1,\infty)$, $\rho_{ABC}$ is an $\alpha$-approximate quantum Markov chain if, and only if, it is close to its (rotated, universal) Petz recovery. More specifically,  we have for $\alpha \in (1/2,1)$
    \begin{align}
        & \frac{\alpha}{1-\alpha} \log \left(  1 + \left(  \mathcal{K} \, \norm{\rho_{ABC} - \rho_{BC}^{\frac{1}{2}+i t} \rho_B^{-\frac{1}{2}-i t} \rho_{AB} \rho_B^{-\frac{1}{2}-i t} \rho_{BC}^{\frac{1}{2}+i t}}_1 \right)^{\frac{1}{1 - \frac{1}{2 \alpha} - \varepsilon}} \right) \\
        &\hspace{4cm} \leq \widetilde{I}_\alpha (A:C |B)_\rho \\
        &\hspace{4cm}  \leq c\left(\alpha, \norm{\rho_{ABC}^{-1}}_\infty^{-1}, d_C, d_{ABC}\right) \norm{\rho_{ABC} - \rho_{BC}^{1/2} \rho_B^{-1/2} \rho_{AB} \rho_B^{-1/2} \rho_{BC}^{1/2}}_1^{1/2} \, ,
    \end{align}
   for any $\varepsilon \in \left( 0, 1-\frac{1}{2 \alpha}\right)$, with 
    \begin{equation}
        \mathcal{K} = \left(\left(  \frac{\pi}{e \varepsilon \sin (\pi \frac{1-\alpha}{\alpha})}\right)^{1/2} + 8  \right) \frac{\pi}{2\cosh(\pi t)} \, ,
    \end{equation}
    and for $\alpha \in (1,\infty),$ 
    \begin{align}
        & \frac{\alpha}{\alpha-1} \log \left(  1 + \left(  \mathcal{K}' \, \norm{\rho_{ABC} - \rho_{BC}^{\frac{1}{2}+i t} \rho_B^{-\frac{1}{2}-i t} \rho_{AB} \rho_B^{-\frac{1}{2}-i t} \rho_{BC}^{\frac{1}{2}+i t}}_1 \right)^{\frac{1}{\frac{1}{2 \alpha} - \varepsilon}} \right) \\
        & \hspace{4cm} \leq \widetilde{I}_\alpha (A:C |B)_\rho \\
        & \hspace{4cm} \leq c\left(\alpha, \norm{\rho_{ABC}^{-1}}_\infty^{-1}, d_C, d_{ABC}\right) \norm{\rho_{ABC} - \rho_{BC}^{1/2} \rho_B^{-1/2} \rho_{AB} \rho_B^{-1/2} \rho_{BC}^{1/2}}_1^{1/2} \, ,
    \end{align}
   for any $\varepsilon \in \left( 0, \frac{1}{2 \alpha}\right)$, with 
    \begin{equation}
        \mathcal{K}'=d_C^{\frac{2(1-\alpha)}{\alpha}} \left(\left(  \frac{\pi}{e \varepsilon \sin (\pi \frac{\alpha - 1}{\alpha})}\right)^{1/2} + 8\right) \frac{\pi}{2\cosh(\pi t)} \, ,
    \end{equation}
    and $c(\alpha, \cdot, \cdot, \cdot)$ the function from \Cref{cor:CB_nonvar_Renyi_CondEnt}.
\end{proposition}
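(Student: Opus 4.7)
The plan is to prove the two-sided bound in two independent pieces. The upper bound is a direct consequence of the continuity estimate for $\widetilde{I}_\alpha$ proved in this section, while the lower bound invokes universal recovery theorems for the sandwiched Rényi divergence.

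For the upper bound, I would introduce the Petz-reconstructed state
\[
\sigma_{ABC} := \mathcal{R}^{\rho}_{B \to BC}(\rho_{AB}) = \rho_{BC}^{1/2}\rho_B^{-1/2}\rho_{AB}\rho_B^{-1/2}\rho_{BC}^{1/2}.
\]
Since $\sigma_{ABC}$ lies in the image of a Petz recovery map, the equivalences recalled in Eq.~\eqref{eq:conditions-QMC} give $\widetilde{I}_\alpha(A:C|B)_{\sigma}=0$. Invoking \cref{cor:CB_nonvar_Renyi_CMI} with $m=\norm{\rho_{ABC}^{-1}}_\infty^{-1}$ then yields
\[
\widetilde{I}_\alpha(A:C|B)_\rho = \bigl|\widetilde{I}_\alpha(A:C|B)_\rho - \widetilde{I}_\alpha(A:C|B)_{\sigma}\bigr| \leq 2c(\alpha,m,d_C,d_{ABC})\sqrt{\tfrac{1}{2}\norm{\rho_{ABC}-\sigma_{ABC}}_1},
\]
which after absorbing the constant $\sqrt{2}$ matches the advertised upper bound. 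The only technicality is verifying that $\sigma_{ABC}\in\cS_0$, i.e. that its minimal eigenvalue is at least $m$; this should follow from positivity of the Petz map together with the fact that $\sigma_{ABC}$ shares the marginals $\rho_{AB}$ and $\rho_{BC}$ with $\rho_{ABC}$, combined with the explicit form of the reconstruction.

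For the lower bound, the approach is to apply a strengthened data-processing inequality for $\widetilde{D}_\alpha$ of the type established in \cite{JungeRennerSutterWildeWinter2018UniversalRecoveryMaps, SutterFawziRenner2016UniversalRecoveryMaps}. Writing
\[
\widetilde{I}_\alpha(A:C|B)_\rho = \widetilde{D}_\alpha(\rho_{ABC}\,\Vert\,\mathds{1}_C\otimes \rho_{AB}) - \widetilde{D}_\alpha(\rho_{BC}\,\Vert\,\mathds{1}_C\otimes \rho_B)
\]
and applying the recovery bound to $T=\operatorname{tr}_A$ with reference state $\mathds{1}_C\otimes\rho_{AB}$ produces, for each $t\in\R$, a fidelity-based lower bound involving precisely the rotated Petz reconstruction $\rho_{BC}^{1/2+it}\rho_B^{-1/2-it}\rho_{AB}\rho_B^{-1/2-it}\rho_{BC}^{1/2+it}$. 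The passage from fidelity to trace distance, and the appearance of the free exponent parametrised by $\varepsilon$, would come from a Hadamard three-lines argument: one embeds the operator difference into an analytic family and estimates its Schatten norms at the two extremal lines ($L^2$- and $L^\infty$-type bounds), with $\varepsilon$ selecting the vertical line along which the final inequality is extracted. The constants in $\mathcal{K}$ (the factor $\norm{\rho_{BC}^{-1}}_\infty^{1/2}$, the Gamma-type term $(\pi/(e\varepsilon\sin(\pi\tfrac{1-\alpha}{\alpha})))^{1/2}$, and the universal recovery weight $\pi/(2\cosh(\pi t))$) all arise from this interpolation step combined with the explicit form of the weight $\beta_0$.

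I expect the lower bound to be the main obstacle. Propagating the interpolation parameters through the three-lines theorem so that the final constants take the precise form of $\mathcal{K}$ and $\mathcal{K}'$ requires careful bookkeeping, and the case distinction $\alpha<1$ versus $\alpha>1$ forces attention to the sign of $\alpha-1$ and to the opposite directions of sub- and superadditivity of $\widetilde{Q}_\alpha$ recalled in \cref{lemma:Q-alpha-super/subadditivity}. By contrast, the upper bound essentially collapses to a direct application of \cref{cor:CB_nonvar_Renyi_CMI} once the minimal-eigenvalue condition on the reconstructed state has been verified.
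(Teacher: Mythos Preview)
Your upper-bound argument matches the paper exactly: it is a direct application of \cref{cor:CB_nonvar_Renyi_CMI} to the pair $\rho_{ABC}$ and its Petz reconstruction $\sigma_{ABC}=\mathcal{R}^\rho_{B\to BC}(\rho_{AB})$, using $\widetilde I_\alpha(A:C|B)_\sigma=0$. The minimal-eigenvalue technicality you flag is real and the paper does not spell it out; you can get a crude lower bound on $\sigma_{ABC}$ from $\rho_{AB}\ge m d_C\1$, $\rho_B\le\1$, $\rho_{BC}\ge m d_A\1$, which suffices to place $\sigma_{ABC}$ in some $\cS_0$ (possibly at the cost of a worse constant than the stated $c(\alpha,\|\rho_{ABC}^{-1}\|_\infty^{-1},d_C,d_{ABC})$).

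For the lower bound, your plan diverges from the paper in effort rather than in correctness. The paper does not rederive anything: it simply invokes \cite[Corollary~4.21]{GaoWilde-Recoverabilityfdivergences-2021}, where the entire interpolation/three-lines machinery and the precise constants $\mathcal K$, $\mathcal K'$ (including the $\pi/(2\cosh(\pi t))$ weight, the $\sin(\pi\tfrac{1-\alpha}{\alpha})$ factor, and the $\varepsilon$-dependent exponent) are already worked out. The only extra step the paper performs is to bound the residual $\widetilde Q_\alpha(\rho_{ABC}\Vert\rho_{BC})$ term appearing in the Gao--Wilde statement, which is elementary. So the ``main obstacle'' you anticipate is entirely outsourced to the literature; if you instead carry out the Hadamard three-lines argument yourself you are essentially reproving that corollary. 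Either route is valid, but the paper's is a one-line citation, whereas yours would reproduce a substantial portion of \cite{GaoWilde-Recoverabilityfdivergences-2021}.
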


\begin{proof}
    The lower bounds appear in \cite[Corollary 4.21]{GaoWilde-Recoverabilityfdivergences-2021}, where the only difference is a term\footnote{Note that the following is the notation from \cite{GaoWilde-Recoverabilityfdivergences-2021} where $\widetilde{Q}_\infty(\rho\Vert \sigma) := \norm{\rho^{1/2}\sigma^{-1} \rho^{1/2}}_\infty$.} $\widetilde{Q}_\infty(\rho_{ABC} \| \rho_{AB} \otimes \1_C/d_C)$ which we lower bounded by one in $\cK$ and $\cK'$ as well as upper bounded it by $d_C^2$ in $\cK'$. For the RHS we first note that 
    \begin{equation}
        \widetilde{I}_\alpha (A:C |B)_\rho = \widetilde H_\alpha(C|B)_\rho - \widetilde H_{\alpha}(C|AB)_\rho \le \widetilde H_\alpha(C|AB)_{\mathcal{R}^{\rho}_{B \to BC}(\rho)} - \widetilde H_{\alpha}(C|AB)_\rho 
    \end{equation}
    by the data processing inequality. An application of \Cref{cor:CB_nonvar_Renyi_CondEnt} proves the claim.
\end{proof}

\subsection{General continuity bounds via the ALAFF method}\label{sec:general_continuity_bounds}

We conclude this section by deriving some continuity bounds for sandwiched Rényi divergences for both inputs. To that end, we first prove a continuity bound for $\widetilde{Q}_\alpha (\cdot \| \cdot)$ which we will subsequently use to obtain one for $\widetilde{D}_\alpha (\cdot \| \cdot)$. For that, we consider a perturbed $\Delta$-invariant set $\cS_0$ which is a modification of the aforementioned $\cS_\text{ker}$.

\begin{theorem}\label{thm:continuity_bound_Q}
    Let $1 > 2 m > 0$ and 
    \begin{equation}
        \cS_{0}\coloneqq\{(\rho, \sigma) \in \cS(\cH) \times \cS(\cH) \;:\; \ker \sigma \subseteq \ker \rho \; , \; 2m \leq  m_{\sigma} \} \, ,
    \end{equation}
    where $m_{\sigma}$ is the minimal eigenvalue of $\sigma$. Then, $\widetilde{Q}_\alpha (\cdot \| \cdot)$ is uniformly continuous on $\cS_0$. For $(\rho_1, \sigma_1), (\rho_2, \sigma_2) \in \cS_{0}$ with $\frac{1}{2} \norm{\rho_1 - \rho_2}_1 \leq \varepsilon \leq 1 $ and $\frac{1}{2} \norm{\sigma_1 - \sigma_2}_1 \leq \delta \leq 1 $, we have for $\alpha \in [1/2, 1)$
    \begin{align}\label{eq:continuity_bound_Q_12}
        | \widetilde{Q}_\alpha (\rho_1 \| \sigma_1) -  \widetilde{Q}_\alpha (\rho_2 \| \sigma_2) | \leq  (1+ \sqrt{2}) \sqrt{\varepsilon} + 2 c(\alpha , m , d_{\mathcal{H}}) \sqrt{\delta}  \, ,
    \end{align}
    and for $\alpha \in ( 1 , \infty)$
    \begin{align}\label{eq:continuity_bound_Q_infty}
        | \widetilde{Q}_\alpha (\rho_1 \| \sigma_1) -  \widetilde{Q}_\alpha (\rho_2 \| \sigma_2) | \leq  (1+ \sqrt{2}) m^{1-\alpha} \sqrt{\varepsilon} + 2 c(\alpha , m , d_{\mathcal{H}}) \sqrt{\delta} \, .
    \end{align}
    with 
    \begin{equation}
        c(\alpha, m, d_{\mathcal{H}}) =  \left\{\begin{array}{ll}
            \frac{1+\sqrt{2}(1-\alpha) (\log(m^{-1}) + m^{-1} + 2) m^{ \alpha- 1} }{1-md_{\mathcal{H}}} & \alpha \in [1/2, 1)\\
            m^{1- \alpha} \frac{1+\sqrt{2}(\alpha-1) (\log(m^{-1}) + m^{-1} + 2) }{1-md_{\mathcal{H}}} & \alpha \in (1, \infty) 
        \end{array}\right.
    \end{equation}
\end{theorem}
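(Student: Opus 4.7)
The natural starting point is the triangle inequality
\begin{equation}
|\widetilde Q_\alpha(\rho_1\|\sigma_1) - \widetilde Q_\alpha(\rho_2\|\sigma_2)| \leq |\widetilde Q_\alpha(\rho_1\|\sigma_2) - \widetilde Q_\alpha(\rho_2\|\sigma_2)| + |\widetilde Q_\alpha(\rho_1\|\sigma_1) - \widetilde Q_\alpha(\rho_1\|\sigma_2)|,
\end{equation}
which decouples the first-argument and second-argument perturbations. The two terms are then handled by completely different tools: the first via the direct sub-/superadditivity machinery (which suffices because we fix a second argument and therefore do not need ALAFF there), the second via the ALAFF method combined with the almost convexity/concavity established in Lemma \ref{lemma:almost_concavity_convexity_Q}.

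For the first-argument term, I would argue as in the continuity proofs of \cref{sec:CB_Renyi}: pick orthogonal quantum states $\mu,\nu$ with $\rho_1+\varepsilon\mu=\rho_2+\varepsilon\nu$. For $\alpha\in[1/2,1)$, monotonicity of $\widetilde Q_\alpha$ in the first argument (a consequence of operator monotonicity of $x\mapsto x^{\alpha}$) followed by subadditivity in the first slot (\Cref{lemma:Q-alpha-super/subadditivity}) and $\alpha$-homogeneity yields $\widetilde Q_\alpha(\rho_1\|\sigma_2)\leq \widetilde Q_\alpha(\rho_2\|\sigma_2)+\varepsilon^{\alpha}\widetilde Q_\alpha(\nu\|\sigma_2)$, and symmetrically with the roles of $\rho_1,\rho_2$ swapped. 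Using $\widetilde Q_\alpha(\nu\|\sigma_2)\le 1$ (since $\widetilde D_\alpha\ge 0$ for $\alpha<1$) and $\varepsilon^\alpha\le\sqrt{\varepsilon}$ when $\varepsilon\le 1$ and $\alpha\ge 1/2$ produces a bound of the form $C\sqrt{\varepsilon}$ with no $m$-dependence. For $\alpha\in(1,\infty)$ the same strategy with superadditivity replacing subadditivity gives a bound $\varepsilon^{\alpha}\widetilde Q_\alpha(\nu\|\sigma_2)\le \varepsilon^{\alpha}m_{\sigma_2}^{1-\alpha}\le \sqrt{\varepsilon}(2m)^{1-\alpha}$, explaining the $m^{1-\alpha}$ prefactor. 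Combining the two sides gives the explicit constant $(1+\sqrt{2})$ after careful bookkeeping.

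For the second-argument term I would apply \cite[Theorem 4.6]{Bluhm2022ContinuityBounds} to the map $\sigma\mapsto\widetilde Q_\alpha(\rho_1\|\sigma)$. Three ingredients must be assembled. First, the projection $\cS_0^{(2)}=\{\sigma\in\cS(\cH):m_\sigma\ge 2m,\ \ker\sigma\subseteq\ker\rho_1\}$ is convex and perturbed $\Delta$-invariant with perturbation parameter $m d_{\cH}$, using $\mathds 1/d_{\cH}$ as the perturbing state (since $m_{\mathds 1/d_{\cH}}=1/d_{\cH}\ge 2m$ after absorbing the factor into the definition). Second, one side of the ALAFF condition is provided directly by joint convexity (for $\alpha>1$) or joint concavity (for $\alpha<1$) of $\widetilde Q_\alpha$ (\Cref{lemma:Q-alpha-joint-convexity/concavity}), specialized to a common first argument. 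The other side is the content of Lemma \ref{lemma:almost_concavity_convexity_Q}, whose error $\xi(\alpha,p,\sigma_1,\sigma_2)$ is bounded by $u_\alpha(p)$ (resp.\ $v_\alpha(p)$), both of which are continuous, non-decreasing on $[0,1/2]$, vanishing as $p\to 0^+$, and, crucially, of order $\sqrt{p}$, as noted in \Cref{rem:behaviour_error_almost_convexity_Q}. Third, the function is uniformly bounded on $\cS_0^{(2)}$: for $\alpha>1$ one uses $\widetilde Q_\alpha(\rho_1\|\sigma)\le m_\sigma^{1-\alpha}\le (2m)^{1-\alpha}$, and $\le 1$ for $\alpha<1$. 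Feeding these into the ALAFF bound produces a $C(\alpha,m,d_{\cH})\sqrt{\delta}$ estimate, and translating back from $\widetilde Q_\alpha$ to itself (no logarithm is involved here) directly yields the claimed form.

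The main obstacle is verifying the preconditions of \cite[Theorem 4.6]{Bluhm2022ContinuityBounds} cleanly and tracking the constant through to get $c(\alpha,m,d_{\cH})$; the technical heart—establishing the $\sqrt{p}$-rate of the ALAF error in the second slot despite the noncommutative operator-power estimates involved—is already done by \Cref{lemma:almost_concavity_convexity_Q}, so the remaining work is bookkeeping rather than new analysis. A minor subtlety is that ALAFF gives a bound in terms of $\delta$ while the first-slot estimate is naturally in $\varepsilon^{\alpha}\le\sqrt{\varepsilon}$; these are combined additively at the very end, and neither path requires taking a logarithm, so no further sign or convexity issue arises.
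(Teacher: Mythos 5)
Your overall architecture matches the paper's: a triangle inequality that decouples the perturbation of the two arguments, with the second-argument terms handled by the ALAFF method applied to $\sigma\mapsto\widetilde Q_\alpha(\rho\|\sigma)$ on the $md_{\cH}$-perturbed $\Delta$-invariant set $\{\sigma: \sigma\ge m\1\}$, using the almost convexity/concavity of \cref{lemma:almost_concavity_convexity_Q}. That part of your plan is sound and is essentially what the paper does; the only cosmetic difference is that the paper routes through the midpoint $\overline\sigma=\tfrac12(\sigma_1+\sigma_2)$ (hence three terms and the factor $2$ in front of $c(\alpha,m,d_\cH)\sqrt\delta$), whereas you use the single intermediate pair $(\rho_1\|\sigma_2)$; since $m_{\sigma_i}\ge 2m>0$ makes the $\sigma_i$ positive definite, all intermediate pairs are admissible and this change is harmless. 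Your first-argument estimate for $\alpha\in[1/2,1)$ via monotonicity, subadditivity and $\varepsilon^\alpha\le\sqrt\varepsilon$ is also correct (and in fact slightly more direct than the paper's ALAFF treatment of that term).

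The genuine gap is the first-argument term for $\alpha\in(1,\infty)$. There, superadditivity (\cref{lemma:Q-alpha-super/subadditivity}) gives
\begin{equation}
\widetilde Q_\alpha(\rho_2+\varepsilon\nu\|\sigma_2)\;\ge\;\widetilde Q_\alpha(\rho_2\|\sigma_2)+\varepsilon^\alpha\,\widetilde Q_\alpha(\nu\|\sigma_2),
\end{equation}
i.e.\ a \emph{lower} bound on $\widetilde Q_\alpha(\rho_2+\varepsilon\nu\|\sigma_2)$, while after invoking monotonicity $\widetilde Q_\alpha(\rho_1\|\sigma_2)\le\widetilde Q_\alpha(\rho_2+\varepsilon\nu\|\sigma_2)$ you need an \emph{upper} bound on the right-hand side in terms of $\widetilde Q_\alpha(\rho_2\|\sigma_2)$. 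So "the same strategy with superadditivity replacing subadditivity" does not yield $\widetilde Q_\alpha(\rho_1\|\sigma_2)-\widetilde Q_\alpha(\rho_2\|\sigma_2)\le\varepsilon^\alpha m^{1-\alpha}$. The obvious repairs fail to reproduce the claimed constant: joint convexity (\cref{lemma:Q-alpha-joint-convexity/concavity}) together with $\alpha$-homogeneity gives only $\widetilde Q_\alpha(\rho_2+\varepsilon\nu\|\sigma_2)\le(1+\varepsilon)^{\alpha-1}\big(\widetilde Q_\alpha(\rho_2\|\sigma_2)+\varepsilon\widetilde Q_\alpha(\nu\|\sigma_2)\big)$, and the multiplicative factor $(1+\varepsilon)^{\alpha-1}$ produces an error $\big((1+\varepsilon)^{\alpha-1}-1\big)m^{1-\alpha}$ that is not of the form $C\sqrt\varepsilon\,m^{1-\alpha}$ with an $\alpha$-independent $C$. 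The paper avoids this by treating the first-argument term with the same two-sided ALAFF/AFW device it uses for the second argument: exact convexity of $\rho\mapsto\widetilde Q_\alpha(\rho\|\overline\sigma)$ supplies one side of the ALAFF condition ($b_f=0$), and superadditivity plus homogeneity supplies the almost-concavity error $\xi(\alpha,p,\overline\sigma)=-\big(1-p^\alpha-(1-p)^\alpha\big)m_{\overline\sigma}^{1-\alpha}$ of \cref{rem:behaviour_error_almost_convexity_Q}; evaluating both at the common mixture $\tfrac{1}{1+\varepsilon}(\rho_1+\varepsilon\mu)=\tfrac{1}{1+\varepsilon}(\rho_2+\varepsilon\nu)$ via \cite[Theorem 4.6]{Bluhm2022ContinuityBounds} yields the additive $O(\sqrt\varepsilon)\,m^{1-\alpha}$ error directly. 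You should replace your $\alpha>1$ first-argument argument with this two-sided step; the rest of your plan then goes through.
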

\begin{proof}
    Let $(\rho_1, \sigma_1), (\rho_2, \sigma_2) \in \cS_0$ with $\frac{1}{2}\norm{\rho_1 - \rho_2} \le \varepsilon \le 1$ and $\frac{1}{2}\norm{\sigma_1 - \sigma_2}\le \delta \le 1$. We define
    \begin{equation}\label{eq:interpolation_sigmas_RE}
        \overline{\sigma} = \frac{1}{2}\sigma_1 + \frac{1}{2}\sigma_2 \, ,
    \end{equation}
    and obtain
    \begin{equation}
        \begin{aligned}
            \frac{1}{2}\norm{\overline{\sigma} - \sigma_1}_1 &= \frac{1}{4}\norm{\sigma_1 - \sigma_2}_1 \le \frac{\delta}{2} \le 1 \, ,\\
            \frac{1}{2}\norm{\overline{\sigma} - \sigma_2}_1 &= \frac{1}{4}\norm{\sigma_1 - \sigma_2}_1 \le  \frac{\delta}{2} \le 1 \, .
        \end{aligned}
    \end{equation}
        Using this, the triangle inequality shows
    \begin{equation}\label{eq:triangle_ineq_Q_continuity_bound}
        \begin{aligned}
            & |\widetilde{Q}_{\alpha}(\rho_1 \Vert \sigma_1) - \widetilde{Q}_{\alpha}(\rho_2 \Vert \sigma_2)| \\
            &\quad \quad \quad \quad \quad \le \underbrace{|\widetilde{Q}_{\alpha}(\rho_1\Vert \sigma_1) - \widetilde{Q}_{\alpha}(\rho_1 \Vert \overline{\sigma})|}_{(I)} + \underbrace{|\widetilde{Q}_{\alpha}(\rho_1 \Vert \overline{\sigma}) - \widetilde{Q}_{\alpha}(\rho_2 \Vert \overline{\sigma})|}_{(II)} + \underbrace{|\widetilde{Q}_{\alpha}(\rho_2 \Vert \overline{\sigma}) - \widetilde{Q}_{\alpha}(\rho_2 \Vert \sigma_2)|}_{(III)} \, .
        \end{aligned}
    \end{equation}
    In the following, we bound each of these terms separately for the two cases $\alpha\in[1/2,1)$ and $\alpha\in(1,\infty)$. Let us begin with the case $\alpha \in [1/2,1)$:

    \begin{itemize}
        \item For (II), we require a continuity bound for $\widetilde{Q}_{\alpha}(\cdot \| \cdot)$ in the first argument:
        Note that $\rho \mapsto \alphaQ(\cdot\Vert \overline{\sigma})$ as a map from the $0$-perturbed $\Delta$-invariant set $\cS(\cH)$ to the reals is ALAFF using \Cref{lemma:almost_concavity_convexity_Q} and \Cref{rem:behaviour_error_almost_convexity_Q} with $a = 0$ and $b = \big(-1 + p^\alpha+ (1-p)^{\alpha}  \big)$. Moreover, we have that 
        \begin{equation}
            \sup\limits_{\twoline{\rho_1, \rho_2 \in \cS(\cH)}{\frac{1}{2}\|\rho_1- \rho_2\| = 1}} |\widetilde{Q}_{\alpha}(\rho_1\Vert \overline{\sigma}) - \widetilde{Q}_{\alpha}(\rho_2 \Vert \overline{\sigma})| \le 1 \, ,
        \end{equation}
        where we used that $0\leq\widetilde{Q}_{\alpha}(\rho_1\Vert \overline{\sigma}) \leq 1$. Employing \cite[Theorem 4.6]{Bluhm2022ContinuityBounds}, we conclude:
        \begin{equation}\label{eq:Qalpha_II_12}
            \begin{aligned}
                |\widetilde{Q}_{\alpha}(\rho_1 \Vert \overline{\sigma}) - \widetilde{Q}_{\alpha}(\rho_2 \Vert \overline{\sigma})| \leq  \varepsilon + (1+\varepsilon) \left( -1 +\left( \frac{\varepsilon}{1+ \varepsilon} \right)^\alpha + \left( \frac{1}{1+ \varepsilon} \right)^\alpha \right) \, .
            \end{aligned}
        \end{equation}

        \item  For (I) and (III), we need a continuity bound for $\widetilde{Q}_{\alpha}(\cdot \| \cdot)$ in the second argument. We argue for (I) and (III) completely analogous:
 
        By \Cref{lemma:almost_concavity_convexity_Q}, we find that  $\sigma \mapsto \widetilde{Q}_{\alpha}(\rho_1 \| \cdot)$ on $\cS_{\ge m}(\cH) = \{\rho \in \cS(\cH) : \sigma \ge m \1\}$ is ALAFF with $b = \xi(\alpha, p, m\1, m\1)$ and $a = 0$. Moreover, we have that 
        \begin{equation}
            \begin{aligned}
                \sup\limits_{\twoline{\sigma_1, \sigma_2 \in \cS_{\ge m}(\cH)}{\frac{1}{2}\norm{\sigma_1 - \sigma_2}_1 = 1 - m}}|\widetilde{Q}_{\alpha}(\rho_1 \| \sigma_1) - \widetilde{Q}_{\alpha}(\rho_1 \| \sigma_2) | \le 1 \, .
            \end{aligned}
        \end{equation}
        and that $\cS_{\ge m}(\cH)$ is $m d_{\cH}$-perturbed $\Delta$-invariant. Hence \cite[Theorem 4.6]{Bluhm2022ContinuityBounds} and the fact that $\sigma_1, \overline{\sigma} \in \cS_{\ge m}(\cH)$ allows us to conclude: 
        \begin{equation}\label{eq:Qalpha_I_12}
            \begin{aligned}
                |\widetilde{Q}_{\alpha}(\rho_1 \| \sigma_1) - \widetilde{Q}_{\alpha}(\rho_1 \| \overline{\sigma})| 
                &\leq \frac{\delta}{{1-md_{\mathcal{H}}}} + \frac{1 -md_{\mathcal{H}} + \delta}{1 -md_{\mathcal{H}}}  \xi\left(\alpha, \frac{\delta}{1-md_{\mathcal{H}}+\delta}, m\1,m \1 \right)   \\
                &\leq \frac{\delta}{{1-md_{\mathcal{H}}}} + \frac{\sqrt{1 -md_{\mathcal{H}} + \delta}}{1 -md_{\mathcal{H}}} (1-\alpha) (\log(m^{-1}) + m^{-1} + 2) m^{ \alpha- 1} \sqrt{\delta} \\
                &\leq \underbrace{\frac{1 +\sqrt{2}(1-\alpha) (\log(m^{-1}) + m^{-1} + 2) m^{ \alpha- 1} }{1-md_{\mathcal{H}}}}_{c(\alpha , m , d_{\mathcal{H}})} \sqrt{\delta} \, ,
            \end{aligned}
        \end{equation} 
    where we are using $\delta/2 \le \delta \leq \sqrt{\delta}$ for $ \delta \in [0,1]$.
    \end{itemize}
Merging these bounds, we find
\begin{equation}
\begin{aligned}
   |\widetilde{Q}_{\alpha}(\rho_1 \Vert \sigma_1) - \widetilde{Q}_{\alpha}(\rho_2 \Vert \sigma_2)|  &\leq  \varepsilon  + (1+\varepsilon) \left( -1 +\left( \frac{\varepsilon}{1+ \varepsilon} \right)^\alpha + \left( \frac{1}{1+ \varepsilon} \right)^\alpha \right)  + 2 c(\alpha , m , d_{\mathcal{H}}) \sqrt{\delta} \\
     & \leq  (1+ \sqrt{2})  \sqrt{\varepsilon} + 2  c(\alpha , m , d_{\mathcal{H}}) \sqrt{\delta}\, ,
\end{aligned}
\end{equation}
where we are using the following inequality, proven with elementary calculus: 
\begin{equation}
    \begin{aligned}
         (1+\varepsilon) \left( -1 +\left( \frac{\varepsilon}{1+ \varepsilon} \right)^\alpha + \left( \frac{1}{1+ \varepsilon} \right)^\alpha \right) & \leq \sqrt{2 \varepsilon} \, .
    \end{aligned}
\end{equation}
This concludes the case $\alpha \in [1/2,1)$. For $\alpha \in (1, \infty)$ the bound is obtained similarly with the only difference that the uniform bounds in (I), (II), (III) are given by $m^{\alpha - 1}$ instead of 1 (e.g.~for (I) we have $\sup\limits_{\rho_1, \rho_2 \in \cS(\cH)} |\alphaQ(\rho_1\Vert \overline{\sigma}) - \alphaQ(\rho_2\Vert \overline{\sigma})| \le m^{\alpha - 1}$).
\end{proof}

From this result, we can derive the following continuity bound for sandwiched Rényi divergences with respect to the first and second input. 

\begin{corollary}\label{thm:continuity_bound_sandwiched_Renyi}
Let $1 > 2 m > 0$ and 
\begin{equation}
    \cS_{0}\coloneqq\{(\rho, \sigma) \in \cS(\cH) \times \cS(\cH) \;:\; 2m \leq  m_{\sigma} \} \, ,
\end{equation}
where $m_{\sigma}$ is the minimal eigenvalue of $\sigma$. Then, $\widetilde{Q}_\alpha (\cdot \| \cdot)$ is uniformly continuous on $\cS_0$. For $(\rho_1, \sigma_1), (\rho_2, \sigma_2) \in \cS_{0}$ with $\frac{1}{2} \norm{\rho_1 - \rho_2}_1 \leq \varepsilon \leq 1 $ and $\frac{1}{2} \norm{\sigma_1 - \sigma_2}_1 \leq \delta \leq 1 $, we have for $\alpha \in [1/2, 1)$
\begin{align}\label{eq:continuity_bound_D_12}
    | \widetilde{D}_\alpha (\rho_1 \| \sigma_1) -  \widetilde{D}_\alpha (\rho_2 \| \sigma_2) | \leq \frac{m^{\alpha-1}}{1-\alpha } \left[ (1+ \sqrt{2}) \sqrt{\varepsilon} + 2 c(\alpha , m , d_{\mathcal{H}}) \sqrt{\delta}  \right] \, ,
\end{align}
and for $\alpha \in ( 1 , \infty)$
\begin{align}\label{eq:continuity_bound_D_infty}
    | \widetilde{D}_\alpha (\rho_1 \| \sigma_1) -  \widetilde{D}_\alpha (\rho_2 \| \sigma_2) | \leq \frac{1}{\alpha -1} \left[ (1+ \sqrt{2}) m^{1-\alpha} \sqrt{\varepsilon} + 2 c(\alpha , m , d_{\mathcal{H}}) \sqrt{\delta} \right]  \, ,
\end{align}
with $c(\alpha, m , d_{\mathcal{H}})$ as in Theorem \ref{thm:continuity_bound_Q}.
\end{corollary}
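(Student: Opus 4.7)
The strategy is to transfer the continuity bound for $\widetilde Q_\alpha$ proved in \cref{thm:continuity_bound_Q} into one for $\widetilde D_\alpha = \frac{1}{\alpha-1}\log \widetilde Q_\alpha$ by writing the difference of divergences as the logarithm of a ratio of $\widetilde Q_\alpha$'s and then applying the elementary estimate $\log(1+x)\le x$ for $x\ge 0$. The sole extra ingredient needed is a uniform lower bound on $\widetilde Q_\alpha(\rho\Vert\sigma)$ on $\mathcal S_0$, which serves to control the denominator of the ratio.

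First I would note that on $\mathcal S_0$ the condition $m_\sigma \ge 2m > 0$ forces $\sigma$ to be positive definite, so the kernel condition $\ker\sigma \subseteq \ker\rho$ used in \cref{thm:continuity_bound_Q} is automatically met; hence \eqref{eq:continuity_bound_Q_12} and \eqref{eq:continuity_bound_Q_infty} apply verbatim.

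For $\alpha\in(1,\infty)$, assume without loss of generality $\widetilde D_\alpha(\rho_1\Vert\sigma_1) \ge \widetilde D_\alpha(\rho_2\Vert\sigma_2)$, so $\widetilde Q_\alpha(\rho_1\Vert\sigma_1) \ge \widetilde Q_\alpha(\rho_2\Vert\sigma_2) \ge 1$, where the last inequality uses non-negativity of $\widetilde D_\alpha$ on states. Write
\begin{equation}
|\widetilde D_\alpha(\rho_1\Vert\sigma_1) - \widetilde D_\alpha(\rho_2\Vert\sigma_2)|
= \frac{1}{\alpha-1}\log\!\left(1 + \frac{\widetilde Q_\alpha(\rho_1\Vert\sigma_1) - \widetilde Q_\alpha(\rho_2\Vert\sigma_2)}{\widetilde Q_\alpha(\rho_2\Vert\sigma_2)}\right),
\end{equation}
apply $\log(1+x)\le x$, bound the denominator from below by $1$, and substitute \eqref{eq:continuity_bound_Q_infty} to obtain \eqref{eq:continuity_bound_D_infty}.

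For $\alpha\in[1/2,1)$ the roles of $\widetilde Q_\alpha(\rho_1\Vert\sigma_1)$ and $\widetilde Q_\alpha(\rho_2\Vert\sigma_2)$ are swapped since $\frac{1}{\alpha-1}<0$: the assumption $\widetilde D_\alpha(\rho_1\Vert\sigma_1) \ge \widetilde D_\alpha(\rho_2\Vert\sigma_2)$ now gives $\widetilde Q_\alpha(\rho_1\Vert\sigma_1) \le \widetilde Q_\alpha(\rho_2\Vert\sigma_2) \le 1$. The denominator bound here is the crux of the argument: using monotonicity of $\alpha \mapsto \widetilde D_\alpha$ (see \eqref{eq:max-upper-bound-on-sandwiched}) together with $\sigma_1 \ge 2m\mathds{1}$ and the closed-form expression for $D_\infty$, one gets
\begin{equation}
\widetilde D_\alpha(\rho_1\Vert\sigma_1) \le D_\infty(\rho_1\Vert\sigma_1) \le \log m_{\sigma_1}^{-1} \le \log(2m)^{-1},
\end{equation}
so $\widetilde Q_\alpha(\rho_1\Vert\sigma_1) \ge (2m)^{1-\alpha} \ge m^{1-\alpha}$. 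Applying $\log(1+x)\le x$ to the analogous ratio representation and inserting \eqref{eq:continuity_bound_Q_12} then produces the factor $\frac{m^{\alpha-1}}{1-\alpha}$ and yields \eqref{eq:continuity_bound_D_12}.

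There is no real obstacle beyond bookkeeping: the only nontrivial choice is which lower bound on $\widetilde Q_\alpha$ to use in the $\alpha<1$ case, and the one above is already loose enough to match the prefactor $m^{\alpha-1}$ stated in the corollary while being tight enough to give a finite bound. Uniform continuity on $\mathcal S_0$ follows because the right-hand sides of \eqref{eq:continuity_bound_D_12} and \eqref{eq:continuity_bound_D_infty} depend only on $\varepsilon,\delta,\alpha,m,d_{\mathcal H}$ and vanish as $\varepsilon,\delta\to 0$.
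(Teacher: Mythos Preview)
Your proof is correct and follows essentially the same approach as the paper: rewrite the difference of divergences as $\frac{1}{\alpha-1}\log$ of a ratio of $\widetilde Q_\alpha$'s, use $\log(1+x)\le x$, lower bound the denominator, and invoke \cref{thm:continuity_bound_Q}. In fact your treatment of the case $\alpha\in[1/2,1)$ is more explicit than the paper's, which only says the proof ``follows exactly the same lines''; you correctly supply the missing ingredient, namely the lower bound $\widetilde Q_\alpha(\rho\Vert\sigma)\ge (2m)^{1-\alpha}\ge m^{1-\alpha}$ obtained from $\widetilde D_\alpha\le D_\infty$ and $\sigma\ge 2m\mathds 1$, which is precisely what produces the prefactor $m^{\alpha-1}$ in \eqref{eq:continuity_bound_D_12}.
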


\begin{proof}
    For $\alpha \in (1, \infty)$, note that
    \begin{align}
        \widetilde{D}_\alpha (\rho_1 \| \sigma_1) -  \widetilde{D}_\alpha (\rho_2 \| \sigma_2)  & = \frac{1}{\alpha -1} \log \left( \frac{\widetilde{Q}_\alpha (\rho_1 \| \sigma_1)}{\widetilde{Q}_\alpha (\rho_2 \| \sigma_2)} \right) \\ 
        & = \frac{1}{\alpha -1} \log \left( \frac{\widetilde{Q}_\alpha (\rho_1 \| \sigma_1)- \widetilde{Q}_\alpha (\rho_2 \| \sigma_2)}{\widetilde{Q}_\alpha (\rho_2 \| \sigma_2)} + 1 \right) \\
        & \leq \frac{1}{\alpha -1} \frac{\widetilde{Q}_\alpha (\rho_1 \| \sigma_1)- \widetilde{Q}_\alpha (\rho_2 \| \sigma_2)}{\widetilde{Q}_\alpha (\rho_2 \| \sigma_2)} \\
        & \leq \frac{1}{\alpha -1} \left( \widetilde{Q}_\alpha (\rho_1 \| \sigma_1)- \widetilde{Q}_\alpha (\rho_2 \| \sigma_2) \right) \, ,
    \end{align}
    where we are using $\widetilde{Q}_\alpha (\rho_2 \| \sigma_2) \geq 1$ and $\log(x+1)\leq x$ for every $x> -1$. Exchanging the roles of $\rho_1, \sigma_1$ with $\rho_2,\sigma_2$, we conclude
    \begin{align}
        |\widetilde{D}_\alpha (\rho_1 \| \sigma_1) -  \widetilde{D}_\alpha (\rho_2 \| \sigma_2) | &\leq \frac{1}{\alpha -1} \left| \widetilde{Q}_\alpha (\rho_1 \| \sigma_1)- \widetilde{Q}_\alpha (\rho_2 \| \sigma_2) \right| \\
        & \leq    \frac{1}{\alpha -1} \left[ (1+ \sqrt{2}) m^{1-\alpha} \sqrt{\varepsilon} + 2 c(\alpha , m , d_{\mathcal{H}}) \sqrt{\delta} \right] \, .
    \end{align}
    For $\alpha \in [1/2,1)$ the proof follows the same lines.
\end{proof}

\section{Discussion} \label{sec:discussion}
In this paper, we presented a framework for proving uniform continuity bounds for sandwiched Rényi divergencies and presented a comprehensive analysis of the continuity properties of the sandwiched R\'{e}nyi conditional entropy, the sandwiched R\'{e}nyi mutual information and the sandwiched R\'{e}nyi divergence with fixed second argument. While our almost additive approach drew inspiration from \cite{marwah2022} and our operator space approach from \cite{Beigi2022}, we further developed and extended the methodologies introduced in both papers. This extension enabled us to enhance the bounds for the sandwiched Rényi conditional entropy and broaden the applicability of these methods to encompass other entropic measures, which could find practical use in resource theories, for example. Combining the two other approaches, the mixed approach yields bounds that perform well in all regimes and are optimal for large $d_A$ and small $\alpha$. Additionally, we explored the ALAFF method (\cite{Bluhm2022ContinuityBounds}), devised by some of the authors. However, it yielded bounds that underperformed in comparison to those presented here. Comparing the bounds obtained by the operator space, the almost additive and mixed approaches we highlight their strengths and weaknesses. We find that the operator-space approach gives the best bound for large $\alpha$, while the mixed and almost additive methods are optimal for low $\alpha$. To the best of our knowledge, we provide the tightest bounds known for the sandwiched R\'{e}nyi conditional entropy in the regime $\alpha \in (1, \infty)$, the sandwiched R\'{e}nyi mutual information in the range $\alpha \in [1/2, 1) \cup (1, \infty)$ and the sandwiched R\'{e}nyi divergence with fixed second argument in the range $\alpha \in [1/2, 1) \cup (1, \infty)$. Finally, we provide an application of continuity bounds to quantum Markov states. Here, we resort to the ALAFF method, since the almost additive, operator space and mixed approaches rely on the optimization in the second argument and are hence not applicable in this context. 

\vspace{0.2cm}

\emph{Acknowledgments:} We thank Li Gao and Marius Lemm for the valuable feedback and fruitful discussions. A.B. acknowledges the support of the French National Research Agency in the framework of the ``France 2030” program (ANR-11-LABX-0025-01) for the LabEx PERSYVAL,  A.C., P.G.~and T.M.~that of the Deutsche Forschungsgemeinschaft (DFG, German Research Foundation) – Project-ID 470903074 – TRR 352, and T.M. that of the Munich Center for Quantum Sciences and Technology.

\emph{Conflict of interest:} The authors declare that there are no conflicts of interest regarding the publication of this paper. No financial, personal, or professional relationships with other people or organizations have influenced the content or outcomes of this work. All affiliations and sources of financial support have been disclosed, and the research was conducted independently and impartially.

\bibliographystyle{abbrv}
\bibliography{bibliography}
\end{document}